\newtheorem{lemma}{Lemma}[section]  
\newtheorem{theorem}[lemma]{Theorem}
\newtheorem{definition}[lemma]{Definition}
\newtheorem{proposition}[lemma]{Proposition}
\newtheorem{remark}{Remark}[section]
\newtheorem{hypothesis}{Hypothesis}[section]
\title{Continuum model for linked fibers with alignment interactions} 
\author{P. Degond$^{1}$, F. Delebecque$^{2,3}$, D. Peurichard$^{2,3}$} 
\date{}
\begin{document}
%% create title page and toc
\maketitle
    \begin{center}
1-Department of Mathematics, Imperial college London,\\
London SW7 2AZ, United Kingdom.\\
email:pdegond@imperial.ac.uk\\
2-Universite de Toulouse; UPS, INSA, UT1, UTM ;\\ 
Institut de Mathematiques de Toulouse ; \\
F-31062 Toulouse, France. \\
3-CNRS; Institut de Mathematiques de Toulouse UMR 5219 ;\\ 
F-31062 Toulouse, France.\\
email:fanny.delebecque@math.univ-toulouse.fr\\
email:diane.peurichard@math.univ-toulouse.fr
\end{center}
\maketitle

\begin{abstract}
We introduce an individual-based model for fiber elements having the ability to cross-link or unlink each other and to align with each other at the cross links. 
We first formally derive a kinetic model for the fiber and cross-links distribution functions. We then consider the fast linking/unlinking regime in which the model can be reduced to the fiber distribution function only  and investigate its diffusion limit. The resulting macroscopic model consists of a system of nonlinear diffusion equations for the fiber density and mean orientation. In the case of a homogeneous fiber density, we show that the model is elliptic.
\end{abstract}

\textbf{Keywords:} fibers, cross-links, alignment, kinetic equation, diffusion approximation, von Mises Fisher distribution, generalized collision invariant, ellipticity 

\textbf{ccode:}{AMS Subject Classification: 82C31, 82C40, 82C70, 92C10, 92C17, }

\section{Introduction}
The topic of complex systems is attracting an increasingly abundant literature, due to its paramount importance in life and social sciences. Complex systems consist of a large number of agents interacting through local interactions only and yet able to self-organize into large-scale coherent structures and collective motion \cite{Vicsek_Zafeiris_PhysRep12}. Among examples of interactions leading to collective motion, the alignment interaction has been the subject of many studies since the seminal work of Vicsek and co-authors \cite{Vicsek_etal_PRL95}. In Vicsek's model, self-propelled point particles tend to align with their neighbors up to some noise. Vicsek's particles are polar: they carry a definite direction and orientation defined by the unit vector of their propulsion velocity. Their alignment interaction is also polar in the sense that a particle moving in an opposite direction to its neighbors will eventually reverse its direction of motion. However, other alignment rules have been studied as well. Polar particles can be subjected to nematic alignment. In this case, a particle moving in an opposite direction to its neighbors will not reverse its direction of motion, as opposed to the polar alignment case.  Nematic alignment has been used as a model for the volume exclusion interaction \cite{Baskaran_Marchetti_PRE08,Ginelli_etal_PRL10,Peruani_etal_PRE06} .Particles can also be apolar, for instance if they randomly reverse their direction of motion. Apolar particles interacting through nematic alignment have been proposed as a model for vibrating rods \cite{Bertin_etal_NewJPhys13}, or fiber networks \cite{Alonso_etal_CellMolBioeng14}. In the related field of nematic liquid crystals, volume exclusion interactions between rod-like particles are also modelled as an alignment force \cite{Doi_Edwards_Oxford99,Maier_Saupe_ZNaturforsch58,Onsager_AnnNYAcadSci49}. But additionally, the molecules are convected by the background solvent and are subjected to rotation by the fluid shear.  Additionally, they contribute to the fluid dynamics of the liquid solvent through an additional extra-stress tensor. Usually, the polymer chains are supposed of fixed length, although lately, models of polymer chains of variables lengths have appeared \cite{Ciuperca_etal_DCDS12}. 

In the present work, we are interested in a system consisting of fibers (or polymer chains) of variable lengths. This model aims to describe the network of collagen fibers in a fibrous tissue. We model fiber length variation (through polymerization / depolymerization) as well as the ability for the fibers to establish cross-links between them by the same basic rules described as follows. We assume the existence of a fiber unit element (or monomer) modeled as a line segment of fixed length $L$. We suppose that two fiber elements that cross each-other may form a link, thereby creating a longer fiber. There is no limit to the number of cross-links a given fiber can make. Therefore, the fibers have the ability to branch off and to achieve complex network topologies. We include fiber resistance to bending by assuming the existence of torque which, in the absence of any other force, makes the two linked fiber elements align with each other. Fibers are also subject to random positional and orientational noise and to external positional and orientational potential forces. Finally, cross-links may also be removed to model possible fiber breakage or depolymerization. 

Our model features apolar fiber particles (since they are not self-propelled), interacting through nematic alignment with the other fibers they are linked to. Thus, the model bears analogies with previous models of apolar particles interacting through nematic alignment \cite{Bertin_etal_NewJPhys13,Alonso_etal_CellMolBioeng14}. However, the interaction network topology (which keeps track of which fiber pairs are cross-linked) is different, as ours is determined by the distribution of cross-links. The fact that this network topology changes with time through dynamic cross-linking or unlinking processes is one specific feature of the present work. In the absence of cross-link remodeling, i.e.\ when the cross-links lifetime is infinite and no new cross-links is created, each connected component of the fiber network can be seen as an unstretchable elastic string since all connected fiber elements will spontaneously align with each other.  However, cross-link removal or creation events (supposed to occur at Poisson distributed random times) introduce a fluid-like component to the rheology of the fibers, thereby confering some visco-elastic character to the medium. Cross-link-governed statics and dynamics of fiber networks have been intensely studied in the literature \cite{Astrom_etal_PRE08,Broedersz_etal_PRL10,Buxton_etal_ExpPolLett09,Head_etal_PRE03,Oelz_etal_CellAdhMigr08} . However, most models consider passive cross-links which only act on the fibers by a spring-like attractive force. Here, our description introduces active links which tend to align the two fibers with each other. By doing so, we are also able to take into account fiber breakage, elongation and branching just in addition to and in the same way as fiber linking/unlinking because cross-linked fiber elements can be seen as two parts of the same fiber. Another difference from previous literature is that fibers in our model are subject to noise making the system more akin to a fluid or a gas than to a solid. By contrast to classical polymeric fluid studies, we do not assume that the fibers are transported by a fluid and modify its rheological properties but this feature could be added in future work. 

This model was first introduced in Ref. \cite{Peurichard_etal_preprint15} where it was coupled with the dynamics of spherical particles modelling cells. This model has been built to describe the self-organization of the adipose tissue, where spheres represent adipocytes and fibers, the surrounding collagen fibers. In this work, we demonstrated that the interaction between cells and fibers led to the spontaneous formation of cell clusters of ovoid shape akin to the adipose lobules that form the functional subunits of the adipose tissue. In Ref. \cite{Peurichard_etal_preprint15}, only a discrete Individual-Based Model (IBM) was considered. The present work focuses on the fibrous medium only and aims to derive meso and macroscopic models from the background IBM using techniques of kinetic theory. Indeed, the computational cost of an IBM scales polynomially with the number of agents, which makes them practically untractable for large systems. Continuum models allow to break this curse of scaling but they suppose that a suitable coarse-graining procedure which averages out the fine-scale structure has been applied to the IBM. In order to capture the correct effects of the fine-scale dynamics on the large-scale structures, it is of paramount importance to  perform this coarse-graining as rigorously as possible. This is the aim of the present work. 

The derivation of a continuum model from the fiber dynamics is done in two steps. We first derive a kinetic model from the underlying IBM and secondly, we perform a diffusion approximation of the latter to obtain the continuum model. The kinetic model provides a statistical mechanics description of the underlying IBM by investigating how the probability distribution of fibers in position and orientation space evolves in time. Here, we will show that the mere distribution of fibers is not sufficient to close the system and that the cross-link probability distribution needs to  be introduced. The cross-links provide correlations between the fibers and consequently their distribution can be viewed as similar to the two-particle fiber distribution. We will formally show that the knowledge of the one- and two-particle distributions is enough to provide a valid kinetic description of the system. Of course, this fact needs to be confirmed by numerical simulations and mathematical proofs. But if it proves correct, this model provides a unique example, to our knowledge, of a kinetic model which is closed at the level of the two-particle distribution function. Indeed, the question whether or not kinetic descriptions must include higher order distribution functions has been actively discussed in the recent years \cite{Carlen_etal_PhysicaD13,Carlen_etal_M3AS13,Mischler_Mouhot_InventMath13,Mischler_etal_PTRF15} . We also note that the introduction of the cross-link distribution functions provides an economic and efficient way of statistically tracking the fiber network topology. This methodology could prove interesting for other situations of dynamically evolving networks. 

The second step consists of a diffusion approximation of the previously derived kinetic model. It starts with changing the time and space units to macroscopic ones. The macroscopic space unit is large compared to the typical spatial scale of the fibers, e.g.\ their length and the macroscopic time unit is large to the typical time scale of the fibers, e.g.\ the time needed for two linked fibers to align with each other. A diffusive rescaling relates the time and space rescaling in such a way that the ratio of the microscopic to macroscopic time units is the square of that of the spatial units. This choice is made necessary by the absence of any polarization in the medium which makes diffusive behavior dominate. A key assumption that we make here is to assume that the linking/unlinking frequencies are very large: the typical linking/unlinking time measured in the macroscopic time unit scales like the square of the typical fiber alignment time (also measured in macroscopic unit), which is very small. This allows us to deduce an algebraic relation between the cross-link distribution function and the fiber distribution function, and to realize a closure of the kinetic equation at the level of the fiber distribution function alone. This assumption is questionable given the biological applications we have in mind, but it provides a first step towards a more complete theory involving finite linking/unlinking times. 

From these assumptions, we derive a singular perturbation problem for the fiber kinetic distribution function that has the form of a classical diffusion approximation problem \cite{Bardos_etal_TransAMS84,Degond_MasGallic_TTSP87,Poupaud_AsymptAnal91}, whose leading order collision operator comes from the nematic alignment of the fibers due to the alignment torque at the cross-links. This operator has equilibria in the form of generalized von Mises distributions of the fiber directions. The von Mises distribution extends Gaussian distributions to probabilities defined on the unit circle. It is peaked around a mean fiber direction angle $\theta_0$. The continuum model describes how the local fiber density $\rho$ and the local fiber direction $\theta_0$ vary as functions of position $x$ and time $t$. To obtain these evolution equations, we must integrate the kinetic equation against suitably chosen collision invariants. This operation cancels the singularly perturbed term. Here, the difficulty it that there exists only one such collision invariant in the classical sense, which allows us to find an equation for the density $\rho$ only. To find an equation for the mean fiber direction $\theta_0$, we use the recently developed theory of Generalized  Collision Invariants (GCI) \cite{Degond_etal_CMS15,Degond_etal_MAA13,Degond_Motsch_M3AS08,Frouvelle_M3AS12}. The resulting system is a nonlinear coupled system of diffusion equations for $\rho$ and $\theta_0$. In the case of a homogeneous fiber distribution, when the density is uniform in space and constant in time, we show that the resulting nonlinear diffusion model for $\theta_0$ is parabolic. In future work, it will be shown that this system is well-posed. Numerical simulations will demonstrate that the continuum model provides a consistent approximation of the underlying IBM for the fiber dynamics. Numerous macroscopic models for fibrous media have been previously considered in the literature but very few of them have been derived from an underlying IBM.  Most of them are heuristically derived from continuum theories such as mechano-chemical principles \cite{Alt_Dembo_MathBiosci99,Taber_etal_JMechMatStruct11}, thermodynamics \cite{Joanny_etal_NewJPhys07}, or viscous fluid mechanics \cite{Karsher_etal_BiophysJ03}.

The outline of this paper is as follows. In Section \ref{sec:IBM}, we start with the description of the IBM. Section \ref{sec:kinetic} is devoted to the derivation of the kinetic model. The scaling assumptions and the scaled kinetic equations are derived in Section \ref{sec:Scaling}. In Section \ref{sec:LSL}, we perform the large scale limit of the so-obtained equations. Finally, Section \ref{sec:homo} is devoted to the analysis of the model in the case of a homogeneous fiber density. Conclusions and perspectives are drawn in Section \ref{sec:conclu}. Some technical computations are detailed in Appendices. 

\setcounter{equation}{0}
\section{Individual Based Model for fibers interacting through alignment interactions}
\label{sec:IBM}

We intend to model a medium consisting of interconnected fibers. To simplify the geometric description of fibers, we decompose them into fiber elements of uniform fixed length and consider that a fiber consists of several connected fiber elements. The link between two connected fibers can be positionned at any point along the fibers (not only the extremities) and a given fiber can be connected to any number of other fibers, thereby allowing to model the branching off of a fiber into several branches. The links are not permanent. The topology of the fiber network is constantly remodelled through link creation/deletion processes.  To model fiber resistance to bending, we suppose that pairs of linked fibrs are subject to a torque that tends to align the two fibers with respect to each other. Finally, the fibers are subject to random positional and orientational noises to model the movements of the tissue and to positional and orientational potential forces to model the action of external elements. In the case of a fibrous tissue, these external elements may consist of cells or other tissues. 

In this paper, we restrict ourselves to a two-dimensional model. We consider a set of $N$ fiber elements modelled as small line segments of uniform and fixed length $L$, described by their center $X_i \in \mathbb{R}^2$ and their  angle $\theta_i$ with respect to a fixed reference direction. As the fiber elements are assumed apolar, $\theta_i$ is an angle of lines, i.e.\ $\theta_i \in [-\frac{\pi}{2},\frac{\pi}{2})$ modulo $\pi$.  We define energies related to each of the phenomena described above namely an energy for the maintenance of the links $W_{\mbox{\scriptsize{links}}}$, an energy for the alignment torque $W_{\mbox{\scriptsize{align}}}$, an energy for the action of the external elements $W_{\mbox{\scriptsize{ext}}}$, an energy for the noise contribution $W_{\mbox{\scriptsize{noise}}}$ and a total energy made of the sum of all these energies: 
\begin{equation}
\label{NRJtot}
W_{\mbox{\scriptsize{tot}}} = W_{\mbox{\scriptsize{links}}} + W_{\mbox{\scriptsize{ext}}} + W_{\mbox{\scriptsize{align}}}  + W_{\mbox{\scriptsize{noise}}},
\end{equation}  
All these energies are functions of the $N$ fiber positions $(X_i)_{i=1}^N$ and orientations $(\theta_i)_{i=1}^N$. Note that $W_{\mbox{\scriptsize{noise}}}$ is rather an entropy than an energy, so that $W_{\mbox{\scriptsize{tot}}}$ is indeed the total free energy of the system. Fiber motion and rotation during a time interval between two fiber linking-unlinking events is supposed to occur in the steepest descent direction to this free energy, namely according to:
\begin{empheq}[left=\empheqlbrace]{align} 
\frac{dX_i}{dt} = - \mu \, \nabla_{X_i} W_{\mbox{\scriptsize{tot}}}, \quad \forall i \in \{1, \ldots, N \}, \label{IBM1}\\
\frac{d\theta_i}{dt} = - \lambda \, \partial_{\theta_i} W_{\mbox{\scriptsize{tot}}} , \quad \forall i \in \{1, \ldots, N \}, \label{IBM2}.
\end{empheq}
Eqs.~\eqref{IBM1} and~\eqref{IBM2} express the motion and rotation of the individuals in an overdamped regime in which the forces due to friction are very large compared to the inertial forces. Fiber velocity and angular speed are proportional to the force exerted on the fiber through two mobility coefficients $\mu$ and $\lambda$ which are considered given. We now detail the expressions of the four energies involved in the expression~\eqref{NRJtot} of the total free energy of the system, as well as how Eqs.~\eqref{IBM1} and~\eqref{IBM2} are supplemented by Poisson jump processes when a linking/unlinking event occurs. 

To define the expression of $W_{\mbox{\scriptsize{links}}}$, we consider a time at which no linking/unlinking process occurs. Then, the set of links is well-defined and supposed to have $K$ elements. Let $k \in \{ 1, \ldots, K\}$ be a given link and denote by $(i(k),j(k))$ the pair of indices corresponding to the two fibers connected by this link. To make the labeling of the pair unique, we assume without loss of generality that the first element of the linked pair is always the one with lowest index, i.e.\ $i(k) < j(k)$. The link is supposed to connect two points $X_{i(k)}^k$ and  $X_{j(k)}^k$ on fibers $i(k)$ and $j(k)$ respectively. These points are determined by the algebraic distances $\ell_{i(k)}^k$ and $\ell_{j(k)}^k$ to the centers $X_{i(k)}$ and  $X_{j(k)}$ of the two fibers respectively; We thus have the relation:
$$ X_{i(k)}^k = X_{i(k)} + \ell_{i(k)}^k \omega_{i(k)} , \quad X_{j(k)}^k = X_{j(k)} + \ell_{j(k)}^k \omega_{j(k)},  $$
where $\ell_{i(k)}^k$, $\ell_{j(k)}^k \in [-L/2,L/2]$ and where, for any fiber $i$, we let $ \omega_i = (\cos \theta_i, \sin \theta_i)$ be the unit vector in the direction of the fiber. All along the link lifetime, the link places a spring-like restoring force that attracts $X_{i(k)}$ back to $X_{j(k)}$ (and vice-versa) as soon as their are displaced one with respect to each other. This restoring force gives rise to a potential energy $V(X_{i(k)}, \theta_{i(k)}, \ell_{i(k)}^k, X_{j(k)}, \theta_{j(k)}, \ell_{j(k)}^k)$, with 
\begin{equation}
\label{V}
V(X_1,\theta_1,\ell_1,X_2,\theta_2,\ell_2) = \frac{\kappa}{2} |X_1  + \ell_1 \omega(\theta_1) -  (X_2  + \ell_2 \omega(\theta_2))|^2, 
\end{equation}
where $\kappa$ is the intensity of the restoring force. Obviously, the larger $\kappa$, the better the maintainance of the link is ensured. The potential $W_{\mbox{\scriptsize{links}}}$ is then assumed to be the sum of all the linked fiber spring forces:
\begin{equation}\label{Wlinks}
W_{\mbox{\scriptsize{links}}} = \frac{1}{2}\sum_{k=1}^K V(X_{i(k)},\theta_{i(k)}, \ell_{i(k)}^k,X_{j(k)},\theta_{j(k)}, \ell_{j(k)}^k).
\end{equation}
We stress the fact that the quantities $\ell_{i(k)}^k$ and $\ell_{j(k)}^k$ remain constant throughout the link lifetime. They are determined at the time of the creation of the link (see below and Fig.~\ref{fig1}). 

\begin{figure}[ht]
\center
\includegraphics[width=12cm,height=3cm]{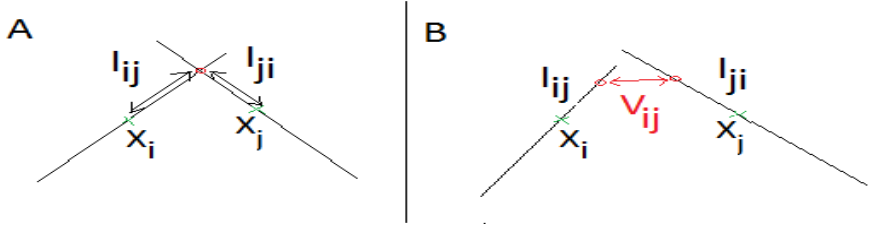}
\caption{Intersecting linked fibers. $l_{ij}$ and $l_{ji}$ refer to $\bar{\ell}(X_i, \theta_i, X_j, \theta_j)$  and $\bar{\ell}(X_j, \theta_j, X_i, \theta_i)$~\eqref{tfk}. A. Situation at linking time. B. Restoring potential $V_{ij}$~\eqref{V} after motion of the fibers. }
   \label{fig1}
\end{figure} 

The external potential $W_{\mbox{\scriptsize{ext}}}$ associated with the external forces is supposed to be the sum of potential forces $U(X_i,\theta_i)$ acting on each of the $N$ fibers:
\begin{equation}\label{potext}
W_{\mbox{\scriptsize{ext}}} = \sum_{i=1}^{N} U(X_i,\theta_i).
\end{equation}
\noindent Here, $U(x,\theta)$ is a given, possibly time-dependent smooth function. In the case where the system describes the collagen fibers in a tissue, $U$ aims to model the presence of cells or other organs.

Linked fibers are subjected to an alignment force at their junction to model fiber resistance to bending. This force tends to align linked fibers $i(k)$ and $j(k)$ and derives from the potential $b(\theta_{i(k)},\theta_{j(k)})$ which reads:
\begin{equation}\label{b}
b(\theta_1,\theta_2) = \alpha |\sin(\theta_1 - \theta_2)|^\beta,
\end{equation}
\noindent where $\alpha$ plays the role of a flexural modulus and $\beta$ is a modeling parameter. The binary alignment potential only depends on the angles $\theta_1$ and $\theta_2$, and the total alignment energy $W_{\mbox{\scriptsize{align}}}$ is supposed to be the sum of all the binary alignment interactions:
\begin{equation}\label{align}
W_{\mbox{\scriptsize{align}}}= \frac{1}{2}\sum_{k=1}^K b(\theta_{i(k)},\theta_{j(k)}). 
\end{equation}

We include random positional and orientational motion of the fiber elements which, in the context of tissue dynamics, originate from the random movements of the subject. With this aim, we introduce an entropy term:
\begin{equation}\label{noise}
W_{\mbox{\scriptsize{noise}}} = d \sum_{i=1}^N \log(\tilde{f})(X_i, \theta_i) ,
\end{equation}
\noindent where $\tilde{f}$ is a 'regularized density':
\begin{equation*}
\tilde{f}(x,\theta) = \frac{1}{N} \sum_{i=1}^N \xi^N(x-X_i) \, \eta^N(\theta-\theta_i).
\end{equation*}
\noindent Here, $\xi^N$ and $\eta^N$ are regularization functions which allow to define the logarithm of $\tilde{f}$ and have the following properties: 
\begin{empheq}[left=\empheqlbrace]{align*}
& \xi^N \in C^\infty({\mathbb R}^2), \quad \eta^N \in C_{per}^\infty([-\frac{\pi}{2},\frac{\pi}{2}]), \quad \xi^N \geq 0, \quad \eta^N \geq 0, \\
&\int\xi^N(x) dx = 1, \quad \int_{-\pi}^\pi \eta^N(\theta) \frac{d\theta}{2\pi} = 1,\\
&\text{Supp}(\xi^N) \subset B(0,R^N), \quad
 \text{Supp}(\eta^N) \subset [-M^N,M^N],
\end{empheq}
\noindent where $C^\infty({\mathbb R}^2)$ is the set of infinitely differentiable functions on ${\mathbb R}^2$, $C_{per}^\infty([-\frac{\pi}{2},\frac{\pi}{2}])$ the set of periodic $C^\infty$ functions of $[-\frac{\pi}{2},\frac{\pi}{2}]$ and Supp stands for the support of a function. Here, $R^N$ and $M^N$ are chosen such that $\sqrt{N}R^N$ and $N M^N \rightarrow \infty$ as $N \to \infty$. The mean interparticle distance in $x$ and $\theta$ are respectively of order $\frac{1}{\sqrt{N}}$ and $\frac{1}{N}$. This condition is equivalent to $\frac{1}{\sqrt{N}R^N} \rightarrow 0$ and $\frac{1}{N M^N} \rightarrow 0$, which means that as $N\rightarrow \infty$, the number of particles inside the support of a regularizing kernel tends to infinity. This way of modeling the influence of the noise is customary in polymer dynamics \cite{Bird_etal_Wiley87}.In the next section, we show that such an entropy term gives rise to  diffusion terms at the level of the mean-field kinetic model.

By inserting~\eqref{Wlinks},~\eqref{potext},~\eqref{align} and~\eqref{noise} into~\eqref{IBM1},~\eqref{IBM2}, we find the fiber equation of motion, during any time interval between two linking/unlinking events: 
%\begin{empheq}[left=\empheqlbrace]{align*} 
\begin{align*}
\frac{d X_i}{dt} = &-\mu \bigg[  \nabla_x (U+\log \tilde{f}^N)(X_i,\theta_i)\\
&+\frac{1}{2}\sum_{k=1, i(k) = i}^K \nabla_{x_1} V(X_{i(k)},\theta_{i(k)},\ell^k_{i(k)},X_{j(k)},\theta_{j(k)},\ell^k_{j(k)}) \\
&+ \frac{1}{2}\sum_{k=1, j(k) = i}^K \nabla_{x_2} V(X_{i(k)},\theta_{i(k)},\ell^k_{i(k)},X_{j(k)},\theta_{j(k)},\ell^k_{j(k)})\bigg],
\end{align*}
\begin{align*}
\frac{d \theta_i}{dt} = &-\lambda \bigg[ \partial_\theta (U+\log \tilde{f}^N)(X_i,\theta_i)\\
&+\frac{1}{2}\sum_{k=1, i(k) = i}^K \partial_{\theta_1} V(X_{i(k)},\theta_{i(k)},\ell^k_{i(k)},X_{j(k)},\theta_{j(k)},\ell^k_{j(k)}) \\
&+ \frac{1}{2}\sum_{k=1, j(k) = i}^K \partial_{\theta_2} V(X_{i(k)},\theta_{i(k)},\ell^k_{i(k)},X_{j(k)},\theta_{j(k)},\ell^k_{j(k)})  \\
& + \frac{1}{2}\sum_{k=1, i(k) = i}^K \partial_{\theta_1} b(\theta_{i(k)},\theta_{j(k)}) + \frac{1}{2}\sum_{k=1, j(k) = i}^K \partial_{\theta_2} b(\theta_{i(k)},\theta_{j(k)})  \bigg],
%\end{empheq}
\end{align*}
\noindent which we can write:
\begin{empheq}[left=\empheqlbrace]{align} 
\frac{d X_i}{dt} =& -\mu \bigg[ \bigg(\frac{1}{2}\sum_{k=1}^K \delta_{i(k)}(i) \nabla_{x_1} V + \frac{1}{2}\sum_{k=1}^K \delta_{j(k)}(i) \nabla_{x_2} V\bigg) (C^k_{i(k),j(k)}) \nonumber\\
&+ \nabla_x (U+\log \tilde{f}^N)(X_i,\theta_i)\bigg], \label{motionIBMX}\\
\frac{d \theta_i}{dt} =& -\lambda \bigg[ \partial_\theta (U+\log \tilde{f}^N)(X_i,\theta_i) \nonumber\\
&+\bigg( \frac{1}{2}\sum_{k=1}^K \delta_{i(k)}(i) \partial_{\theta_1} V + \frac{1}{2}\sum_{k=1}^K \delta_{j(k)}(i) \partial_{\theta_2} V \bigg)(C^k_{i(k),j(k)}) \label{motionIBMTETA}\\
& + \bigg(\frac{1}{2}\sum_{k=1}^K \delta_{i(k)}(i) \partial_{\theta_1} b + \frac{1}{2}\sum_{k=1}^K \delta_{j(k)}(i) \partial_{\theta_2} b\bigg)(\theta_{i(k)},\theta_{j(k)}) \bigg], \nonumber
\end{empheq}
\noindent with $C^k_{i(k),j(k)} = (X_{i(k)},\theta_{i(k)},\ell^k_{i(k)}, X_{j(k)},\theta_{j(k)},\ell^k_{j(k)})$ and $\delta_{i(k)}(i)$ is the Kronecker symbol, i.e.\ $\delta_{i(k)}(i) = 1$ if $i(k) = i$ and $\delta_{i(k)}(i) = 0$ otherwise. 

When two fibers $i$ and $j$ intersect each other, because of the continuity of their motion, they are going to intersect each other during a time interval $[t_*, t^*]$. We assume that, during this time span, the linking probability follows a Poisson process of parameter $\nu_f$, i.e.\ the probability that a link is formed during the interval $[t_*, t]$ with $t<t^*$ is $1-e^{-\nu_f (t-t_*)}$. Only one link can be formed between the two fibers of the same fiber pair. Supposing that a link, indexed by $k$ is formed between the fibers $i$ and $j$ (such that $i = i(k)$ and $j = j(k)$ if $i<j$) at a time $t_k \in [t_*, t^*]$, we denote by $X_k$ the attachment site of the link. The distance $\bar{\ell}(X_{i(k)}, \theta_{i(k)}, X_{j(k)}, \theta_{j(k)})$ between the center $X_{i(k)}$ of fiber $i(k)$ to the $k$-th link attachment site $X^k$ with fiber $j(k)$ (see Figure~\ref{fig1}.B) can be directly computed by:
\begin{equation}\label{tfk}
\bar{\ell}(X_{i(k)}, \theta_{i(k)}, X_{j(k)}, \theta_{j(k)}) = \frac{(x_{j(k)} - x_{i(k)})\sin \theta_{j(k)}  -  (y_{j(k)} - y_{i(k)})\cos\theta_{j(k)}}{\sin(\theta_{j(k)} - \theta_{i(k)})},
\end{equation}
\noindent where $X_{i(k)} = (x_{i(k)},y_{i(k)})$ are the coordinates of the center of fiber $i(k)$. For $X = (x,y)$ and $\omega = (\alpha,\beta)$, we denote by $X \times \omega = x\beta - y\alpha$. Then, $\bar{\ell}(X_{i(k)}, \theta_{i(k)}, X_{j(k)}, \theta_{j(k)})$ can be written: 
\begin{equation*}
\bar{\ell}(X_{i(k)}, \theta_{i(k)}, X_{j(k)}, \theta_{j(k)}) = \frac{|(X_{j(k)} - X_{i(k)}) \times \omega(\theta_{j(k)}|}{|\omega(\theta_{i(k)}) \times \omega(\theta_{j(k)})|},
\end{equation*}
\noindent where again, $\omega(\theta) = (\cos \theta , \sin \theta)$ is the directional vector associated to angle $\theta$. The fact that the two fibers are intersecting each other at time $t_k$ is written: 
$$
|\bar{\ell}(X_{i(k)}, \theta_{i(k)}, X_{j(k)}, \theta_{j(k)})|\leq \frac{L}{2}, \quad \mbox{and} \quad |\bar{\ell}(X_{j(k)}, \theta_{j(k)}, X_{i(k)}, \theta_{i(k)})|\leq \frac{L}{2}, 
$$
where $L$ is the fiber length and where all positions and angles are evaluated at time $t_k$. The quantities $\bar{\ell}(X_{i(k)}, \theta_{i(k)}, X_{j(k)}, \theta_{j(k)})$ and $\bar{\ell}(X_{j(k)}, \theta_{j(k)}, X_{i(k)}, \theta_{i(k)})$ at the time $t_k$ of the formation of the link set the positions of the attachment sites $X_{i(k)}^k$ and $X_{j(k)}^k$ of the link on fibers $i$ and $j$. Therefore, $\ell_{i(k)}^k$ and $\ell_{j(k)}^k$ remain constant throughout the link lifetime and equal to their value at the time $t_k$. So, we have 
$$ \frac{d}{dt} \ell_{i(k)}^k = \frac{d}{dt} \ell_{j(k)}^k = 0, $$
throughout the lifetime of the link. 

We also assume that existing links can disappear according to a Poisson random process of parameter $\nu_d$, i.e.\ the probability that the link disappears in the time interval $[t_k, t]$ is $1 - e^{-\nu_d (t - t^k)}$. 

The next section is devoted to the asymptotic limit $N,K \rightarrow \infty$ of this model.

%%%%%%%%%%%%%%%%%%%%%%%%%%%%%%%%%%%%%%%%%%%%%%%%%%%%%%%%%%%%
%%%%%%%%%%%%%%%%%%%%%%%%%%%%%%%%%%%%%%%%%%%%%%%%%%%%%%%%%%%%
%%%%%%%%%%%%%%%%%%%%%%%%%%%%%%%%%%%%%%%%%%%%%%%%%%%%%%%%%%%%
%%%%%%%%%%%%%%%%%%%%IBM-CINETIQUE%%%%%%%%%%%%%%%%%%%%%%%
%%%%%%%%%%%%%%%%%%%%%%%%%%%%%%%%%%%%%%%%%%%%%%%%%%%%%%%%%%%%
%%%%%%%%%%%%%%%%%%%%%%%%%%%%%%%%%%%%%%%%%%%%%%%%%%%%%%%%%%%%
%%%%%%%%%%%%%%%%%%%%%%%%%%%%%%%%%%%%%%%%%%%%%%%%%%%%%%%%%%%%

\setcounter{equation}{0}
\section{Derivation of a kinetic model}
\label{sec:kinetic}

Here, the derivation of a kinetic model from the Individual Based Model of section~\ref{sec:IBM} is performed. The empirical measure  $f^N(x,\theta,t)$ of the fibers is introduced: 
\begin{equation*}
f^N(x,\theta,t) = \frac{1}{N} \sum_{i=1}^N \delta_{(X_i(t),\theta_i(t))}(x,\theta),
\end{equation*}
\noindent
where $\delta_{(X_i(t),\theta_i(t))}(x,\theta)$ denotes the Dirac delta located at $(X_i(t),\theta_i(t)$. It gives the probability to find a fiber at point $x$ and orientational angle $\theta$ at time $t$. The empirical measure $g^K(x_1,\theta_1, \ell_1,x_2,\theta_2,\ell_2,t)$ of the fiber links is given by:
\begin{equation*}
\begin{split}
g^K(x_1,\theta_1, \ell_1,x_2,\theta_2,\ell_2,t) =& \frac{1}{2K} \sum_{k=1}^K \delta_{(X_{i(k)},\theta_{i(k)},\ell^k_{i(k)}, X_{j(k)},\theta_{j(k)}, \ell^k_{j(k)})}(x_1,\theta_1,\ell_1,x_2,\theta_2,\ell_2)\\
&+ \delta_{(X_{j(k)},\theta_{j(k)},\ell^k_{j(k)}, X_{i(k)},\theta_{i(k)}, \ell^k_{i(k)})}(x_1,\theta_1,\ell_1,x_2,\theta_2,\ell_2),
 \end{split}
\end{equation*} 
\noindent 
with a similar definition of the Dirac deltas. It gives the probability of finding a link with associated lengths within a volume $d\ell_1 d\ell_2$ about $\ell_1$ and $\ell_2$, this link connecting a fiber located within a volume $dx_1 \frac{d\theta_1}{\pi}$ about $(x_1,\theta_1)$ with a fiber located within a volume $dx_2 \frac{d\theta_2}{\pi}$ about $(x_2,\theta_2)$. One notes that $(\ell_1,\ell_2)$ is defined in $[-\frac{L}{2},\frac{L}{2}]^2$.  Then, at the limit $N,K \rightarrow \infty$, $\frac{K}{N} \rightarrow \xi$, where $\xi>0$ is a fixed parameter, $f^N \rightarrow f$, $g^K \rightarrow g$ where $f$ and $g$ satisfy equations given in the following theorem:
\begin{theorem}\label{thm1}
The formal limit of Eqs.~\eqref{IBM1},~\eqref{IBM2} for $K,N \rightarrow \infty$, $\frac{K}{N} \rightarrow \xi$, where $\xi>0$ is a fixed parameter reads:
\begin{equation}\label{systtotf}
 \frac{df}{dt} - \mu \bigg( \nabla_x \cdot ((\nabla_x U) f) + \xi \nabla_x \cdot F_1 + d\Delta_x f \bigg) - \lambda \bigg(\partial_{\theta} ((\partial_{\theta} U) f) + \xi \partial_{\theta} F_2 + d \partial^2_{\theta} f \bigg)=0,
  \end{equation}
 \noindent and
 \begin{equation}\label{systtotg}
 \begin{split}
\frac{dg}{dt} - \mu &\bigg( \nabla_{x_1} \cdot \big(g\nabla_x U(x_1,\theta_1) + \xi \frac{g}{f(x_1,\theta_1)}F_1(x_1,\theta_1)\big)\\ &+\nabla_{x_2} \cdot \big(g\nabla_{x} U(x_2,\theta_2) +  \xi \frac{g}{f(x_2,\theta_2)}F_1(x_2,\theta_2)\big)\\
& +d\nabla_{x_1}\cdot ( \frac{g}{f(x_1,\theta_1)}\nabla_x f(x_1,\theta_1)) + d \nabla_{x_2}\cdot ( \frac{g}{f(x_2,\theta_2)}\nabla_{x} f(x_2,\theta_2))\bigg)\\
- \lambda &\bigg(\partial_{\theta_1} \big(g\partial_{\theta} U(x_1,\theta_1)+  \xi \frac{g}{f(x_1,\theta_1)}F_{2}(x_1,\theta_1)\big)\\ &+\partial_{\theta_2}  \big(g\partial_{\theta} U(x_2,\theta_2)+ \xi\frac{g}{f(x_2,\theta_2)}F_{2}(x_2,\theta_2) \big)\\
&+ d\partial_{\theta_1}( \frac{g}{f(x_1,\theta_1)} \partial_\theta f(x_1,\theta_1)) + d\partial_{\theta_2}(\frac{g}{f(x_2,\theta_2)} \partial_\theta f(x_2,\theta_2) )\bigg)=S(g) ,
\end{split}
\end{equation}
\noindent where
\begin{empheq}[left=\empheqlbrace]{align} 
F_1(x_1,\theta_1) &= \int\limits (g \nabla_{x_1} V)(x_1,\theta_1,\ell_1,x_2,\theta_2,\ell_2) d\ell_1 d\ell_2  \frac{d\theta_2}{\pi}dx_2, \label{F1}\\
F_2(x_1,\theta_1) &= \int\limits \big( g(\partial_{\theta_1} V + \partial_{\theta_1} b) \big)(x_1,\theta_1,\ell_1,x_2,\theta_2,\ell_2)d\ell_1 d\ell_2  \frac{d\theta_2}{\pi}dx_2,\label{F2}
\end{empheq}
\noindent 
and $S(g)$ is given by:
\begin{equation}\label{Sg}
S(g) = \nu_f f(x_1,\theta_1)f(x_2,\theta_2)\delta_{\bar{\ell}(x_1,\theta_1,x_2,\theta_2)}(\ell_1)\delta_{\bar{\ell}(x_2,\theta_2,x_1,\theta_1)}(\ell_2)- \nu_d g,
\end{equation}
\noindent where $\delta_{\bar{\ell}}(\ell_1)$ denotes the Dirac delta at $\bar{\ell}$, i.e.\ the distribution acting on test functions $\phi(\ell_1)$ such that $\langle \delta_{\bar{\ell}}(\ell_1),\phi(\ell_1)\rangle  = \phi(\bar{\ell})$
\end{theorem}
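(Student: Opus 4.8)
The plan is to argue at the level of weak formulations. For a smooth test function $\phi(x,\theta)$ I pair it against the empirical measure $f^N$ and differentiate in time, writing $\frac{d}{dt}\langle f^N,\phi\rangle = \frac{1}{N}\sum_i\big(\dot X_i\cdot\nabla_x\phi + \dot\theta_i\,\partial_\theta\phi\big)(X_i,\theta_i)$, and similarly I pair $g^K$ against a test function $\psi(x_1,\theta_1,\ell_1,x_2,\theta_2,\ell_2)$. Since $f$ and $g$ are obtained as formal limits, the whole derivation is formal: I replace empirical averages by integrals against the limiting densities and, crucially, I assume propagation of chaos so that pair statistics factorize. The time derivative of $\langle g^K,\psi\rangle$ splits into two contributions, the smooth transport between jumps governed by \eqref{IBM1}--\eqref{IBM2}, and the contribution of the linking/unlinking Poisson generator; these produce respectively the left-hand side and the source $S(g)$ of \eqref{systtotg}.

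For \eqref{systtotf} I substitute \eqref{motionIBMX}--\eqref{motionIBMTETA} into $\frac{d}{dt}\langle f^N,\phi\rangle$. The external potential terms give the drifts $\nabla_x\cdot((\nabla_x U)f)$ and $\partial_\theta((\partial_\theta U)f)$ after integration by parts. The entropy term contributes through $-\mu\,\nabla_x\log\tilde f^N$ and $-\lambda\,\partial_\theta\log\tilde f^N$; replacing $\tilde f^N$ by $f$ and using $f\,\nabla_x\log f=\nabla_x f$ converts these into the diffusions $d\,\Delta_x f$ and $d\,\partial_\theta^2 f$. The link forces are handled by collapsing the double sum over $(i,k)$ with the Kronecker symbols $\delta_{i(k)}(i),\delta_{j(k)}(i)$ and writing $\frac{1}{N}\sum_k = \frac{K}{N}\,\frac{1}{K}\sum_k \to \xi\,\langle g,\cdot\rangle$; integrating $g\,\nabla_{x_1}V$ and $g(\partial_{\theta_1}V+\partial_{\theta_1}b)$ over the remaining variables produces exactly $F_1$ and $F_2$ of \eqref{F1}--\eqref{F2}, hence the terms $\xi\,\nabla_x\cdot F_1$ and $\xi\,\partial_\theta F_2$. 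There is no jump term here, because linking or unlinking does not displace the fiber centers or angles, which is why \eqref{systtotf} carries no source.

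I then turn to the transport part of \eqref{systtotg}. Between jumps each link is transported by the velocities of the two fibers it connects, with the attachment abscissas frozen ($\dot\ell=0$ throughout a link's lifetime). The essential step is to identify the velocity of the fiber carrying an endpoint located at $(x_1,\theta_1)$: its IBM velocity contains the forces from all links attached to that same fiber, and in the mean field this self-consistent force density is $\xi F_1(x_1,\theta_1)$ (respectively $\xi F_2$ for the angle), while the fiber density there is $f(x_1,\theta_1)$. Dividing the force density by $f$ to obtain the per-fiber force yields the velocity $-\mu\big(\nabla_x U + \xi F_1/f + d\,\nabla_x\log f\big)$ and its angular analogue, and transporting $g$ by this velocity produces precisely the terms carrying the factors $g/f$ in \eqref{systtotg}. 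This conditional-force identification is the main conceptual obstacle: it is exactly the closure hypothesis, since it requires that, conditionally on carrying a given link, a fiber still feels the mean-field force built from $f$ and $g$.

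Finally I compute the jump contribution, which gives $S(g)$. Unlinking removes each existing link independently at rate $\nu_d$, so it contributes $-\nu_d\langle g^K,\psi\rangle$, i.e.\ $-\nu_d g$ in the limit. Linking creates a new link at rate $\nu_f$ for each currently intersecting fiber pair; under the independence of the two fibers before they bind, the pair density factorizes as $f(x_1,\theta_1)f(x_2,\theta_2)$, and the newly created link inherits attachment abscissas equal to the intersection values $\bar\ell$ of \eqref{tfk}, encoded by the two Dirac masses $\delta_{\bar\ell(x_1,\theta_1,x_2,\theta_2)}(\ell_1)\,\delta_{\bar\ell(x_2,\theta_2,x_1,\theta_1)}(\ell_2)$; the geometric constraint $|\bar\ell|\le L/2$ is then automatic, since these masses lie outside $[-L/2,L/2]^2$ whenever the fibers do not meet. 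Matching the precise constant in $S(g)$ requires careful bookkeeping of the normalizations $1/N$ versus $1/(2K)$, of the ratio $K/N\to\xi$, and of the symmetrization built into $g$, and it is here that the scaling assumptions on $\nu_f$ must be used to recover the stated coefficient. I expect the factorization of the pair density in this gain term, together with the conditional-force identification of the transport step, to be the two genuinely delicate points, both being the formal propagation-of-chaos hypotheses on which the closure rests.
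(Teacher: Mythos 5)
Your proposal follows essentially the same route as the paper's Appendix~A: the weak formulation against test functions, the exchange of the sums over fibers and links with the factor $K/N\to\xi$ producing $F_1$ and $F_2$ in the $f$-equation, the identification of the per-fiber conditional force (the paper's count $C_i^{k'}$ of links per fiber and the conditional probability $g/\int g$) yielding the $g/f$ factors in the transport of $g$, and the Poisson generator giving $S(g)$ with the factorized gain term; you also correctly single out the two closure hypotheses on which the formal derivation rests. The only small inaccuracy is your remark that scaling assumptions on $\nu_f$ are needed to fix the constant in $S(g)$ — no such rescaling enters at this stage (it only appears later in Section~\ref{scaling}); the coefficient comes directly from the Poisson rates and the normalization of $g^K$.
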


This kinetic model consists of two evolution equations. The first one (Eq.~\eqref{systtotf}) is an equation for the individual fibers  and describes the evolution of the one-particle distribution function $f$. Eq.~\eqref{systtotg} is an equation for the links between fiber pairs. The distribution function $g$ describes the correlations between fiber pairs brought by the presence of links. It can be viewed as a kind of  two-particle fiber distribution function. This model is, to our knowledge, a unique explicit example of a kinetic model written in terms of the one and two particle ditributions and closed at this level. Also, the distribution function $g$ can be seen as a way of describing the random graph of the fiber links, namely the graph where the nodes are the fibers and the edges are the links. This statistical description of a random graph could be useful to describe other kinds of random networks, notably in social sciences. As the links are tightly tied to the fibers, they are convected by them and follow their motion. Simultaneously, they constrain the linked fibers to move together, so they directly influence their motion. The action of the links on the individual fiber motion is contained in the third and sixth force terms $F_1$ and $F_2$ of Eq.~\eqref{systtotf} and are the kinetic counterparts of~\eqref{V}. The second and fith terms describe transport in physical and orientational spaces due to the external potential and are the kinetic counterparts of~\eqref{potext}. The fourth and seventh terms  are diffusion terms of amplitude $\lambda d$ and $\mu d$ respectively. They represent the random motion of the fibers and originate from the interactions described by Eq.~\eqref{noise}. The individual motion of the fibers is thus related to the motion of its linked neighbors. The left-hand side of Equation~\eqref{systtotg} describes the evolution of the links between fibers. Indeed, it is composed of the convective terms generated by the external potential and by the diffusion terms. The forces induced by the restoring potential generated by the links again gives rise to the non local terms $F_1$ and the first term of $F_{2}$. The kinetic counterpart of the alignment force between linked fibers (see Eq.~\eqref{align}) is encompassed in the second term of the force $F_2$ and only acts on the orientation of the fibers. The right hand side $S(g)$ of equation~\eqref{systtotg} describes the Poisson processes of linking/unlinking at frequencies $\nu_f$ and $\nu_d$, respectively. The first term describes the formation of the link and the Dirac deltas indicate that, at the link creation time, the link lengths $\ell_1$ and $\ell_2$ are set by the geometric configuration of the fibers at the attachment time. Also, because $\ell_1$ and $\ell_2$ are restricted to lie in the interval $[-L/2,L/2]$, we see that the link creation term is non-zero only when two fiber elements are intersecting each other. The second term just describes a decay of the link distribution at the rate set by the Poisson process, i.e.\ $\nu_d$. 

The formal proof of this result is inspired from Ref. \cite{Sone2002}, and the detailed computations can be found in appendix~\ref{proofIBMcont}. The rigorous proof of this result is an open question and is left for future work. 

\setcounter{equation}{0}
\section{Scaling}
\label{sec:Scaling}

\subsection{Dimensionless Equations}
We express the problem in dimensionless variables.  
For this purpose, let $t_0$ be the unit of time and $x_0$, $f_0=\frac{1}{x_0^2}$, $g_0 = \frac{1}{x_0^6}$ and $U_0 = \frac{x_0^2}{t_0^2}$ the units of space, distribution function and energy. The scaling of $f(x,\theta)$ and $g(x_1,\theta_1,\ell_1,x_2,\theta_2,\ell_2)$ comes from the fact that they are probability distribution functions on a 2D domain. The following dimensionless variables are defined: 
\begin{equation*}
\bar{x} = \frac{x}{x_0},\; \bar{\ell} =\frac{\ell}{x_0},\; \bar{f}=\frac{f}{f_0} = f x_0^2, \;  \bar{g} = \frac{g}{g_0} = g x_0^6,\;  \bar{U} = \frac{t_0^2 U}{x_0^2}.
\end{equation*}
\noindent and the following dimensionless parameters are introduced: 
\begin{equation*}
\mu' = \frac{\mu}{t_0},\;  \lambda' = \frac{\lambda x_0^2}{t_0}, \; \nu_f' = t_0 \nu_f,\;  \nu_d' = t_0 \nu_d, L'=\frac{L}{x_0}, \; d'=\frac{d t_0^2}{x_0^2}, \alpha'=\frac{\alpha t_0^2}{x_0^2}, \; \kappa' = \kappa t_0^2.
\end{equation*}
\noindent First of all, from the expression of $V$ (see Eq.~\eqref{V}), we get:
\begin{equation*}
\begin{split}
V(x_1,\theta_1,\ell_1,x_2,\theta_2,\ell_2) &= \frac{\kappa'}{2t_0^2}(x_1+\ell_1\omega(\theta_1) - x_2-\ell_2 \omega(\theta_2))^2 \\
&=  \frac{x_0^2}{t_0^2} \bar{V}(\bar{x}_1,\theta_1,\bar{\ell}_1,\bar{x}_2,\theta_2,\bar{\ell}_2), 
\end{split}
\end{equation*}
with 
$$ \bar{V}(\bar{x}_1,\theta_1,\bar{\ell}_1,\bar{x}_2,\theta_2,\bar{\ell}_2) =  \frac{\kappa'}{2} (\bar{x}_1+\bar{\ell}_1 \omega(\theta_1) - \bar{x}_2-\bar{\ell}_2 \omega(\theta_2))^2. $$

\noindent Now, from Eqs.~\eqref{F1}-\eqref{F2}, one notes that:
\begin{equation*}
F_1(x_1,\theta_1) = \frac{1}{x_0t_0^2} \bar{F}_1(\bar{x}_1,\theta_1),
\end{equation*}
\noindent where 
\begin{equation*}
\begin{split}
\bar{F}_1(\bar{x}_1,\theta_1)= \int_{\mathcal{L}'} \nabla_{\bar{x}_1} \bar{V}(\bar{x}_1,\theta_1,\bar{\ell}_1,\bar{x}_2,\theta_2,\bar{\ell}_2) \bar{g}(\bar{x}_1,\theta_1,\bar{\ell}_1,\bar{x}_2,\theta_2,\bar{\ell}_2) d\bar{\ell}_1 d\bar{\ell}_2 \frac{d\theta_2}{\pi} d\bar{x}_2,
\end{split}
\end{equation*}
\noindent with $\mathcal{L}' = {\mathbb{R}^2} \times [-\frac{\pi}{2},{\frac{\pi}{2}] \times [-\frac{L'}{2}},{\frac{L'}{2}}]\times [-\frac{L'}{2},{\frac{L'}{2}}]$. Similarly, $F_2(x_1,\theta_1) = \frac{1}{t_0^2} \bar{F}_2(\bar{x}_1,\theta_1)$, where:
\begin{empheq}[left=\empheqlbrace]{align*} 
\bar{F}_2(\bar{x}_1,\theta_1) &= \bar{F}_{al}(\bar{x}_1,\theta_1) + \bar{F}_{link}(\bar{x}_1,\theta_1), \\
\bar{F}_{link}(\bar{x}_1,\theta_1) &= \int_{\mathcal{L}'} \big(\bar{g} \partial_{\theta_1}V \big)(\bar{x}_1,\theta_1,\bar{\ell}_1,\bar{x}_2,\theta_2,\bar{\ell}_2) d\bar{\ell}_1 d\bar{\ell}_2 \frac{d\theta_2}{\pi} d\bar{x}_2 ,\\ 
\bar{F}_{al}(\bar{x}_1,\theta_1) &= \int_{\mathcal{L}'} \big(\bar{g}\partial_{\theta_1} \bar{b}\big)(\bar{x}_1,\theta_1,\bar{\ell}_1,\bar{x}_2,\theta_2,\bar{\ell}_2)d\bar{\ell}_1 d\bar{\ell}_2 \frac{d\theta_2}{\pi} d\bar{x}_2 ,
\end{empheq}
\noindent where $\bar{b}(\theta_1,\theta_2) = \alpha' \sin(\theta_1-\theta_2)^\beta$.  In this new set of variables, Eqs.~\eqref{systtotf}-\eqref{systtotg} become:
;\begin{align*}
%\begin{empheq}[left=\empheqlbrace]{align*} 
 \partial_{t'} \bar{f}- \chi \lambda' \nabla_{\bar{x}}\cdot (\nabla_{\bar{x}} \bar{U}\bar{f}) -\lambda' \partial_{\theta} (\partial_{\theta} \bar{U}\bar{f}) 
  -  \xi \lambda' \partial_{\theta} \bar{F}_2 -& \chi  \xi \lambda' \nabla_{\bar{x}}\cdot \bar{F}_1 \\
  &- d' \lambda' \partial^2_{\theta} \bar{f} - d' \chi \lambda' \Delta_x \bar{f} = 0, 
\end{align*}
and 
\begin{align*}
\partial_{t'}\bar{g} - \chi \lambda'\nabla_{\bar{x}_1} \cdot (\bar{g}\nabla_{\bar{x}} U(\bar{x}_1,\theta_1) +   \xi\frac{\bar{g}}{\bar{f}(\bar{x}_1,\theta_1)} \bar{F}_1(\bar{x}_1,&\theta_1)) \\
- \lambda'\partial_{\theta_1} (\bar{g}\partial_{\theta} U(\bar{x}_1,\theta_1)+  \xi\frac{\bar{g}}{\bar{f}(\bar{x}_1,\theta_1)}\bar{F}_{2}(\bar{x}_1,\theta_1)) &\\
-\chi \lambda'\nabla_{\bar{x}_2} \cdot (\bar{g}\nabla_{\bar{x}} U(\bar{x}_2,\theta_2) +  \xi\frac{\bar{g}}{\bar{f}(\bar{x}_2,\theta_2)}&\bar{F}_1(\bar{x}_2,\theta_2))\\
- \lambda'\partial_{\theta_2}  (\bar{g}\partial_{\theta} U(\bar{x}_2,\theta_2)+ \xi\frac{\bar{g}}{\bar{f}(\bar{x}_2,\theta_2)} &\bar{F}_{2}(\bar{x}_2,\theta_2) )\\
-d'\chi \lambda'\nabla_{\bar{x}_1}\cdot (\frac{\bar{g}}{\bar{f}(\bar{x}_1,\theta_1)}\nabla_{\bar{x}} \bar{f}(\bar{x}_1,&\theta_1) ) \\
- d'\chi \lambda' \nabla_{\bar{x}_2}\cdot  (\frac{\bar{g}}{\bar{f}(\bar{x}_2,\theta_2)}\nabla_{\bar{x}} &\bar{f}(\bar{x}_2,\theta_2) ) \\
- d'\lambda'  \partial_{\theta_1}(\frac{\bar{g}}{\bar{f}(\bar{x}_1,\theta_1)} &\partial_{\theta} \bar{f}(\bar{x}_1,\theta_1) ) \\
- d'\lambda'\partial_{\theta_2}( &\frac{\bar{g}}{\bar{f}(\bar{x}',\theta_2)}\partial_{\theta} \bar{f}(\bar{x}_2,\theta_2))=\bar{S}(\bar{g}),
%\end{empheq}
 \end{align*}
\noindent where $\chi =\frac{\mu'}{\lambda'}$ and:
 \begin{equation*}
 \begin{split}
\bar{S}(\bar{g})(\bar{x}_1,\theta_1,\bar{\ell}_1,\bar{x}_2,\theta_2,\bar{\ell}_2) = &\nu_f' \bar{f}(\bar{x}_1,\theta_1)\bar{f}(\bar{x}_2,\theta_2)\delta_{\bar{\ell}(\bar{x}_1,\theta_1,\bar{x}_2,\theta_2)}(\bar{\ell}_1)\delta_{\bar{\ell}(\bar{x}_2,\theta_2,\bar{x}_1,\theta_1)}(\bar{\ell}_2)\\
&- \nu_d'\bar{g}(\bar{x}_1,\theta_1,\bar{\ell}_1,\bar{x}_2,\theta_2,\bar{\ell}_2).
\end{split}
\end{equation*}
\noindent Finally, if the space and time scales $x_0$, $t_0$ are chosen such that $\lambda' = \chi = 1$, i.e:
$$
x_0^2 = \frac{\mu}{\lambda}, \quad t_0 = \mu, 
$$
\noindent the dimensionless equations for $\bar{f}$ and $\bar{g}$ read (dropping the primes and tildes for the sake of clarity):
\begin{empheq}[left=\empheqlbrace]{align}
 \partial_t f - \nabla_{x}\cdot (\nabla_{x} U f) - \partial_{\theta} (\partial_{\theta} U f)
  -  \xi \partial_{\theta} F_2& - \xi\nabla_{x}\cdot F_1 - d \partial^2_{\theta} f - d  \Delta_{x} f =0, \label{systadimf}\\
 \partial_t g - \nabla_{x_1} \cdot (g \nabla_{x_1} U(x_1,\theta_1) +   \xi\frac{g}{f(x_1,\theta_1)}&F_1(x_1,\theta_1)) \nonumber\\
 - \partial_{\theta_1} (g\partial_{\theta} U(x_1,\theta_1)+  \xi \frac{g}{f(x_1,\theta_1)}F_{2}(x_1,&\theta_1))\nonumber \\
-\nabla_{x_2} \cdot (g\nabla_{x} U(x_2,\theta_2) + \xi \frac{g}{f(x_2,\theta_2)}&F_1(x_2,\theta_2))\nonumber\\
- \partial_{\theta_2}  (g\partial_{\theta} U(x_2,\theta_2)+ \xi \frac{g}{f(x_2,\theta_2)}&F_{2}(x_2,\theta_2) )\label{systadimg}\\
-d\nabla_{x_1}\cdot ( \frac{g}{f(x_1,\theta_1)}\nabla_{x_1} f(x_1,&\theta_1)) \nonumber\\
- d \nabla_{x_2}\cdot ( \frac{g}{f(x_2,\theta_2)} \nabla_{x} f(x_2,&\theta_2)) \nonumber\\
- d\partial_{\theta_1}(\frac{g}{f(x_1,\theta_1)}\partial_{\theta} f(x_1,\theta_1&)) \nonumber\\
- d\partial_{\theta_2}( \frac{g}{f(x_2,\theta_2)} \partial_{\theta} f(x_2,&\theta_2))=S(g)(x_1,\theta_1,\ell_1,x_2,\theta_2,\ell_2),\nonumber
\end{empheq}  
\noindent with 
\begin{empheq}[left=\empheqlbrace]{align*}
F_1(x_1,\theta_1)&= \int_{\mathcal{L}}  \nabla_{x_1} V(x_1,\theta_1,\ell_1,x_2,\theta_2,\ell_2) g(x_1,\theta_1,\ell_1,x_2,\theta_2,\ell_2) d\ell_1 d\ell_2 \frac{d\theta_2}{\pi} dx_2,\\
F_2(x_1,\theta_1) &= F_{al}(x_1,\theta_1) + F_{link}(x_1,\theta_1), \\
F_{link}(x_1,\theta_1) &= \int_{\mathcal{L}}   \big(g \partial_{\theta_1}V \big)(x_1,\theta_1,\ell_1,x_2,\theta_2,\ell_2) d\ell_1 d\ell_2 \frac{d\theta_2}{\pi} dx_2 ,\\ 
F_{al}(x_1,\theta_1) &= \int_{\mathcal{L}}  \big(g\partial_{\theta_1} \bar{b}\big)(x_1,\theta_1,\ell_1,x_2,\theta_2,\ell_2)d\ell_1 d\ell_2 \frac{d\theta_2}{\pi} dx_2 ,
\end{empheq}
\noindent where $\mathcal{L} = \mathbb{R}^2 \times [-\frac{\pi}{2},\frac{\pi}{2}] \times [-L/2,L/2] \times [-L/2,L/2]$ and 
 \begin{equation*}
 \begin{split}
S(g)(x_1,\theta_1,\ell_1,x_2,\theta_2,\ell_2) = &\nu_f f(x_1,\theta_1)f(x_2,\theta_2)\delta_{\ell(x_1,\theta_1,x_2,\theta_2)}(\ell_1)\delta_{\ell(x_2,\theta_2,x_1,\theta_1)}(\ell_2)\\
&- \nu_dg(x_1,\theta_1,\ell_1,x_2,\theta_2,\ell_2).
\end{split}
\end{equation*}

\subsection{Scaled equations}\label{scaling}
So far, the chosen time and space scales are microscopic ones, and describe the system at the scale of the agent interactions. In order to describe the system at a macroscopic scale, a small parameter $\varepsilon \ll 1$ is introduced and the space and time units are set to $\tilde{x_0} = \varepsilon^{-1/2} x_0$, $\tilde{t_0} = \varepsilon^{-1} t_0$. The fiber length measured at scale $x_0$ is supposed to stay of order 1 as $\varepsilon \rightarrow 0$, i.e.\  $L=O(1)$. The variables $x$, $t$, $\ell$ and unknowns $f$ and $g$ are then correspondingly changed to $\tilde{x} = \sqrt{\varepsilon} x$, $\tilde{t} = \varepsilon t$, $\tilde{\ell}=\sqrt{\varepsilon} \ell$, $\tilde{f}(\bar{x},\theta) = \varepsilon^{-1} f(x,\theta)$ and $\tilde{g}(\tilde{x}_1,\theta_1,\tilde{\ell}_1,\tilde{x}_2,\theta_2,\tilde{\ell}_2) = \varepsilon^{-3} g(x_1,\theta_1,\ell_1,x_2,\theta_2,\ell_2)$. We suppose that the external potential $U(x,\theta)$ is decomposed into $U(x,\theta) = U^0(x) + U^1(\theta)$, where $U^0$ is acting on the space variable only and $U^1$ is a $\pi$-periodic potential acting on fiber orientation angles only. The external potential acting on the space variables is supposed to be one order of magnitude stronger than the one acting on the fiber rotations: $U^0 = O(1)$, $U^1=O(\varepsilon)$, i.e.\ $\tilde{U}^1 = \varepsilon^{-1} U^1$ with $\tilde{U}^1 = O(1)$. The strength of the alignment potential is supposed to be large $\alpha = O(\varepsilon^{-1})$, i.e.\ $\tilde{\alpha} = \varepsilon \alpha$ with $\tilde{\alpha} = O(1)$, and we choose the exposant $\beta = 1$. The intensity of the alignment potential between linked fibers is supposed to be small $\kappa=O(\varepsilon)$, i.e.\ $\tilde{\kappa} = \varepsilon^{-1}\kappa$ with $\tilde{\kappa} = O(1)$ and the diffusion coefficient and parameter $\xi$ are supposed to stay of order 1: $d,\xi = O(1)$.  In order to simplify the analysis of the system, the process of linking/unlinking is supposed to occur at a very fast time scale, i.e.\ $\tilde{\nu}_f = \varepsilon^2 \nu_f$ and $\tilde{\nu}_d = \varepsilon^2 \nu_d$, with $\tilde{\nu}_f$, $\tilde{\nu}_d = O(1)$. The macroscopic restoring potential $\tilde{V}$ is defined such that:
 \begin{equation*}
\tilde{V}(\tilde{x}_1,\theta_1,\tilde{\ell}_1,\tilde{x}_2,\theta_2,\tilde{\ell}_2) = \frac{\tilde{\kappa}}{2}|\tilde{x}_1+\tilde{\ell}_1 \omega(\theta_1) - \tilde{x}_2 - \tilde{\ell}_2 \omega(\theta_2)|^2,
 \end{equation*}
\noindent Then,
\begin{empheq}[left=\empheqlbrace]{align*}
 V(x_1,\theta_1,\ell_1,x_2,\theta_2,\ell_2) = \tilde{V}(\tilde{x}_1,\theta_1,\tilde{\ell}_1,\tilde{x}_2,\theta_2,\tilde{\ell}_2),\\
\partial_{\theta} V(x_1,\theta_1,\ell_1,x_2,\theta_2,\ell_2) = \partial_{\theta} \tilde{V}(\tilde{x}_1,\theta_1,\tilde{\ell}_1,\tilde{x}_2,\theta_2,\tilde{\ell}_2),\\
\nabla_{x} V(x_1,\theta_1,\ell_1,x_2,\theta_2,\ell_2) = \sqrt{\varepsilon} \nabla_{\bar{x}} \tilde{V}(\tilde{x}_1,\theta_1,\tilde{\ell}_1,\tilde{x}_2,\theta_2,\tilde{\ell}_2).
\end{empheq}
\noindent Similarly, we have
\begin{equation*}
b(\theta_1,\theta_2) = \alpha |\sin(\theta_1 - \theta_2)| = \frac{\tilde{\alpha}}{\varepsilon}|\sin(\theta_1 - \theta_2)| = \frac{1}{\varepsilon} \tilde{b}(\theta_1,\theta_2),
\end{equation*}
\noindent and consequently, 
\begin{equation*}
\partial_{\theta_1} b(\theta_1,\theta_2) = \frac{1}{\varepsilon} \partial_{\theta_1} \tilde{b}(\theta_1,\theta_2).
\end{equation*}
\noindent Then we have:
\begin{empheq}[left=\empheqlbrace]{align*}
 \nabla_{x_1} {F}_1 &=\sqrt{\varepsilon} \nabla_{\tilde{x}_1} \bigg( \int_{L^\varepsilon} \sqrt{\varepsilon} \nabla_{\tilde{x}_1} \tilde{V}(\tilde{x}_1,\theta_1,\tilde{\ell}_1,\tilde{x}_2,\theta_2,\tilde{\ell}_2) \varepsilon^3 \tilde{g}(\tilde{x}_1,\theta_1,\tilde{\ell}_1,\tilde{x}_2,\theta_2,\tilde{\ell}_2) \\
 & \hspace{8cm} \frac{d\tilde{x}_2 \frac{d\theta_2}{\pi}d\tilde{\ell}_1 d\tilde{\ell}_2}{\varepsilon^2}\bigg)\\
 & = \varepsilon^2 \nabla_{\tilde{x}_1} \tilde{F}_1,\\
 F_{link}(x_1,\theta_1) &= \int_{L^\varepsilon} \partial_{\theta_1} \tilde{V}(\tilde{x}_1,\theta_1,\tilde{\ell}_1,\tilde{x}_2,\theta_2,\tilde{\ell}_2) \varepsilon^3 \tilde{g}(\tilde{x}_1,\theta_1,\tilde{\ell}_1,\tilde{x}_2,\theta_2,\tilde{\ell}_2) \frac{d\tilde{x}_2 \frac{d\theta_2}{\pi}d\tilde{\ell}_1 d\tilde{\ell}_2}{\varepsilon^2}\\
 & = \varepsilon \tilde{F}_{link},\\ 
F_{al}(x_1,\theta_1) &= \int_{L^\varepsilon} \frac{1}{\varepsilon}\partial_{\theta_1} \tilde{b}(\theta_1,\theta_2)\varepsilon^3\tilde{g}(\tilde{x}_1,\theta_1,\tilde{\ell}_1,\tilde{x}_2,\theta_2,\tilde{\ell}_2) \frac{d\tilde{x}_2 \frac{d\theta_2}{\pi}d\tilde{\ell}_1 d\tilde{\ell}_2}{\varepsilon^2}= \tilde{F}_{al},
\end{empheq}
\noindent where $L^\varepsilon = \mathbb{R}^2 \times [-\frac{\pi}{2},\frac{\pi}{2}] \times [-\frac{\sqrt{\varepsilon} L}{2},\frac{\sqrt{\varepsilon} L}{2}]^2$. Finally, we define $X_1$ and $X_2$ such that: 
\begin{empheq}[left=\empheqlbrace]{align*}
X_1(x_1,\theta_1) &=  \sqrt{\varepsilon} \nabla_{\tilde{x}} \tilde{U}^0(\tilde{x}_1) + \xi \varepsilon^{\frac{3}{2}}  \frac{\tilde{F}_1}{\varepsilon \tilde{f}}(\tilde{x}_1,\theta_1) = \sqrt{\varepsilon} \tilde{X}_1(\tilde{x}_1,\theta_1),\\
X_2(x_1,\theta_1) &= \frac{\varepsilon d \partial_{\theta_1} \tilde{f} +  \xi \varepsilon \tilde{F}_{link}}{\varepsilon \tilde{f}} = \tilde{X}_2(\tilde{x}_1,\theta_1),
\end{empheq}
\noindent with $\tilde{X}_1(\tilde{x}_1,\theta_1)$ and $\tilde{X}_2(\tilde{x}_1,\theta_1)$ defined by
\begin{empheq}[left=\empheqlbrace]{align*}
\tilde{X}_1(\tilde{x}_1,\theta_1) &= \nabla_{\tilde{x}} \tilde{U}^0(\tilde{x}_1) + \xi \frac{\tilde{F}_1}{\tilde{f}}(\tilde{x}_1,\theta_1),\\
 \tilde{X}_2(\tilde{x}_1,\theta_1) &= \frac{d \partial_{\theta_1} \tilde{f} +  \xi \tilde{F}_{link}}{ \tilde{f}}. 
 \end{empheq}
\noindent The macroscopic fiber linking/unlinking operator $S(\tilde{g})$ is similar to the one defined Eq.~\eqref{Sg}. Indeed, from Eq.~\eqref{tfk}: $\bar{\ell}(x_1,\theta_1,x_2,\theta_2) = \varepsilon^{-1/2} \bar{\ell}(\tilde{x}_1,\theta_1,\tilde{x}_2,\theta_2)$ and thus:
\begin{equation*}
S(\tilde{g}) = \tilde{\nu_f} \tilde{f}(\tilde{x}_1,\theta_1) \tilde{f}(\tilde{x}_2,\theta_2) \delta_{\bar{\ell}(\tilde{x}_1,\theta_1,\tilde{x}_2,\theta_2)}(\tilde{\ell}) \delta_{\bar{\ell}(\tilde{x}_1,\theta_1,\tilde{x}_2,\theta_2)}(\tilde{\ell}_2) - \tilde{\nu}_d \tilde{g}.
\end{equation*} 
\noindent Altogether, the macroscopic version of Eqs.~\eqref{systadimf}-\eqref{systadimg} reads (dropping the tildes for the sake of clarity):
\begin{equation}\label{systepsf}
\begin{split}
 -\xi \partial_{\theta_1} F_{al}- & \varepsilon \bigg( \xi \partial_{\theta} {F}_{link} + d \partial^2_{\theta} {f} \bigg) \\
&+ \varepsilon^2 \bigg( \partial_{{t}} {f} - \nabla_{{x}} \cdot (\nabla_{{x}} {U}{f}) - \partial_{\theta} (\partial_{\theta} {U}{f}) - \xi \nabla_{{x}} \cdot {F}_1  - d  \Delta_{{x}} {f} \bigg) = 0,
\end{split}
\end{equation}
\noindent and
\begin{equation}\label{systgeps}
\begin{split}
-S(g) - \varepsilon  \xi \bigg( \partial_{\theta_1} ( g  F_{al}({x}_1,\theta_1)\big) & + \partial_{\theta_2} \big(g F_{al}(x_2,\theta_2)\big)\bigg) \\
- \varepsilon^2 \bigg(  \partial_{\theta_1} ( g  X_2({x}_1,\theta_1)\big)  &+ \partial_{\theta_2} \big(g X_2(x_2,\theta_2)\big)\bigg)\\
+\varepsilon^3 \bigg( \partial_{{t}} {g} - \nabla_{{x}_1}  \cdot  (g &X_1({x}_1,\theta_1)) -\nabla_{x_2}  \cdot  (g X_1(x_2,\theta_2)) \\
- \partial_{\theta_1} \big(g \partial_{\theta} {U}^1(\theta_1)&\big) - \partial_{\theta_2}  (g \partial_{\theta} {U}^1(\theta_2)\big)\\
- d \nabla_{{x}_1} \cdot (g &\frac{\nabla_{{x}} f}{f}({x}_1,\theta_1)) - d \nabla_{x_2} \cdot (g \frac{\nabla_{{x}} f}{f}(x_2,\theta_2)) \bigg) = 0.
 \end{split}
 \end{equation}
  
From now on, we note $f^\varepsilon = \tilde{f}$ and $g^\varepsilon = \tilde{g}$. The following proposition holds: 

\begin{proposition}\label{prop1}
Assuming $f^\varepsilon$ and $g^\varepsilon$ exist, then, formally, they satisfy:
\begin{equation}\label{MF1H3}
\begin{split}
- &\xi \partial_{\theta} \bigg( \partial_{\theta} \Phi[f^\varepsilon](x,\theta) f^\varepsilon\bigg)-  d \partial^2_{\theta} f^\varepsilon \\
&+ \varepsilon \bigg[\partial_t f^\varepsilon - \nabla_{x} \cdot  (\nabla_{x} U^0 f^\varepsilon)  - \partial_{\theta}\bigg( \bigg[\partial_{\theta} U^1 + \xi G[f^\varepsilon](x,\theta) \bigg] f^\varepsilon\bigg) - d \Delta_x f^\varepsilon\bigg] = O(\varepsilon^2),
 \end{split}
 \end{equation}
\noindent and 
\begin{equation}\label{g}
\begin{split}
g^\varepsilon(x_1,\theta_1,\ell_1,x_2,\theta_2,\ell_2) = \frac{\nu_f}{\nu_d} f^\varepsilon(x_1,\theta_1)f^\varepsilon(x_2,\theta_2)\delta_{\bar{\ell}(x_1,\theta_1,x_2,\theta_2)}(\ell_1)&\delta_{\bar{\ell}(x_2,\theta_2,x_1,\theta_1)}(\ell_2)\\
& + O(\varepsilon^2),
\end{split}
\end{equation}
\noindent with
\begin{empheq}[left=\empheqlbrace]{align}
\Phi[f^\varepsilon](x_1,\theta_1) &= C_{1} \int\limits_{-\frac{\pi}{2}}^{\frac{\pi}{2}} \sin^2(\theta-\theta_2) f^\varepsilon(x_1,\theta_2) \frac{d\theta_2}{\pi} \label{forcePhi}\\
G[f^\varepsilon](x_1,\theta_1) &= C_{2} \sum_{i,j=1}^2 \frac{\partial^2}{\partial x_i \partial x_j} \int\limits_{-\frac{\pi}{2}}^{\frac{\pi}{2}} f^\varepsilon(x_1,\theta_2)B_{ij}(\theta_1,\theta_2)\frac{d\theta_2}{\pi} \label{forceG}, \\
C_1 &= \frac{\alpha L^2\nu_f}{2 \nu_d} \; , \; C_{2} = \frac{\alpha L^4\nu_f}{48 \nu_d} \label{coeffC1C2},
\end{empheq}
\noindent and
\begin{equation}\label{Bteta}
B(\theta_1,\theta_2) =  \sin 2(\theta_1 - \theta_2) [\omega(\theta_1) \otimes \omega(\theta_1) + \omega(\theta_2) \otimes \omega(\theta_2)] = \big(B_{ij}(\theta_1,\theta_2)\big)_{i,j=1,2}.
\end{equation}
\end{proposition}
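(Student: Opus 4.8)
The plan is to treat \eqref{systepsf}--\eqref{systgeps} as a singular perturbation problem and extract, order by order in $\varepsilon$, first the algebraic closure \eqref{g} for $g^\varepsilon$ and then the effective equation \eqref{MF1H3} for $f^\varepsilon$. Everything rests on evaluating the three force functionals $F_1$, $F_{link}$, $F_{al}$ on the leading-order closure, so I would compute those first and then feed the results into both equations.

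\emph{Closure for $g^\varepsilon$.} Writing $S(g^\varepsilon)=\nu_f f^\varepsilon f^\varepsilon \delta_{\bar\ell}\delta_{\bar\ell}-\nu_d g^\varepsilon$, the $O(1)$ balance of \eqref{systgeps} is $S(g^\varepsilon)=O(\varepsilon)$, which immediately gives $g^\varepsilon=\frac{\nu_f}{\nu_d}f^\varepsilon f^\varepsilon\delta_{\bar\ell}\delta_{\bar\ell}+O(\varepsilon)$. To upgrade the remainder to the $O(\varepsilon^2)$ claimed in \eqref{g}, I would observe that the first correction term in \eqref{systgeps} carries the factor $F_{al}$, and that $F_{al}$ evaluated on the leading closure is itself $O(\varepsilon)$ (established below); hence that correction is $O(\varepsilon^2)$. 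This is the step that makes the closure clean: the naively $O(\varepsilon)$ correction is absent because the alignment force, measured on a single crossing, is already one order smaller than expected.

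\emph{Evaluation of the forces on $g^{(0)}=\frac{\nu_f}{\nu_d}f^\varepsilon f^\varepsilon\delta_{\bar\ell}\delta_{\bar\ell}$.} The two Dirac masses fix $\ell_1,\ell_2$ to the crossing-point distances $\bar\ell(x_1,\theta_1,x_2,\theta_2)$ and $\bar\ell(x_2,\theta_2,x_1,\theta_1)$; on this set the two attachment points coincide, so $V=0$, $\nabla_{x_1}V=0$ and $\partial_{\theta_1}V=0$ identically. Consequently $F_1[g^{(0)}]=0$ and $F_{link}[g^{(0)}]=0$ \emph{exactly}, so by \eqref{g} both are $O(\varepsilon^2)$ and drop out of the orders we keep. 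For $F_{al}$ the factor $\partial_{\theta_1}\bar b$ does not vanish, so the contribution survives. After integrating out $\ell_1,\ell_2$, the remaining integral over $x_2$ is restricted by the intersection condition to a parallelogram $P(\theta_1,\theta_2)$ of offsets $u=x_2-x_1$ spanned by $\pm\tfrac{\sqrt\varepsilon L}{2}\omega(\theta_1)$ and $\pm\tfrac{\sqrt\varepsilon L}{2}\omega(\theta_2)$, of area $\varepsilon L^2|\sin(\theta_1-\theta_2)|$. Taylor expanding $f^\varepsilon(x_1+u,\theta_2)$ in $u$, the zeroth moment (the area) produces the $O(\varepsilon)$ term and, using $\partial_{\theta_1}\bar b\,|\sin(\theta_1-\theta_2)|=\tfrac{\alpha}{2}\sin 2(\theta_1-\theta_2)$, reproduces $f^\varepsilon\,\partial_\theta\Phi[f^\varepsilon]$ with $C_1=\tfrac{\alpha L^2\nu_f}{2\nu_d}$; the first moment vanishes by symmetry; and the second moment $\int_P u\otimes u\,du=\tfrac{\varepsilon^2 L^4}{12}|\sin(\theta_1-\theta_2)|\big(\omega(\theta_1)\otimes\omega(\theta_1)+\omega(\theta_2)\otimes\omega(\theta_2)\big)$ contracted against $D^2_x f^\varepsilon$ yields the $O(\varepsilon^2)$ term $f^\varepsilon\,G[f^\varepsilon]$, with exactly the tensor $B$ of \eqref{Bteta} and $C_2=\tfrac{\alpha L^4\nu_f}{48\nu_d}$. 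Thus $F_{al}=\varepsilon\,f^\varepsilon\partial_\theta\Phi[f^\varepsilon]+\varepsilon^2 f^\varepsilon G[f^\varepsilon]+O(\varepsilon^3)$.

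\emph{Assembling \eqref{MF1H3}.} I would substitute $F_{al}=\varepsilon f^\varepsilon\partial_\theta\Phi+\varepsilon^2 f^\varepsilon G+O(\varepsilon^3)$ together with $F_1,F_{link}=O(\varepsilon^2)$ into \eqref{systepsf}, use $\nabla_x U=\nabla_x U^0$ and $\partial_\theta U=\partial_\theta U^1$, divide through by $\varepsilon$, and collect powers: the $O(1)$ part gives $-\xi\partial_\theta(\partial_\theta\Phi[f^\varepsilon]\,f^\varepsilon)-d\partial^2_\theta f^\varepsilon$, while at $O(\varepsilon)$ the term $-\xi\partial_\theta(f^\varepsilon G)$ merges with $-\partial_\theta(\partial_\theta U^1 f^\varepsilon)$ into $-\partial_\theta\big((\partial_\theta U^1+\xi G)f^\varepsilon\big)$, the remaining transport and spatial-diffusion terms completing the bracket, with $O(\varepsilon^2)$ error. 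I expect the main obstacle to be the second-order part of the $F_{al}$ computation: parametrizing the crossing parallelogram correctly, evaluating $\int_P u\otimes u\,du$, and verifying that the trigonometric and directional factors collapse to precisely $\sin 2(\theta_1-\theta_2)\big(\omega(\theta_1)\otimes\omega(\theta_1)+\omega(\theta_2)\otimes\omega(\theta_2)\big)$ with coefficient $C_2$. The secondary difficulty is purely bookkeeping: tracking the $\varepsilon$-powers introduced by the rescaled Dirac masses and the shrinking $\ell$-domain $L^\varepsilon$, which is what ultimately certifies that $F_1$ and $F_{link}$ are genuinely negligible while $F_{al}$ contributes at two consecutive orders.
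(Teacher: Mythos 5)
Your proposal follows essentially the same route as the paper's proof: the $O(1)$ balance of \eqref{systgeps} giving the closure \eqref{ginit}, the exact vanishing of $\nabla_{x_1}V$ and $\partial_{\theta_1}V$ on the constraint set to kill $F_1$ and $F_{link}$, the change of variables to the crossing parallelogram with Jacobian $\frac{L^2\varepsilon}{4}|\sin(\theta_1-\theta_2)|$ plus Taylor expansion of $f^\varepsilon$ to extract $\Phi$ and $G$, and the same bootstrap (since $F_{al}=O(\varepsilon)$, $S(g^\varepsilon)=O(\varepsilon^2)$, upgrading the remainder in \eqref{g}). The argument is correct and matches the paper's in all essential steps, including the bookkeeping of the $O(\varepsilon)$ measure of the $\ell$-integration domain.
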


\begin{remark}
 In the proof of proposition~\ref{prop1}, we will show that
\begin{empheq}[left=\empheqlbrace]{align}
F^\varepsilon_1(x_1,\theta_1) &= O(\varepsilon^3), \label{F1SG0}\\
F^\varepsilon_{link}(x_1,\theta_1&) =O(\varepsilon^3),\label{F2SG0}\\
F^\varepsilon_{al}(x_1,\theta_1)&=\varepsilon \partial_{\theta_1} \Phi[f^\varepsilon](x_1,\theta_1) + \varepsilon^2 G[f^\varepsilon](x_1,\theta_1)+ O(\varepsilon^3).\nonumber
\end{empheq}
\end{remark}

\medskip
\noindent
The proof of this proposition is given in section~\ref{sectproof}. From these equations, one notes that the hypothesis of dominant creation/deletion of links makes the reaction forces $F_1$ and $F_{link}$ of order $O(\varepsilon^3)$. In this case, the process of linking/unlinking is so fast that the constraint is satisfied at all times. Moreover, under this assumption, the first contribution of the alignment force acting on a fiber is the sum of elementary alignment forces generated by its intersecting fibers, weighted by $\frac{\nu_f}{\nu_d}$. One also notes that the alignment force $F_{al}$ is local in space.

Under these scaling assumptions, the leading order of the left-hand side of Eq.~\eqref{MF1H3} takes the form of a collision operator of kinetic theory. It acts on the orientation vector $\theta$ only and it expresses that the alignment potential~\eqref{align} is counter-balanced by the diffusion term which tends to spread the particles isotropically on the sphere.  The other terms act at lower order $\varepsilon$. 

 As the large scale limit involves an expansion of the solution around a local equilibrium, the study of the local equilibria of the collision operator are of key importance. Therefore, section~\ref{sec:LSL} will be dedicated to the study of the properties of the left-hand side of~\eqref{MF1H3}. 

\subsection{Proof of proposition~\ref{prop1}}\label{sectproof}

\begin{proof}
From Eq.~\eqref{systgeps}, one notes that the source term $S(g^\varepsilon)$ is of order $O(\varepsilon)$. Thus:
\begin{equation}\label{ginit}
\begin{split}
g^\varepsilon(x_1,\theta_1,\ell_1,x_2,\theta_2,\ell_2) = \frac{\nu_f}{\nu_d} f^\varepsilon(x_1,\theta_1)f^\varepsilon(x_2,\theta_2)\delta_{\bar{\ell}(x_1,\theta_1,x_2,\theta_2)}(\ell_1)&\delta_{\bar{\ell}(x_2,\theta_2,x_1,\theta_1)}(\ell_2) \\
&+ O(\varepsilon).
\end{split}
\end{equation}
\noindent Inserting this expression into the relations for $F^\varepsilon_1$ and $F^\varepsilon_{link}$ and $F^\varepsilon_{al}$ (see Eqs.~\eqref{F1}-\eqref{F2}), one obtains (dropping the tildes for the new variables, and denoting $\tilde{V} = \tilde{V}(x_1,\theta_1,\ell_1,x_2,\theta_2,\ell_2)$ and $b^\varepsilon = b^\varepsilon(\theta_1,\theta_2)$): 
\begin{empheq}[left=\empheqlbrace]{align}%\begin{equation}\label{3F}
F^\varepsilon_1 = \frac{\nu_f f^\varepsilon(x_1,\theta_1)}{ \nu_d}  \int_{L^\varepsilon} \bigg(\nabla_{x_1} \tilde{V} &f^\varepsilon(x_2,\theta_2)\delta_{\bar{\ell}(x_1,\theta_1,x_2,\theta_2)}(\ell_1)\nonumber\\
&\delta_{\bar{\ell}(x_2,\theta_2,x_1,\theta_1)}(\ell_2) 
+ O(\varepsilon) \bigg)dx_2 \frac{d\theta_2}{\pi} d\ell_1 d\ell_2, \nonumber\\
F^\varepsilon_{link} = \frac{\nu_f f^\varepsilon(x_1,\theta_1)}{\nu_d} \int_{L^\varepsilon} \bigg( \partial_{\theta_1} \tilde{V}& f^\varepsilon(x_2,\theta_2)\delta_{\bar{\ell}(x_1,\theta_1,x_2,\theta_2)}(\ell_1)\nonumber\\
&\delta_{\bar{\ell}(x_2,\theta_2,x_1,\theta_1)}(\ell_2)
+ O(\varepsilon) \bigg)dx_2 \frac{d\theta_2}{\pi} d\ell_1 d\ell_2, \label{3F}\\
F^\varepsilon_{al} = \frac{\nu_f f^\varepsilon(x_1,\theta_1)}{\nu_d} \int_{L^\varepsilon} \bigg( \partial_{\theta_1} b^\varepsilon &f^\varepsilon(x_2,\theta_2)\delta_{\bar{\ell}(x_1,\theta_1,x_2,\theta_2)}(\ell_1)\nonumber\\
&\delta_{\bar{\ell}(x_2,\theta_2,x_1,\theta_1)}(\ell_2) + O(\varepsilon) \bigg) dx_2 \frac{d\theta_2}{\pi} d\ell_1 d\ell_2\nonumber
\end{empheq}
\noindent We note that if $\phi(x_1,\theta_1,\ell_1,x_2,\theta_2,\ell_2) \in L^\infty(L^\varepsilon)$ with sufficient decay at infinity, then 
$$
\int_{L^\varepsilon} \nabla_{x_1} \tilde{V} \, \phi \, dx_2 \, \frac{d\theta_2}{\pi}d\ell_1d\ell_2 \leq \varepsilon C, 
$$ 
since the measure of $L^\varepsilon$ intersected with any compact set of $\mathbb{R}^2 \times [-\pi/2,\pi/2] \times \mathbb{R}^2$ is of order $\varepsilon$. Indeed, the domain of integration with respect to $\ell_1$ or $\ell_2$ has a measure of order $\varepsilon$. Thus, assuming that the $O(\varepsilon)$ remainder in (\ref{ginit}) is an $L^\infty$ function, which is legitimate in view of the diffusive character of (\ref{systgeps}), we get:
\begin{empheq}[left=\empheqlbrace]{align}
F^\varepsilon_1 = \bigg( \int\limits_{K^\varepsilon(x_1,\theta_1)} \nabla_{x_1} \tilde{V}&(x_1,\theta_1,\bar{\ell}(x_1,\theta_1,x_2,\theta_2),x_2,\theta_2,\bar{\ell}(x_2,\theta_2,x_1,\theta_1))\nonumber\\
& f^\varepsilon(x_2,\theta_2)) dx_2 \frac{d\theta_2}{\pi}\bigg) \frac{\nu_f f^\varepsilon(x_1,\theta_1)}{\nu_d}
+ O(\varepsilon^2),\nonumber\\
\nonumber\\
F^\varepsilon_{link} = \bigg( \int\limits_{K^\varepsilon(x_1,\theta_1)}  \partial_{\theta_1} \tilde{V}&(x_1,\theta_1,\bar{\ell}(x_1,\theta_1,x_2,\theta_2),x_2,\theta_2,\bar{\ell}(x_2,\theta_2,x_1,\theta_1))\nonumber\\
& f^\varepsilon(x_2,\theta_2) dx_2 \frac{d\theta_2}{\pi}\bigg)\frac{\nu_f f^\varepsilon(x_1,\theta_1)}{\nu_d}+ O(\varepsilon^2), \label{3Fsuite}\\
\nonumber\\
F^\varepsilon_{al} =\frac{\nu_f f^\varepsilon(x_1,\theta_1)}{\nu_d}\int\limits_{K^\varepsilon(x_1,\theta_1)}& \partial_{\theta_1} b(\theta_1,\theta_2)f^\varepsilon(x_2,\theta_2) dx_2 \frac{d\theta_2}{\pi}+ O(\varepsilon^2),\nonumber
\end{empheq}
\noindent where $K^\varepsilon(x_1,\theta_1)$ is the set of  fibers intersecting fiber in $(x,\theta)$, given by:
\begin{equation}\label{setK}
K^\varepsilon(x_1,\theta_1) = \{(x_2,\theta_2) \; | \; |\bar{\ell}(x_1,\theta_1,x_2,\theta_2)|\leq \sqrt{\varepsilon} L/2 \; , |\bar{\ell}(x_2,\theta_2,x_1,\theta_1)|\leq \sqrt{\varepsilon} L/2 \}.
\end{equation}
\noindent From the fact that $\tilde{V}$ is a quadratic function of $x_1 + \ell_1 \omega(\theta_1) - x_2 - \ell_2 \omega(\theta_2)$ and the fact that setting $\ell_1 = \bar{\ell}(x_1,\theta_1,x_2,\theta_2)$ and $\ell_2 = \bar{\ell}(x_2,\theta_2,x_1,\theta_1)$ just cancels this expression, one immediately notes that:
\begin{empheq}[left=\empheqlbrace]{align*}
\nabla_{x_1} \tilde{V}(x_1,\theta_1,\bar{\ell}(x_1,\theta_1,x_2,\theta_2),x_2,\theta_2,\bar{\ell}(x_2,\theta_2,x_1,\theta_1)) &= 0,\\
 \partial_{\theta_1} \tilde{V}(x_1,\theta_1,\bar{\ell}(x_1,\theta_1,x_2,\theta_2),x_2,\theta_2,\bar{\ell}(x_2,\theta_2,x_1,\theta_1))& =0.
 \end{empheq}
\noindent So, finally:
\begin{equation}\label{F1F2s}
F^\varepsilon_1 = O(\varepsilon^2),\quad
F^\varepsilon_{links,2}=O(\varepsilon^2). 
\end{equation}
\noindent We are left with:
\begin{equation}\label{Fals}
F^\varepsilon_{al} =  \frac{\nu_f}{\nu_d} f^\varepsilon(x_1,\theta_1)\int\limits_{K^\varepsilon(x_1,\theta_1)} \partial_{\theta_1} b(\theta_1,\theta_2) f^\varepsilon(x_2,\theta_2) dx_2 \frac{d\theta_2}{\pi} + O(\varepsilon^2).
\end{equation}
\noindent From now on, we write $\omega_1 = \omega(\theta_1)$ and $\omega_2 = \omega(\theta_2)$. By the change of variables $x_2 \mapsto (s_1,s_2)$ defined by
$$
x_2=x_1+\frac{\sqrt{\varepsilon} L}{2}s_1\omega_1 - \frac{\sqrt{\varepsilon} L}{2} s_2 \omega_2, 
$$ 
with associated Jacobian 
\begin{equation*}
J_{x_2} = \frac{L \sqrt{\varepsilon}}{2}\begin{pmatrix}
 \cos\theta_1 & -\cos\theta_2  \\ \sin \theta_1 & -\sin \theta_2 \end{pmatrix},
 \end{equation*}
 \noindent and $|\det(J_{x_2})| = \frac{L^2 \varepsilon}{4} |\sin(\theta_1-\theta_2)|$, we have: 
\begin{equation*}
\begin{split}
\displaystyle  F^\varepsilon_{al}(x_1,\theta_1) = \varepsilon C(x_1,\theta_1)\int\limits_{-\frac{\pi}{2}}^{\frac{\pi}{2}} \int\limits_{|s_1|,|s_2|\leq 1}  |\sin(\theta_1-\theta_2)|  \partial_{\theta_1} b(\theta_1,\theta_2)&\\
 f^\varepsilon(x_1 + \frac{\sqrt{\varepsilon} L}{2}s_1\omega_1 - \frac{\sqrt{\varepsilon} L}{2}&s_2\omega_2,\theta_2)
ds_1 ds_2 \frac{d\theta_2}{\pi}\\
& + O(\varepsilon^2),
\end{split}
\end{equation*}
\noindent where $C(x_1,\theta_1) = \frac{L^2 \nu_f f^\varepsilon(x_1,\theta_1)}{4\nu_d} $. Thanks to~\eqref{b} with $\beta=1$, one notes that\\
$\partial_{\theta_1} b(\theta_1, \theta_2) = \alpha \partial_{\theta_1} |\sin(\theta_1-\theta_2)| $, and then, $|\sin(\theta_1 - \theta_2)| \partial_{\theta_1} b(\theta_1, \theta_2) =  \frac{\alpha}{2}\partial_{\theta_1} \sin^2(\theta_1-\theta_2)$. Then,
\begin{equation}\label{F2taylor}
\begin{split}
\displaystyle  F^\varepsilon_{al}(x_1,\theta_1) =   \frac{\varepsilon\alpha}{2} C(x_1,\theta_1) \int\limits_{-\frac{\pi}{2}}^{\frac{\pi}{2}} \int\limits_{|s_1|,|s_2|\leq 1} \partial_{\theta_1} \sin^2(\theta_1-\theta_2) &\\ 
 f^\varepsilon(x_1 + \frac{\sqrt{\varepsilon} L}{2}s_1\omega_1 - \frac{\sqrt{\varepsilon} L}{2}&s_2\omega_2,\theta_2)
 ds_1 ds_2 \frac{d\theta_2}{\pi}\\
 &+O(\varepsilon^2). 
\end{split}
\end{equation}
\noindent By Taylor expansion, we have:
\begin{equation*}
\begin{split}
f^\varepsilon(x_1+ \frac{\sqrt{\varepsilon} L}{2} s_1\omega_1 - \frac{\sqrt{\varepsilon} L}{2} &s_2\omega_2, \theta_2) = f^\varepsilon(x_1,\theta_2) + \frac{\sqrt{\varepsilon}L}{2} \nabla_{x} f^\varepsilon(x_1, \theta_2).(s_1\omega_1-s_2\omega_2) \\
+ \frac{\varepsilon L^2}{4} &(s_1\omega_1-s_2\omega_2)^T \nabla_{x}^2 f^\varepsilon(x_1, \theta_2) (s_1\omega_1-s_2\omega_2) \\
&+ O((\frac{\sqrt{\varepsilon}L}{2}|s_1\omega_1-s_2\omega_2|)^3),
\end{split}
\end{equation*}
\noindent where $\nabla_{x}^2 f^\varepsilon$ is the spatial-hessian matrix of $f^\varepsilon$ ($(\nabla_{x}^2 f)_{ij} = \frac{\partial^2 f}{\partial x_i \partial x_j}$), and for any vector $a$ of $\mathbb{R}^2$ and any $2 \times 2$ matrix B : $a^T B a = \sum_{(i,j)\in [1,2]^2} B_{ij} a_j a_i$. Integrating over $s_1,s_2 \in [-1,1]$, the odd terms with respect to either $s_1$ or $s_2$ vanish. Therefore: 
\begin{equation}\label{taylor1}
\begin{split}
\displaystyle \int\limits_{-\frac{\pi}{2}}^{\frac{\pi}{2}}& \int\limits_{|s_1|,|s_2| \leq 1} \partial_{\theta_1} \sin^2(\theta_1-\theta_2) f^\varepsilon(x_1+ \frac{\sqrt{\varepsilon}L}{2} s_1\omega_1 - \frac{\sqrt{\varepsilon}L}{2} s_2\omega_2, \theta_2) ds_1ds_2 \frac{d\theta_2}{\pi}\\
= 4 &\int\limits_{-\frac{\pi}{2}}^{\frac{\pi}{2}}  \partial_{\theta_1} \sin^2(\theta_1-\theta_2) f^\varepsilon(x_1,\theta_2) \frac{d\theta_2}{\pi} \\ 
&+ \frac{\varepsilon L^2}{6} \int\limits_{-\frac{\pi}{2}}^{\frac{\pi}{2}}  \partial_{\theta_1} \sin^2(\theta_1-\theta_2) \nabla_{x}^2{f^\varepsilon}(x_1,\theta_2):[\omega_1 \otimes \omega_1 + \omega_2 \otimes \omega_2] \frac{d\theta_2}{\pi} + O(\varepsilon^2), 
\end{split}
\end{equation}
\noindent where $\forall A,B \in \mathbb{R}^2\; ,\;  A:B = \sum_{i,j \in [1,2]} A_{ij} B_{ij}$ and for any vectors $\omega,\omega' \in \mathbb{R}^2$, we write $(\omega \otimes \omega')_{ij} = \omega_{i} \omega'_{j} $. Then: 
\begin{equation}\label{taylor2}
\begin{split}
\int\limits_{-\frac{\pi}{2}}^{\frac{\pi}{2}}  \partial_{\theta_1} \sin^2(\theta_1-\theta_2) \nabla_{x_1}^2 &f^\varepsilon(x_1,\theta_2):[\omega_1 \otimes \omega_1 + \omega_2 \otimes \omega_2]  \frac{d\theta_2}{\pi}\\
& = \sum_{(i,j)=1}^2 \frac{\partial^2}{\partial x_i x_j} \int\limits_{-\frac{\pi}{2}}^{\frac{\pi}{2}} f^\varepsilon(x_1,\theta_2) B_{ij}(\theta_1,\theta_2) \frac{d\theta_2}{\pi}, 
\end{split}
\end{equation} 
\noindent where:
\begin{equation*}
B_{ij}(\theta_1,\theta_2) = [\omega_i(\theta_1) \omega_j(\theta_1) + \omega_i(\theta_2)\omega_j(\theta_2)]\sin(2(\theta_1-\theta_2)).
\end{equation*}
\noindent  A first consequence of what precedes is that $F^\varepsilon_{al} = O(\varepsilon)$. Therefore, $S(g^\varepsilon) = O(\varepsilon^2)$ (instead of formally $O(\varepsilon)$ as seen from Eq.~\eqref{systgeps}). As a consequence, the remainder in~\eqref{ginit} is $O(\varepsilon^2)$ instead of being $O(\varepsilon)$, and the same is true for the remainders in~\eqref{3F}. Consequently, the remainders in~\eqref{3Fsuite} are $O(\varepsilon^3)$ instead of being $O(\varepsilon^2)$ as before. It follows that the remainders in~\eqref{F1F2s}-\eqref{Fals} are $O(\varepsilon^3)$ as well. Then, inserting~\eqref{taylor1} and~\eqref{taylor2} into~\eqref{Fals} (with remainder $O(\varepsilon^2)$), we get~\eqref{g}-\eqref{MF1H3}, which ends the proof.
\end{proof}

From now on, we focus on Eq.~\eqref{MF1H3} in which we neglect the $O(\varepsilon^2)$ terms, namely
\begin{equation}\label{MF1H3b}
\begin{split}
- \xi \partial_{\theta} \bigg(& \partial_{\theta} \Phi[f^\varepsilon](x,\theta) f^\varepsilon\bigg)-  d \partial^2_{\theta} f^\varepsilon \\
&+ \varepsilon \bigg[\partial_t f^\varepsilon - \nabla_{x} \cdot  (\nabla_{x} U^0 f^\varepsilon)  - \partial_{\theta}\bigg( \bigg[\partial_{\theta} U^1 + \xi G[f^\varepsilon](x,\theta) \bigg] f^\varepsilon\bigg) - d \Delta_x f^\varepsilon\bigg] = 0,
 \end{split}
 \end{equation}
\noindent where $\Phi$ and $G$ are given by~\eqref{forcePhi}-\eqref{forceG} respectively, and we investigate the limit $\varepsilon \rightarrow 0$. This is the object of the next section.

\setcounter{equation}{0}
\section{Large scale limit}
\label{sec:LSL}

In this section, the limit $\varepsilon \rightarrow 0$ of the solution $f^\varepsilon$ to~\eqref{MF1H3b} is explored. For this purpose, Eq.~\eqref{MF1H3b} is rewritten
\begin{equation}\label{pbf}
\partial_t f^\varepsilon - \nabla_x  \cdot  (\nabla_xU^0 f^\varepsilon) - \partial_\theta \big((\partial_\theta U^1 +  \xi G[f^\varepsilon])f^\varepsilon\big) - d \Delta_x f^\varepsilon = \frac{1}{\varepsilon} Q(f^{\varepsilon}),
\end{equation} 
\noindent where the collision operator $Q(f^\varepsilon)$ is defined by
\begin{align}
Q(f) = d\partial^2_{\theta} f + \xi \partial_{\theta} (\partial_{\theta} \Phi[f])f), \label{Q}\\
\Phi[f] = C_1 \int_{-\frac{\pi}{2}}^{\frac{\pi}{2}} \sin^2(\theta - \theta_2) f \frac{d\theta_2}{\pi}, \label{Phi}
\end{align}
 \noindent and where we recall that $C_1$ and $G[f]$ are defined by~\eqref{forceG} and~\eqref{coeffC1C2} respectively. The operator $Q$ is a non linear operator on $f$ which acts on $\theta$ only and leaves $x$ and $t$ as parameters. For each function $\Phi(\theta)$, we define  $M_\Phi(\theta)$ by: 
\begin{equation}\label{Mphi}
M_{\Phi}(\theta) = \frac{1}{Z} e^{- \xi \Phi(\theta)/d},
\end{equation}
\noindent where $Z$ is a normalization factor such that $Z = \int_{-\frac{\pi}{2}}^{\frac{\pi}{2}} e^{- \xi \Phi(\theta)/d} \frac{d\theta}{\pi}$. Thus, $M_{\Phi}(\theta)$ is a probability distribution of $\theta$. Such functions are called generalized Von Mises distributions (the Von Mises distribution being the case of $\Phi(\theta) = -\cos \theta$). The next section is devoted to the analysis of the properties of $Q(f)$ and follows closely Ref. \cite{Degond_etal_JSP13}.
\subsection{Properties of $Q$ }
\subsubsection{Equilibria}
In this section, the equilibria of the operator $Q$ are studied, and the following proposition is proven:
\begin{proposition}\label{prop2}
Here, we restrict ourselves to functions of $\theta$ only.

\noindent (i) The operator $Q$ can be written: 
\begin{equation}\label{qf}
Q(f) = d \partial_{\theta} \bigg(M_{\Phi[f]} \partial_{\theta} (\frac{f}{M_{\Phi[f]}})\bigg).
\end{equation}

\noindent (ii) The equilibrium solutions of $Q$, i.e.\ the functions $f$ such that $Q(f) = 0$ are of the form $f(\theta) = \rho M_{\Phi[f]}$, where $M_{\Phi[f]}$ is defined by Eq.~\eqref{Mphi} and $\rho$ is a positive constant. 
\end{proposition}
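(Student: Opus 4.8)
The plan is to treat both parts by freezing the function $f$, so that the functional $\Phi[f]$ becomes a fixed $\pi$-periodic function of $\theta$ and $M_{\Phi[f]}$ a fixed positive weight; the nonlinearity enters only through this frozen dependence, which does not interfere with the differential identities to be established.

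For part (i), I would start from the logarithmic derivative of the equilibrium weight. Since $M_{\Phi[f]} = \frac{1}{Z} e^{-\xi \Phi[f]/d}$, differentiating gives the identity $\partial_\theta M_{\Phi[f]} = -\frac{\xi}{d} M_{\Phi[f]} \, \partial_\theta \Phi[f]$. I would then expand the claimed conservative form: $M_{\Phi[f]} \, \partial_\theta(f/M_{\Phi[f]}) = \partial_\theta f - f \, \partial_\theta M_{\Phi[f]}/M_{\Phi[f]} = \partial_\theta f + \frac{\xi}{d} f \, \partial_\theta \Phi[f]$. Applying $d\,\partial_\theta$ to this reproduces $d\,\partial^2_\theta f + \xi \, \partial_\theta(\partial_\theta \Phi[f]\, f)$, which is exactly $Q(f)$ as defined in~\eqref{Q}. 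This is a one-line verification once the logarithmic-derivative identity is in place.

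For part (ii), I would argue both implications. The ``if'' direction is immediate from (i): when $f = \rho M_{\Phi[f]}$ the ratio $f/M_{\Phi[f]} = \rho$ is constant, so its $\theta$-derivative vanishes and the conservative form gives $Q(f) = 0$. For the ``only if'' direction, I would use the standard entropy-dissipation computation: multiply $Q(f) = 0$, written in the form (i), by the weight $f/M_{\Phi[f]}$ and integrate over $[-\frac{\pi}{2},\frac{\pi}{2}]$ against $\frac{d\theta}{\pi}$. Integration by parts, with the boundary terms cancelling by the $\pi$-periodicity of both $f$ and $M_{\Phi[f]}$ (the latter inherited from the $\pi$-periodicity of $\sin^2$ in $\Phi[f]$), yields $d \int_{-\frac{\pi}{2}}^{\frac{\pi}{2}} M_{\Phi[f]} \big(\partial_\theta(f/M_{\Phi[f]})\big)^2 \frac{d\theta}{\pi} = 0$. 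Since $M_{\Phi[f]} > 0$ everywhere, the integrand is non-negative and must vanish identically, so $\partial_\theta(f/M_{\Phi[f]}) = 0$, hence $f/M_{\Phi[f]}$ equals a constant $\rho$; positivity of $f$ forces $\rho \geq 0$, and $\rho > 0$ for a nontrivial equilibrium.

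I do not expect a serious obstacle in either part; the only point requiring care is the bookkeeping of the nonlinear dependence through $\Phi[f]$. It should be stressed that the characterization $f = \rho M_{\Phi[f]}$ is implicit, since $\Phi[f]$ still depends on $f$, so the proposition identifies the \emph{form} of the equilibria rather than solving the resulting fixed-point equation for $M_{\Phi[f]}$; the existence and multiplicity of such self-consistent solutions is a separate question. The other mild point is the justification of the integration by parts (smoothness and integrability of $f$ and $M_{\Phi[f]}$), which is harmless given the diffusive regularization already present in the problem.
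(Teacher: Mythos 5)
Your proposal is correct and follows essentially the same route as the paper: part (i) via the logarithmic-derivative identity $\partial_\theta \log M_{\Phi[f]} = -\frac{\xi}{d}\partial_\theta \Phi[f]$, and part (ii) via the entropy-dissipation argument (multiply by $f/M_{\Phi[f]}$, integrate by parts using periodicity, and conclude from positivity of $M_{\Phi[f]}$ that $\partial_\theta(f/M_{\Phi[f]})=0$). The only cosmetic difference is that the paper packages the integrability assumptions into weighted spaces $H_f$, $V_f$ before applying Green's formula, a point you flag informally at the end.
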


This proposition shows that the equilibria of operator $Q$ are generalized Von Mises distributions of $\theta$, weighted by the particle density.

\begin{proof} To prove (i), one can note that:   
\begin{equation*}
\begin{split}
d \partial_{\theta} \bigg(M_{\Phi[f]} \partial_{\theta} (\frac{f}{M_{\Phi[f]}})\bigg) = d \partial_{\theta} \bigg(\partial_{\theta} f - f \partial_{\theta} (\log(M_{\Phi[f]}))\bigg) &= \partial_\theta \bigg(d \partial_{\theta} f + \xi \partial_{\theta} \Phi[f] f\bigg) \\
&= Q(f). 
\end{split}
\end{equation*}
\noindent To prove (ii), note that $f=\rho M_{\Phi[f]}$ is solution of~\eqref{qf}. Conversely, suppose that $f$ is such that 
$$d \partial_{\theta}\bigg(M_{\Phi[f]} \partial_{\theta} (\frac{f}{M_{\Phi[f]}})\bigg)=0. $$ 
We define the sets $H_f$ and $V_f$ by:
\begin{equation*}
 H_f = \{\phi \text{ measurable on } [-\frac{\pi}{2},\frac{\pi}{2}] \,  \, | \, \int_{-\frac{\pi}{2}}^{\frac{\pi}{2}} \bigg|\frac{\phi}{M_{\Phi[f]}}\bigg|^2 M_{\Phi[f]} \frac{d\theta}{\pi}< +\infty \},
 \end{equation*}
 \noindent and
 \begin{equation*}
  V_f = \{\phi \in H \, | \,  \int_{-\frac{\pi}{2}}^{\frac{\pi}{2}} \bigg|\partial_\theta (\frac{\phi}{M_{\Phi[f]}})\bigg|^2 M_{\Phi[f]} \frac{d\theta}{\pi}< +\infty \}.
  \end{equation*}
  \noindent The norms  $\| \cdot \|_{H_f}$, $\| \cdot \|_{V_f}$ on $H_f$ and $V_f$ are then defined such that: 
\begin{equation*}
\|\phi\|^2_{V_f} = \|\phi\|^2_{H_f} + |\phi|^2_{V_f}.
\end{equation*}
\noindent where 
\begin{equation*}
\|\phi\|_{H_f} = \int_{-\pi/2}^{\pi/2} \bigg|\frac{\phi}{M_{\Phi[f]}}\bigg|^2 M_{\Phi[f]} \frac{d\theta}{\pi},
\end{equation*}
\noindent and
\begin{equation*}
|\phi|_{V_f} = \int_{-\pi/2}^{\pi/2} \bigg|\partial_\theta (\frac{\phi}{M_{\Phi[f]}})\bigg|^2 M_{\Phi[f]} \frac{d\theta}{\pi}.
\end{equation*}
\noindent For $f \in V_f$ using Green's formula, we get: 
\begin{equation*}
\int_{-\pi/2}^{\pi/2} d \partial_{\theta}\bigg(M_{\Phi[f]} \partial_{\theta} (\frac{f}{M_{\Phi[f]}})\bigg) \frac{f}{M_{\Phi[f]}} \frac{d\theta}{\pi}= -d \int_{-\pi/2}^{\pi/2} M_{\Phi[f]} \bigg|\partial_{\theta} (\frac{f}{M_{\Phi[f]}})\bigg|^2 \frac{d\theta}{\pi}= 0,\\
\end{equation*}
\noindent and thus, $\partial_{\theta} (\frac{f}{M_{\Phi[f]}}) = 0$. Then, $f = \rho M_{\Phi[f]}$, with $\rho > 0$, which ends the proof.
\end{proof}

Now, the following lemma is proven:
\begin{lemma}\label{lem2}
For any function $f(\theta)$, the potential function $\Phi[f](\theta)$ of Eq.~\eqref{Phi} can be written: 
\begin{equation}\label{phi}
\displaystyle\Phi[f](\theta) = C - \frac{C_1}{2} \eta_f \cos 2(\theta - \theta_f),
\end{equation}
\noindent where $C_1$ is given by~\eqref{coeffC1C2}, $C=\frac{C_1 \rho_f}{2}$, $\rho_f = \int_{-\pi/2}^{\pi/2} f \frac{d\theta}{\pi}$ and $(\eta_f, \theta_f) \in \mathbb{R}^+ \times [-\frac{\pi}{2},\frac{\pi}{2})$ are uniquely defined by:
\begin{equation*}
\begin{split}
\eta_f \begin{pmatrix}
\cos 2\theta_f \\
\sin 2\theta_f 
\end{pmatrix} = \int_{-\frac{\pi}{2}}^{\frac{\pi}{2}}  \begin{pmatrix}
\cos 2\theta' \\ \sin 2\theta' \end{pmatrix} f(\theta') \frac{d\theta'}{\pi}, 
\end{split}
\end{equation*}
\noindent or equivalently by:
\begin{equation}\label{rftetaf}
  \int_{-\frac{\pi}{2}}^{\frac{\pi}{2}} \cos 2(\theta' - \theta_f) f(\theta')  \frac{d\theta'}{\pi} = \eta_f \; , \hspace{1cm} \int_{-\frac{\pi}{2}}^{\frac{\pi}{2}} \sin 2(\theta' - \theta_f) f(\theta')  \frac{d\theta'}{\pi} = 0.
\end{equation}
\noindent Remark that the second condition is equivalent to saying that
\begin{equation*}
\theta_f = \frac{1}{2} \tan^{-1} \big(\frac{\int \sin 2\theta' f(\theta') d\theta'}{\int \cos 2\theta' f(\theta') d\theta'}\big),
\end{equation*}
\noindent and this defines $\theta_f$ uniquely modulo $\pi$.
\end{lemma}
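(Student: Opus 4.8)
The plan is to reduce everything to the elementary identity $\sin^2 u = \tfrac{1}{2}(1-\cos 2u)$, which converts the quadratic kernel into an affine function of $\cos 2(\theta-\theta_2)$. Applying it with $u=\theta-\theta_2$ in the definition~\eqref{Phi} splits $\Phi[f]$ into two pieces:
\[
\Phi[f](\theta) = \frac{C_1}{2}\int_{-\pi/2}^{\pi/2} f(\theta_2)\frac{d\theta_2}{\pi} - \frac{C_1}{2}\int_{-\pi/2}^{\pi/2}\cos 2(\theta-\theta_2)\, f(\theta_2)\frac{d\theta_2}{\pi}.
\]
The first integral is exactly $\rho_f$, so the first term is the announced constant $C=\tfrac{C_1\rho_f}{2}$.

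Next I would linearize the remaining integral by expanding $\cos 2(\theta-\theta_2)=\cos 2\theta\,\cos 2\theta_2+\sin 2\theta\,\sin 2\theta_2$, which factors the $\theta$-dependence out of the integral and leaves the two moments $\int\cos 2\theta'\, f\,\frac{d\theta'}{\pi}$ and $\int\sin 2\theta'\, f\,\frac{d\theta'}{\pi}$. These are precisely the two components of the vector on the right-hand side of the defining relation for $(\eta_f,\theta_f)$; writing that vector in polar coordinates as $\eta_f(\cos 2\theta_f,\sin 2\theta_f)$ and substituting back collapses the bracket into $\eta_f\cos 2(\theta-\theta_f)$ via the cosine addition formula. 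This gives $\Phi[f](\theta)=C-\tfrac{C_1}{2}\eta_f\cos 2(\theta-\theta_f)$, the desired form.

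Finally I would verify the equivalent characterization~\eqref{rftetaf}: inserting the polar form into $\int\cos 2(\theta'-\theta_f)f$ and $\int\sin 2(\theta'-\theta_f)f$ and using $\cos^2+\sin^2=1$ yields $\eta_f$ and $0$ respectively, and the vanishing of the second moment is equivalent to $\tan 2\theta_f=\frac{\int\sin 2\theta' f}{\int\cos 2\theta' f}$, i.e.\ the stated arctangent formula. The only genuine subtlety in an otherwise mechanical computation is the well-posedness of the polar decomposition. When the moment vector is non-zero, its modulus $\eta_f>0$ and its argument $2\theta_f$ are determined, the latter modulo $2\pi$ and hence $\theta_f$ modulo $\pi$, which selects a unique representative in $[-\tfrac{\pi}{2},\tfrac{\pi}{2})$. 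The arctangent alone fixes $2\theta_f$ only modulo $\pi$, leaving two candidates for $\theta_f$; the ambiguity is removed by imposing $\eta_f\geq 0$, equivalently $\int\cos 2(\theta'-\theta_f)f\geq 0$. In the degenerate isotropic case where both moments vanish, $\eta_f=0$, the cosine term drops out, and $\Phi[f]\equiv C$; the representation still holds, with $\theta_f$ simply undetermined.
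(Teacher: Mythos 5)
Your proof is correct and follows essentially the same route as the paper: the identity $\sin^2 u = \tfrac{1}{2}(1-\cos 2u)$ combined with the cosine addition formula splits $\Phi[f]$ into the constant $\tfrac{C_1\rho_f}{2}$ plus a term proportional to the two moments, which the polar decomposition $\eta_f(\cos 2\theta_f,\sin 2\theta_f)$ collapses into $-\tfrac{C_1}{2}\eta_f\cos 2(\theta-\theta_f)$. Your additional remarks on the well-posedness of the polar decomposition and the degenerate case $\eta_f=0$ go slightly beyond what the paper writes out, but they do not change the argument.
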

\begin{proof}
As $ \sin^2(\theta - \theta') = \frac{1}{2}(1 - \cos 2\theta\cos 2\theta' - \sin 2\theta \sin 2\theta')$, $\Phi[f]$ can be decomposed into: 
\begin{equation*}
\begin{split}
\Phi[f](\theta) &= C_1 \int_{-\frac{\pi}{2}}^{\frac{\pi}{2}} \sin^2(\theta - \theta') f(\theta') \frac{d\theta'}{\pi} \\
= \frac{C_1 }{2}&\bigg(\int_{-\frac{\pi}{2}}^{\frac{\pi}{2}} f(\theta') \frac{d\theta'}{\pi} - \cos 2\theta\int_{-\frac{\pi}{2}}^{\frac{\pi}{2}} \cos 2\theta' f(\theta') \frac{d\theta'}{\pi} - \sin 2\theta\int_{-\frac{\pi}{2}}^{\frac{\pi}{2}} \sin 2\theta' f(\theta') \frac{d\theta'}{\pi}\bigg)\\
= \frac{C_1 }{2}&\bigg(\rho - \eta_f\cos 2(\theta-\theta_f) \bigg),
\end{split}
\end{equation*}
\noindent The result follows.
\end{proof}

Let us now suppose that $\frac{\nu_f}{\nu_d}$ depends on  $\eta_f$:

\begin{hypothesis}\label{hyp2}
The parameter $\frac{\nu_f}{\nu_d}$ is supposed to be inversely proportional to the local fiber density: $\frac{\nu_f}{\nu_d} = \frac{\gamma}{\eta_f}$, with $\gamma$ a constant.
\end{hypothesis}

\medskip
Note that, thanks to Hypothesis~\ref{hyp2}, we have
\begin{equation}\label{r}
\frac{\xi C_1 \eta_f}{2d} = \alpha \frac{\xi  L^2 \nu_f}{2\nu_d}\eta_f \frac{1}{2d} = \frac{\xi \alpha L^2 \gamma}{4d} = r,
\end{equation}
\noindent where $r$ is a constant depending only on the data of the problem. 
\begin{proposition}\label{prop3}
Here, we restrict ourselves to functions of $\theta$ only. Under Hypothesis~\ref{hyp2}, the equilibrium solutions of $Q$, i.e.\ the functions $f_{eq}$ such that $Q(f_{eq}) = 0$ are of the form: 
\begin{equation}\label{feq}
f_{eq}(\theta) = \rho M_{\theta_0}(\theta),
\end{equation}
\noindent for arbitrary $\rho \in [0,\infty)$ and $\theta_0 \in [-\frac{\pi}{2},\frac{\pi}{2})$ and where:
\begin{empheq}[left=\empheqlbrace]{align}
M_{\theta_0} &= \frac{e^{r \cos 2(\theta-\theta_0)}}{Z},\label{Mr}\\
Z &= Z(r) = \int_{-\frac{\pi}{2}}^{\frac{\pi}{2}} e^{r \cos 2(\theta-\theta_0)} \frac{d\theta}{\pi},\nonumber
\end{empheq}
\noindent with  $r$ given by \eqref{r}. We have $\eta_f = \rho c(r)$ with 
\begin{equation}\label{cr}
c(r)= \frac{\int_{-\frac{\pi}{2}}^{\frac{\pi}{2}} \cos 2 \theta e^{r\cos 2 \theta } \frac{d\theta}{\pi}}{\int_{-\frac{\pi}{2}}^{\frac{\pi}{2}} e^{r\cos 2 \theta } \frac{d\theta}{\pi}}.
\end{equation}
\end{proposition}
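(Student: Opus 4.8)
The plan is to assemble the two structural results already established --- Proposition~\ref{prop2}, which identifies the equilibria of $Q$ as $f = \rho M_{\Phi[f]}$, and Lemma~\ref{lem2}, which gives the explicit trigonometric form of $\Phi[f]$ --- and then to invoke Hypothesis~\ref{hyp2} to turn the implicitly defined exponent into an explicit constant. The whole argument is a matter of substitution and a single change of variable; the only genuine subtlety lies in the self-consistency of the fixed point.

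First I would start from Proposition~\ref{prop2}(ii): any $f_{eq}$ with $Q(f_{eq}) = 0$ can be written $f_{eq} = \rho M_{\Phi[f_{eq}]}$ for some $\rho \geq 0$, with $M_{\Phi}$ the generalized von Mises distribution~\eqref{Mphi}. Next I would substitute the representation~\eqref{phi} of Lemma~\ref{lem2}, namely $\Phi[f_{eq}](\theta) = C - \frac{C_1}{2}\eta_f\cos 2(\theta-\theta_f)$, into the exponential defining $M_\Phi$. The additive constant $C = \frac{C_1\rho_f}{2}$ is independent of $\theta$, so the factor $e^{-\xi C/d}$ cancels between the numerator and the normalization $Z$, leaving
\[
M_{\Phi[f_{eq}]}(\theta) = \frac{1}{Z}\exp\!\Big(\tfrac{\xi C_1\eta_f}{2d}\cos 2(\theta-\theta_f)\Big).
\]
At this stage the coefficient $\frac{\xi C_1\eta_f}{2d}$ still depends on the unknown amplitude $\eta_f$, both directly and, through $C_1 = \frac{\alpha L^2\nu_f}{2\nu_d}$ from~\eqref{coeffC1C2}, through the ratio $\nu_f/\nu_d$. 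This is exactly where Hypothesis~\ref{hyp2} enters: substituting $\nu_f/\nu_d = \gamma/\eta_f$ makes the product $C_1\eta_f = \frac{\alpha L^2\gamma}{2}$ independent of $\eta_f$, so the coefficient collapses to the fixed constant $r$ of~\eqref{r}. Setting $\theta_0 = \theta_f$ then yields precisely $f_{eq} = \rho M_{\theta_0}$ with $M_{\theta_0}$ as in~\eqref{Mr}; conversely, every such $\rho M_{\theta_0}$ is an equilibrium, since its associated $\Phi$ reproduces the same exponent $r$ and hence $M_{\Phi[\rho M_{\theta_0}]} = M_{\theta_0}$.

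It remains to establish the relation $\eta_f = \rho c(r)$. For this I would insert $f_{eq} = \rho M_{\theta_0}$ into the defining identity~\eqref{rftetaf}: computing $\int \cos 2(\theta-\theta_0)\,\rho M_{\theta_0}(\theta)\,\frac{d\theta}{\pi}$, performing the change of variable $\theta \mapsto \theta - \theta_0$ (legitimate by $\pi$-periodicity of the integrands), and recognizing the resulting quotient as $c(r)$ in~\eqref{cr}, one obtains $\eta_f = \rho c(r)$; the same change of variable confirms the second identity in~\eqref{rftetaf} by oddness, so $\theta_f = \theta_0$ is consistent. The genuine conceptual obstacle is precisely this self-consistency, or circularity: $\eta_f$ and $\theta_f$ are defined through $f_{eq}$, yet $f_{eq}$ is built from them. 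In the absence of Hypothesis~\ref{hyp2} the closure would be a transcendental fixed-point equation $\eta_f = \rho\, c\!\big(\tfrac{\xi C_1}{2d}\eta_f\big)$, whose solution set could encode an isotropic/anisotropic transition; the point of the proposition is that Hypothesis~\ref{hyp2} pins the exponent coefficient to the constant $r$ in advance, decoupling the shape of the equilibrium from the undetermined amplitude and thereby reducing the consistency condition to the single explicit relation $\eta_f = \rho c(r)$.
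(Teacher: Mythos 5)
Your proposal is correct and follows essentially the same route as the paper's proof: combine Proposition~\ref{prop2} with the trigonometric form of $\Phi[f]$ from Lemma~\ref{lem2}, cancel the additive constant in the exponent, and use Hypothesis~\ref{hyp2} to collapse the coefficient $\tfrac{\xi C_1\eta_f}{2d}$ to the fixed constant $r$, with the converse checked via the symmetry relation $\int\sin 2(\theta-\theta_0)\,M_{\theta_0}\,\tfrac{d\theta}{\pi}=0$. Your additional remarks on the fixed-point nature of the closure absent Hypothesis~\ref{hyp2}, and the explicit verification of $\eta_f=\rho c(r)$, are slightly more detailed than the paper's but consistent with it.
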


\noindent
Proposition~\ref{prop3} gives a precise description of the equilibria of $Q$, in terms of classical von Mises-Fisher distributions.  
 
\begin{proof}[Proof of proposition~\ref{prop3}]
From Proposition~\ref{prop2}, the equilibria of the collision operator $Q(f)$ are of the form 
$$
f = \rho \frac{e^{-\xi\frac{\Phi[f](\theta)}{d}}}{\int_{-\frac{\pi}{2}}^{\frac{\pi}{2}} e^{-\xi\frac{\Phi[f](\theta)}{d}} \frac{d\theta}{\pi}} .
$$ 
Thanks to Eq.~\eqref{coeffC1C2}, Lemma~\ref{lem2},Eqs.~\eqref{r} and~\eqref{cr}, we get:  
\begin{equation}\label{f}
\begin{split}
f(\theta) &= \rho \frac {e^{-\frac{\xi C}{d}+\frac{\xi C_1}{2d}\eta_f\cos 2(\theta-\theta_f)}}{\int_{-\frac{\pi}{2}}^{\frac{\pi}{2}} e^{-\frac{\xi C}{d}+\frac{\xi C_1}{2d} \eta_f\cos 2(\theta'-\theta_f)} \frac{d\theta'}{\pi}} = \rho(x) \frac { e^{r\cos 2(\theta-\theta_f)}}{\int_{-\frac{\pi}{2}}^{\frac{\pi}{2}} e^{r\cos 2(\theta'-\theta_f)} \frac{d\theta'}{\pi}},
\end{split}
\end{equation}
\noindent where $(\eta_f,\theta_f)\in \mathbb{R}^+ \times [-\frac{\pi}{2},\frac{\pi}{2})$ satisfy Eq.~\eqref{rftetaf}. Therefore, $f$ is of the form~\eqref{feq} with $r = \frac{\xi C_1 \eta_f}{2d}$. By Hypothesis~\ref{hyp2} and~\eqref{coeffC1C2}, $r = \frac{\xi \alpha L^2 \gamma}{4d}$. Conversely, let $f$ be given by~\eqref{feq}. Then, by~\eqref{phi},and~\eqref{r}, $\phi[f] = C - r\frac{d}{\xi}\cos 2 (\theta- \theta_f)$ with $\theta_f$ uniquely determined by $\int_{-\pi/2}^{\pi/2} \sin 2(\theta - \theta_f) f(\theta) \frac{d\theta}{\pi} = 0$. But $\int_{-\pi/2}^{\pi/2} \sin 2(\theta - \theta_0) f(\theta) \frac{d\theta}{\pi} = 0$ by symmetry, showing that $\theta_f = \theta_0 \; \mbox{mod}(\pi)$. Therefore, $M_{\phi[f]} = M_{\theta_0}$ and $f = \rho M_{\phi[f]}$ showing that $f$ is an equilibrium, which ends the proof.
\end{proof}

Thanks to Eq.~\eqref{cr}, Hypothesis~\ref{hyp2} amounts to supposing that the ratio $\frac{\nu_f}{\nu_d}$ is inversely proportional to the fiber density. 

Since there is no obvious conservation relation other than the conservation of the local fiber density, the only collision invariants in this model are the constants. The integration of equation~\eqref{MF1H3} against these invariants does not allow us to find the evolution equation for the mean orientation. In order to obtain an equation on $\theta_0$, inspired from Ref. \cite{Degond_Motsch_M3AS08}, the concept of Generalized Collision Invariants (GCI), i.e.\ of collision invariants when acting on a restricted subset of functions $f$, is introduced.

\subsubsection{Collision invariant}
A collision invariant is a function $\Psi$ such that for all function $f$ of $\theta$, $\int Q(f) \Psi d\theta = 0$. However, due to the lack of momentum conservation, the only collision invariants are the constants. This is not enough to determine both $\rho$ and $\theta_0$. To this aim, following Refs.~\cite{Frouvelle_M3AS12} and~\cite{Degond_Motsch_M3AS08}, we introduce the notion of GCI. For any $\theta_0 \in [-\frac{\pi}{2} \frac{\pi}{2})$, we define $L_{\theta_0}$ as the following linear operator:
\begin{equation*}
L_{\theta_0} f = d \partial_\theta \bigg(M_{\theta_0} \partial_{\theta}(\frac{f}{M_{\theta_0}})\bigg).
\end{equation*}
\noindent Note that $Q(f) = L_{\theta_f} f$ where $\theta_f$ satisfies Eq.~\eqref{rftetaf}. 
\begin{definition}
For a given $\theta_0 \in [-\frac{\pi}{2}, \frac{\pi}{2})$ a GCI associated to $\theta_0$ is a function $\Psi$ such that:
\begin{equation}\label{GCIset}
 \int_{-\frac{\pi}{2}}^{\frac{\pi}{2}} L_{\theta_0} f \Psi \frac{d\theta}{\pi} = 0 \; \;  \forall f \text{such that} \; \theta_f = \theta_0 \; \text{mod}(\pi).
\end{equation}
\noindent The set of the GCI associated to a given $\theta_0 \in [-\frac{\pi}{2},\frac{\pi}{2})$ is a linear space denoted by $\mathcal{G}_{\theta_0}$.
\end{definition}
\begin{lemma}
$\Psi \in \mathcal{G}_{\theta_0}$  if and only if $\exists \beta \in \mathbb{R}$ such that:
\begin{equation}\label{adjointL}
L^*_{\theta_0} \Psi = \beta \sin 2(\theta - \theta_0),
\end{equation}
\noindent where $L^*_{\theta_0}$ is the $L^2$ formal adjoint of $L_{\theta_0}$, i.e.\
\begin{equation*}
L^*_{\theta_0} \Psi = -\frac{d}{M_{\theta_0}} \partial_\theta \bigg( M_{\theta_0} \partial_{\theta} \Psi \bigg).
\end{equation*}
\end{lemma}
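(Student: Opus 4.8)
The plan is to convert the defining integral condition \eqref{GCIset}, which tests $L_{\theta_0}f$ against $\Psi$, into an equivalent condition testing $f$ directly against $L^*_{\theta_0}\Psi$, and then to read off the constraint on $L^*_{\theta_0}\Psi$ from the geometry of the admissible set of $f$. First I would establish the duality relation by two integrations by parts. Since $M_{\theta_0}$ and $\Psi$ are $\pi$-periodic, no boundary terms arise on $[-\frac{\pi}{2},\frac{\pi}{2}]$, and a direct computation gives $\int_{-\pi/2}^{\pi/2}(L_{\theta_0}f)\,\Psi\,\frac{d\theta}{\pi} = d\int_{-\pi/2}^{\pi/2}\frac{f}{M_{\theta_0}}\,\partial_\theta(M_{\theta_0}\partial_\theta\Psi)\,\frac{d\theta}{\pi} = -\int_{-\pi/2}^{\pi/2} f\,(L^*_{\theta_0}\Psi)\,\frac{d\theta}{\pi}$, using the stated formula for $L^*_{\theta_0}$. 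The sign is irrelevant since \eqref{GCIset} sets this quantity to zero.

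Next I would identify the set of admissible test functions. By the second relation in \eqref{rftetaf}, the constraint $\theta_f=\theta_0\bmod\pi$ is precisely the single linear condition $\int_{-\pi/2}^{\pi/2}\sin 2(\theta-\theta_0)\,f\,\frac{d\theta}{\pi}=0$, together with the sign requirement $\eta_f\geq 0$, i.e.\ $\int_{-\pi/2}^{\pi/2}\cos 2(\theta-\theta_0)\,f\,\frac{d\theta}{\pi}\geq 0$. Thus, by the duality relation, $\Psi\in\mathcal{G}_{\theta_0}$ if and only if $L^*_{\theta_0}\Psi$ is $L^2\!\left(\frac{d\theta}{\pi}\right)$-orthogonal to every $f$ in this admissible set.

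The heart of the argument is then a one-dimensional orthogonality statement. Let $\mathcal{C}'=\{f:\langle f,\sin 2(\cdot-\theta_0)\rangle=0\}$ be the closed hyperplane defined by the linear constraint alone; its orthogonal complement is exactly $\mathrm{span}\{\sin 2(\cdot-\theta_0)\}$, which is one-dimensional, hence closed. If $L^*_{\theta_0}\Psi$ is orthogonal to all of $\mathcal{C}'$, it must be a scalar multiple $\beta\sin 2(\theta-\theta_0)$, which is \eqref{adjointL}; the converse is immediate, since for such $\Psi$ and any admissible $f$ one has $\langle f, L^*_{\theta_0}\Psi\rangle=\beta\langle f,\sin 2(\cdot-\theta_0)\rangle=0$, so that \eqref{GCIset} holds.

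The one point that needs care, and which I expect to be the main obstacle, is that the admissible set is not all of $\mathcal{C}'$ but only the convex cone $\mathcal{C}=\{f\in\mathcal{C}':\langle f,\cos 2(\cdot-\theta_0)\rangle\geq 0\}$, the ``half'' of the hyperplane selected by $\eta_f\geq 0$. I must verify that orthogonality to this cone already forces orthogonality to the whole hyperplane. This holds because $\mathcal{C}$ spans $\mathcal{C}'$: given any $f\in\mathcal{C}'$, both $\lambda\cos 2(\cdot-\theta_0)$ and $f+\lambda\cos 2(\cdot-\theta_0)$ lie in $\mathcal{C}$ for $\lambda$ large enough, since $\sin 2(\cdot-\theta_0)$ and $\cos 2(\cdot-\theta_0)$ are orthogonal and $\int_{-\pi/2}^{\pi/2}\cos^2 2(\theta-\theta_0)\frac{d\theta}{\pi}=\tfrac12>0$, so their cosine component is nonnegative; hence $f$ is their difference. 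As the functional $\langle\,\cdot\,,L^*_{\theta_0}\Psi\rangle$ is linear and vanishes on $\mathcal{C}$, it vanishes on $\mathcal{C}-\mathcal{C}=\mathcal{C}'$, which closes the forward implication.
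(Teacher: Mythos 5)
Your proof is correct and follows essentially the same route as the paper: translate the constraint $\theta_f=\theta_0\bmod\pi$ into the single linear condition $\int f\sin 2(\theta-\theta_0)\frac{d\theta}{\pi}=0$, pass to the adjoint by integration by parts, and conclude by the standard duality argument that a linear functional vanishing on that hyperplane must be a multiple of $\sin 2(\theta-\theta_0)$. Your additional verification that the sign condition $\eta_f\geq 0$ (which restricts the admissible set to a cone inside the hyperplane) does not weaken the conclusion is a point the paper glosses over, and it is handled correctly.
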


\begin{proof}
By~\eqref{rftetaf}, the condition $\theta_f = \theta_0 \; \mbox{mod}(\pi)$ is equivalent to the linear constraint: 
\begin{equation*}
\int_{-\frac{\pi}{2}}^{\frac{\pi}{2}} f \sin 2(\theta-\theta_0) \frac{d\theta}{\pi} =0.
\end{equation*}
\noindent By a classical duality argument \cite{Degond_Motsch_M3AS08}, we deduce that $\Psi \in \mathcal{G}_{\theta_0}$ if and only if:
\begin{equation*}
\exists \beta \in \mathbb{R} \text{  such that } \; \int_{-\frac{\pi}{2}}^{\frac{\pi}{2}} L_{\theta_0} f \Psi \frac{d\theta}{\pi}= \beta\int_{-\frac{\pi}{2}}^{\frac{\pi}{2}} f \sin 2(\theta - \theta_0) \frac{d\theta}{\pi} \; \forall f. 
\end{equation*}
\noindent Note that now, there are no more constraints on $f$. Therefore, we can eliminate $f$ and get~\eqref{adjointL}.
\end{proof}
\begin{proposition}\label{prop4}
Any GCI $\Psi_{\theta_0}$ associated to $\theta_0$ can be written: 
\begin{equation}
\Psi_{\theta_0}(\theta) = C+ \beta g(\theta-\theta_0), 
\end{equation}
\noindent with arbitrary $C$, $\beta \in \mathbb{R}$ and  with $g$ an odd $\pi$ periodic function belonging to $H^1_0(0,\frac{\pi}{2})$, whose expression is: 
\begin{equation}
g(\theta) = \frac{1}{2r}\bigg( \theta - \frac{\pi}{2} \frac{\int_0^\theta e^{- r\cos 2\theta'}\frac{d\theta'}{\pi}}{\int_0^{\frac{\pi}{2}} e^{-r\cos 2\theta'} \frac{d\theta'}{\pi}}\bigg).
\end{equation}
\end{proposition}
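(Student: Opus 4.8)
The lemma preceding the statement already reduces the task to an ODE: $\Psi \in \mathcal{G}_{\theta_0}$ if and only if there is a $\beta \in \mathbb{R}$ with
\begin{equation*}
L^*_{\theta_0}\Psi = -\frac{d}{M_{\theta_0}}\,\partial_\theta\big(M_{\theta_0}\,\partial_\theta \Psi\big) = \beta\,\sin 2(\theta-\theta_0),
\end{equation*}
posed on the circle $[-\tfrac{\pi}{2},\tfrac{\pi}{2})$ with $\pi$-periodicity. The plan is to solve this linear second-order ODE explicitly. Since both the operator and the right-hand side are invariant under the shift $\theta \mapsto \theta - \theta_0$, I would first set $\theta_0 = 0$ without loss of generality and recover the general case by translating at the end. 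The crucial observation, which makes the equation integrable in closed form, is that the equilibrium $M_0 = e^{r\cos 2\theta}/Z$ from Proposition~\ref{prop3} satisfies $\partial_\theta M_0 = -2r\,\sin 2\theta\,M_0$, so the forcing is an exact derivative: $\sin 2\theta\,M_0 = -\tfrac{1}{2r}\,\partial_\theta M_0$.

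Multiplying the ODE by $M_0$ and inserting this identity rewrites it as $-d\,\partial_\theta(M_0\,\partial_\theta\Psi) = -\tfrac{\beta}{2r}\,\partial_\theta M_0$, which integrates once to
\begin{equation*}
\partial_\theta \Psi = \frac{\beta}{2rd} - \frac{A}{d\,M_0},
\end{equation*}
for an integration constant $A$. The key step is then to fix $A$: for $\Psi$ to be single-valued (i.e.\ $\pi$-periodic) one needs $\int_{-\pi/2}^{\pi/2}\partial_\theta\Psi\,d\theta = 0$, which determines $A$ uniquely as a linear function of $\beta$ via $\tfrac{\beta\pi}{2r} = A\int_{-\pi/2}^{\pi/2} M_0^{-1}\,d\theta$. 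A second integration from $0$ to $\theta$ then produces an additive constant $C$ (the remaining free parameter) and, after substituting $M_0^{-1} = Z\,e^{-r\cos 2\theta}$ and using $\int_{-\pi/2}^{\pi/2} e^{-r\cos2\theta}\,d\theta = 2\int_0^{\pi/2} e^{-r\cos2\theta}\,d\theta$, yields exactly $\Psi = C + \tfrac{\beta}{d}\,g(\theta)$ with $g$ the stated function, the factor $d$ being absorbed into the arbitrary multiplicative constant so that one may rename $\beta/d$ as $\beta$. Translating back by $\theta \mapsto \theta-\theta_0$ gives $\Psi_{\theta_0}(\theta) = C + \beta\,g(\theta-\theta_0)$.

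It remains to record the qualitative properties of $g$ and to close the parameter count. From the explicit formula $g(0)=0$ and $g(\tfrac{\pi}{2})=0$, since the normalizing ratio equals $1$ at $\theta=\tfrac{\pi}{2}$; hence $g\in H^1_0(0,\tfrac{\pi}{2})$, and since $M_0$ is smooth and bounded away from $0$, $g$ is in fact smooth. Oddness follows from the evenness of $\cos 2\theta'$, which makes $\theta\mapsto\int_0^\theta e^{-r\cos2\theta'}\,d\theta'$ odd, and the $\pi$-periodicity of the odd extension is automatic from the construction on the circle. For the parameter count, the homogeneous equation $L^*_0\Psi=0$ admits only the constants as periodic solutions — the second homogeneous solution $\partial_\theta\Psi\propto M_0^{-1}$ fails periodicity — so for each $\beta$ the periodic solution is unique up to an additive constant, and the solution set is precisely the two-parameter family $\{C+\beta g\}$. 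The only genuinely delicate point is the determination of $A$ by the periodicity constraint, equivalently the compatibility condition $\int_{-\pi/2}^{\pi/2} \sin 2\theta\,M_0\,\tfrac{d\theta}{\pi}=0$, which holds by parity and guarantees solvability for every $\beta$; everything else is elementary integration and bookkeeping of constants.
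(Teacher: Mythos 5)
Your proposal is correct and follows essentially the same route as the paper: both exploit the identity $\sin 2\theta\, M_0 = -\tfrac{1}{2r}\partial_\theta M_0$ to integrate the adjoint equation twice in closed form, and both arrive at the same $g$. The only (immaterial) differences are that you fix the integration constant by the periodicity condition $\int_{-\pi/2}^{\pi/2}\partial_\theta\Psi\,d\theta=0$ rather than by the boundary conditions $g(0)=g(\tfrac{\pi}{2})=0$ in $H^1_0(0,\tfrac{\pi}{2})$ (the two are equivalent given oddness), and that you replace the paper's Lax--Milgram/Poincar\'e uniqueness argument by the elementary observation that the only periodic solutions of the homogeneous equation are the constants.
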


\begin{proof}
Following Refs.~\cite{Frouvelle_M3AS12},~\cite{Degond_Motsch_M3AS08}, using Lax-Milgram's theorem and Poincar\'e's inequality, it is easy to show that the problem $L^*_{\theta_0}(\Psi)  = \frac{d}{\xi} \beta \sin 2(\theta-\theta_0) $ has a unique solution in the space $\dot{H}^1(-\frac{\pi}{2},\frac{\pi}{2})$ of functions $H^1(-\frac{\pi}{2},\frac{\pi}{2})$ with zero mean. Then, the change of variables  $\theta' = \theta - \theta_0$ is performed, and functions of the form $\Psi(\theta) = \beta g(\theta)$ with $g$ odd are searched. Then, $\Psi \in \dot{H}^1([-\frac{\pi}{2},\frac{\pi}{2}])$ if and only if g belongs to $H_0^1(0,\frac{\pi}{2})$. Straightforward computations show that $\Psi$ is a solution of~\eqref{adjointL} if and only if $g$ is a solution of
\begin{equation}\label{dPsi}
(M_0 g')' = -\sin 2\theta M_0.
\end{equation}
\noindent As $M_0(\theta) = \frac{e^{r\cos 2\theta}}{Z}$ and as we search for $g \in H^1_0(0,\frac{\pi}{2})$, an analytic expression for $g$ can be found. Indeed, since $-\sin 2\theta M_0 = \frac{1}{2r} M_0$, integrating~\eqref{dPsi} with respect to $\theta$ once, we get:
\begin{equation*}
 g'(\theta) = \frac{1}{2r} + CZ e^{-r\cos 2\theta} ,
\end{equation*}
\noindent for an appropriate constant $C$. Then, since $g\in H^1_0(0,\frac{\pi}{2})$,
\begin{equation*}
 g(\theta) = \frac{\theta}{2r} + CZ \int_0^\theta e^{-r\cos 2\theta'} d\theta'.
\end{equation*}
\noindent Finally, as $g\in H^1_0(0,\pi)$, $g(0)=g(\pi)=0$ and $C$ can be determined: 
\begin{equation*}
 C = -\frac{\pi}{4rZ \int_0^{\frac{\pi}{2}} e^{-r\cos 2\theta'} d\theta'} =-\frac{1}{2 rZ \int_{-\pi/2}^{\pi/2} e^{-r\cos 2\theta'} d\theta'} = -\frac{1}{2r Z^2}.
\end{equation*}
\noindent Indeed, we have:
\begin{equation*}
\int_{-\pi/2}^{\pi/2} e^{-r\cos 2\theta'} d\theta' = \int_{-\pi/2}^{\pi/2} e^{r\cos 2\theta'} d\theta',
\end{equation*}
\noindent by the change of variable $\theta \rightarrow \frac{\pi}{2} - \theta$ for $\theta>0$ and $\theta \rightarrow \frac{-\pi}{2} - \theta$ for $\theta<0$. This yields the result. For further usage, we note that 
\begin{equation}\label{dg2}
g'(\theta) = \frac{1}{2r}(1 - \frac{1}{M_0 Z^2}).
\end{equation}
\end{proof}

\subsection{Limit $\varepsilon \rightarrow 0$}
In this section, the formal limit $\varepsilon \rightarrow 0$ of Eq.~\eqref{MF1H3} is studied. We aim to prove the following theorem:

\begin{theorem}\label{thm4}
Under the scaling~\ref{scaling} and~\ref{hyp2}, the solution $f^\varepsilon$ of eq.~\eqref{pbf} formally converges to $f(x,\theta,t)$ given by 
\begin{equation}\label{flimit}
f(x,\theta,t) = \rho(x,t) M_{\theta_0(x,t)}(\theta),
\end{equation}
\noindent where $M_{\theta_0}$ is given by~\eqref{Mr} and $\rho(x,t)$ and $\theta_0(x,t)$ satisfy the following system:
\begin{equation}\label{Eqrho}
\partial_t \rho - \nabla_x \cdot  (\nabla_x U^0 \rho) - d \Delta_x \rho=0,
 \end{equation}
 \noindent and
\begin{equation}\label{Eqtheta}
\begin{split}
\rho \partial_t \theta_0& - \rho \nabla_x U^0 \cdot \nabla_x \theta_0 - 2 \alpha_2 \nabla_x\rho \cdot \nabla_x \theta_0 -\alpha_2 \rho\Delta_x \theta_0 \\
+ & \alpha_3 (\rho \nabla_x^2\theta_0 +\nabla_x\theta_0 \otimes \nabla_x \rho + \nabla_x \rho \otimes \nabla_x \theta_0) : [\omega_0\otimes \omega_0 - \omega_0^\perp \otimes \omega_0^\perp]\\
& + \big(2 \rho \alpha_3 \nabla_x\theta_0 \otimes \nabla_x \theta_0 - \alpha_4 \nabla_x^2 \rho\big):[\omega_0\otimes \omega_0^\perp + \omega_0^\perp \otimes \omega_0] + \alpha_5 \rho \langle \partial_\theta U^1\rangle  = 0,
\end{split}
\end{equation}
\noindent where $\langle h \rangle = \int\limits_{-\pi/2}^{\pi/2} h(\theta) M_{\theta_0}(\theta) \frac{d\theta}{\pi}$ for any function $h$ of $\theta \in [-\frac{\pi}{2}, \frac{\pi}{2})$, and where the coefficients $\alpha_2,\alpha_3,\alpha_4,\alpha_5$ are given by:
\begin{empheq}[left=\empheqlbrace]{align}
\alpha_2&=\frac{d}{\alpha_1} (\alpha_1+\frac{\xi \alpha L^4 \gamma c(r)}{24 d}), \nonumber\\
\alpha_3&=\frac{\xi \alpha L^4 \gamma}{24\alpha_1} (\frac{1}{4Z^2} -1+\frac{6d c(r)}{\xi \alpha L^2\gamma}), \nonumber\\
\alpha_4 &= \frac{\xi \alpha L^4 \gamma}{192 Z^2 \alpha_1},\label{coeffalpha}\\
\alpha_5 &= \frac{1}{\alpha_1},\nonumber
\end{empheq}
\noindent with $\alpha_1$ given by:
\begin{equation}
\alpha_1=1 - \frac{1}{Z^2}.
\end{equation}
\end{theorem}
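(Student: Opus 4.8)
The plan is to run a formal diffusion-limit argument: identify the leading-order equilibrium of the singular problem \eqref{pbf}, then close the system by testing \eqref{pbf} against quantities that annihilate the singular right-hand side $\tfrac1\varepsilon Q(f^\varepsilon)$. For the leading order, multiplying \eqref{pbf} by $\varepsilon$ and sending $\varepsilon\to0$ forces $Q(f^0)=0$ for the formal limit $f^0=\lim_{\varepsilon\to0}f^\varepsilon$; by Proposition~\ref{prop3} this confines $f^0$ to the equilibrium manifold, so $f^0(x,\theta,t)=\rho(x,t)\,M_{\theta_0(x,t)}(\theta)$ with two free fields $\rho\ge0$ and $\theta_0$, which is \eqref{flimit}. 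It then remains to produce the closed system \eqref{Eqrho}--\eqref{Eqtheta} for $(\rho,\theta_0)$.

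For the density equation I test \eqref{pbf} against the only classical collision invariant, the constant $1$. Since $Q(f)$ is a pure $\theta$-derivative and the angle is $\pi$-periodic, $\int Q(f^\varepsilon)\tfrac{d\theta}{\pi}=0$, so the singular term drops for every $\varepsilon$; the $\partial_\theta$-fluxes coming from $\partial_\theta U^1$ and $\xi G[f^\varepsilon]$ integrate to zero for the same reason. Using $\int M_{\theta_0}\tfrac{d\theta}{\pi}=1$, the surviving terms pass to the limit and reproduce \eqref{Eqrho} directly.

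The equation for $\theta_0$ is the crux and is obtained through the GCI of Proposition~\ref{prop4}. For each $\varepsilon$ let $\theta^\varepsilon:=\theta_{f^\varepsilon}$ be the mean orientation of $f^\varepsilon$ given by \eqref{rftetaf}, and test \eqref{pbf} against the associated GCI, whose nonconstant part is $g(\theta-\theta^\varepsilon)$. Because $Q(f^\varepsilon)=L_{\theta^\varepsilon}f^\varepsilon$ and $g(\cdot-\theta^\varepsilon)$ is a GCI associated precisely to $\theta^\varepsilon$, the defining property \eqref{GCIset} gives $\int Q(f^\varepsilon)\,g(\theta-\theta^\varepsilon)\tfrac{d\theta}{\pi}=0$ exactly for every $\varepsilon>0$: the singular term is cancelled with no remainder. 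The tested equation is then $\varepsilon$-free, and letting $\varepsilon\to0$ with $f^\varepsilon\to\rho M_{\theta_0}$, $\theta^\varepsilon\to\theta_0$ and $g(\cdot-\theta^\varepsilon)\to g(\cdot-\theta_0)$ yields a scalar equation for $\theta_0$.

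It remains to reduce this identity to \eqref{Eqtheta} by evaluating the four moments against $g(\theta-\theta_0)$ with $f=\rho M_{\theta_0}$. Setting $u=\theta-\theta_0$, I use that $M_0$ is even and $g$ odd (so all wrong-parity contributions vanish), the identity $\partial_{\theta_0}M_{\theta_0}=-\partial_\theta M_{\theta_0}$, and the explicit derivative \eqref{dg2}, which converts every weight $M_{\theta_0}g'$ into $\tfrac1{2r}(M_{\theta_0}-Z^{-2})$. The time derivative collapses to the single moment $\int \partial_\theta M_{\theta_0}\,g\,\tfrac{d\theta}{\pi}=-\tfrac{\alpha_1}{2r}$ with $\alpha_1=1-Z^{-2}$; this fixes the overall normalization (hence the factor $\alpha_5=1/\alpha_1$), and since $U^1$ is $\pi$-periodic the $Z^{-2}$ piece of $M_{\theta_0}g'$ integrates against $\partial_\theta U^1$ to zero, leaving exactly $\langle\partial_\theta U^1\rangle$. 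The $U^0$-transport and the pure diffusion $-d\Delta_x$ generate the first-order terms $\nabla_x U^0\cdot\nabla_x\theta_0$, $\nabla_x\rho\cdot\nabla_x\theta_0$ and $\rho\Delta_x\theta_0$. The genuinely hard term is the nonlocal force $\xi G[f]$: in \eqref{forceG} the double derivative $\partial_{x_i}\partial_{x_j}$ must be distributed over $\rho$ and over the $\theta_0$-dependence of the von Mises moment $\int M_{\theta_0}(\theta')B_{ij}(\theta,\theta')\tfrac{d\theta'}{\pi}$, producing the Hessians $\nabla_x^2\rho$, $\nabla_x^2\theta_0$ and the products $\nabla_x\rho\otimes\nabla_x\theta_0$ and $\nabla_x\theta_0\otimes\nabla_x\theta_0$. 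Contracting the tensor $B$ of \eqref{Bteta} against $M_{\theta_0}$ and $g'$ and reducing the angular integrals of $\cos2u$ and $\sin2u$ against $M_0$ (which bring in $c(r)$ and $Z^{-2}$ via \eqref{cr} and \eqref{dg2}) assembles these into the anisotropic structures $[\omega_0\otimes\omega_0-\omega_0^\perp\otimes\omega_0^\perp]$ and $[\omega_0\otimes\omega_0^\perp+\omega_0^\perp\otimes\omega_0]$ with coefficients $\alpha_3,\alpha_4$, while the trace part of $G$ merges with the pure diffusion to form $\alpha_2$. The main obstacle is exactly this bookkeeping: tracking which of the two $x$-derivatives falls on $\rho$ and which on $\theta_0$, and checking that the resulting angular moments collapse to precisely the coefficients \eqref{coeffalpha}. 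I expect closure because parity removes every term of the wrong symmetry and all remaining integrals reduce to $Z$, $c(r)$ and $r$.
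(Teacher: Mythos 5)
Your proposal is correct and follows essentially the same route as the paper: leading-order relaxation to $\rho M_{\theta_0}$ via Proposition~\ref{prop3}, the constant invariant for \eqref{Eqrho}, and the exact cancellation $\int Q(f^\varepsilon)\,\Psi_{\theta_{f^\varepsilon}}\,\frac{d\theta}{\pi}=0$ from testing against the GCI attached to $\theta_{f^\varepsilon}$ before passing to the limit, followed by the parity and $Z$, $c(r)$, $r$ moment reductions. The only part you leave as a sketch is the tensor bookkeeping for the nonlocal term $\xi G[f]$, which the paper itself relegates to Appendix~\ref{app3} and which proceeds exactly as you describe.
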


\begin{proof}
Suppose that all the functions are as smooth as needed and that all convergences are as strong as needed. In the limit $\varepsilon \rightarrow 0$, let $f^{\varepsilon} \rightarrow f$. As $Q(f^\varepsilon) = O(\varepsilon)$, then $Q(f) = 0$. By proposition~\ref{prop4}, we deduce that $f$ is given by \eqref{flimit} with $\rho \geq 0$ and $\theta_0 \in [-\frac{\pi}{2},\frac{\pi}{2})$ to be determined. In order to find the equations for $\rho$ and $\theta_0$, we use the set of GCI given by Prop.~\ref{prop3}. 

\paragraph{Equation for $\rho$}
The use of the constant GCI amounts to integrating Eq.~\eqref{pbf} over $[-\frac{\pi}{2},\frac{\pi}{2})$. This gives: 
\begin{equation*}
\int_{-\frac{\pi}{2}}^{\frac{\pi}{2}} \bigg\{ \partial_t f^\varepsilon - \nabla_x \cdot  (\nabla_x U^0 f^{\varepsilon})  - \partial_\theta\bigg( \bigg[\partial_{\theta} U^1 + \xi G[f^{\varepsilon}](\theta) \bigg] f^{\varepsilon}\bigg) - d \Delta_x f^{\varepsilon}\bigg\} \, \frac{d\theta}{\pi} = 0, \\
\end{equation*}
\noindent which leads to the continuity equation for $\rho^\varepsilon$:
 \begin{equation*}
 \partial_t \rho^\varepsilon -\nabla_x \cdot  (\nabla_x U^0 \rho^{\varepsilon}) - d \Delta_x \rho^{\varepsilon}=0.
 \end{equation*}
 \noindent In the limit $\varepsilon \rightarrow 0$, $\rho^\varepsilon \rightarrow \rho$ which leads to Eq.~\eqref{Eqrho}.

\paragraph{Equation for $\theta_0$}
We multiply Eq.~\eqref{pbf} by the GCI $\Psi_{\theta_{f^\varepsilon}}$ associated with the direction $\theta_{f^\varepsilon}$ of $f^\varepsilon$, namely $\Psi_{\theta_{f^\varepsilon}} = g(\theta - \theta_{f^\varepsilon})$ where $g$ is the function defined in Prop.~\ref{prop4}. We integrate with respect to $\theta$ and first note that:
\begin{equation*}
\int_{-\frac{\pi}{2}}^{\frac{\pi}{2}} Q(f^{\varepsilon}) \Psi_{\theta_{f^\varepsilon}} d\theta = \int_{-\frac{\pi}{2}}^{\frac{\pi}{2}} L_{\theta_{f^{\varepsilon}}}f^{\varepsilon} \Psi_{\theta_{f^\varepsilon}} d\theta=0,
\end{equation*}
\noindent by~\eqref{GCIset}. Since $f^{\varepsilon} \rightarrow \rho M_{\theta_0}$, we have $\theta_{f^\varepsilon} \rightarrow \theta_0$ and $\Psi_{\theta_{f^\varepsilon}} \rightarrow \Psi_{\theta_0}$. Therefore, in the limit $\varepsilon \rightarrow 0$, we get:
\begin{equation}\label{inhomo2}
\int_{-\frac{\pi}{2}}^{\frac{\pi}{2}}  \bigg(\partial_t (\rho M_{\theta_0}) - \nabla_x \cdot  (\nabla_x U^0 \rho M_{\theta_0})  - \partial_\theta\bigg( \bigg[\partial_{\theta} U^1 + \xi G[\rho M_{\theta_0}](\theta) \bigg] \rho M_{\theta_0}\bigg) - d \Delta_x (\rho M_{\theta_0})\bigg) \Psi_{\theta_0} d\theta= 0.
\end{equation}
\noindent For simplicity, we denote $M_{\theta_0} = M$. We have:
\begin{equation*}
\begin{split}
\Delta_x (\rho M) &= M \Delta_x \rho + \rho \Delta_x M + 2\nabla_x\rho  \cdot  \nabla_x M,\\
\nabla_x  \cdot  (\nabla_x U^0 \rho M) &= M \nabla_x \cdot  (\nabla_x U^0  \rho) + \rho \nabla_x U^0  \cdot  \nabla_x M. 
\end{split}
\end{equation*}
\noindent Using the continuity equation~\eqref{Eqrho}, we have:
\begin{equation*}
\partial_t(\rho M) = \rho \partial_t M + M \partial_t \rho = \rho \partial_t M + (\nabla_x \cdot  (\nabla_x U^1 \rho) + d\Delta_x \rho) M .
\end{equation*}
\noindent So:
\begin{equation*}
\partial_t (\rho M) - \nabla_x  \cdot  (\nabla_x U^0 \rho M) - d \Delta_x (\rho M) = \rho \partial_t M - \rho \nabla_x U^0  \cdot  \nabla_x M - d\rho \Delta_x M - 2d\nabla_x \rho \cdot  \nabla_x M.
\end{equation*}
\noindent Therefore, Eq.~\eqref{inhomo2} reads: 
\begin{equation}\label{inhomo3}
 \rho \int_{-\frac{\pi}{2}}^{\frac{\pi}{2}} \partial_t M \Psi \frac{d\theta}{\pi} - X_1 - X_2 - X_3 - X_4 =0,
\end{equation} 
 \noindent where: 
 \begin{empheq}[left=\empheqlbrace]{align}
 X_1 &= \int_{-\frac{\pi}{2}}^{\frac{\pi}{2}}  \bigg( \rho (\nabla_x U^0 + 2d\frac{\nabla_x \rho}{\rho})  \cdot  \nabla_x M\bigg) \Psi \frac{d\theta}{\pi} \label{X1},\\
 X_2 &= \int_{-\frac{\pi}{2}}^{\frac{\pi}{2}} \partial_\theta\bigg( \partial_{\theta} U^1 \rho M\bigg) \Psi \frac{d\theta}{\pi} \label{X2},\\
  X_3 &=\xi \int_{-\frac{\pi}{2}}^{\frac{\pi}{2}} \partial_\theta\bigg(G[\rho M](\theta) \rho M\bigg) \Psi \frac{d\theta}{\pi}\label{X3},\\
X_4 &= d\rho \int_{-\frac{\pi}{2}}^{\frac{\pi}{2}} \Delta_x M \Psi \frac{d\theta}{\pi}\label{X4}.
 \end{empheq}
\noindent We now turn to the development of each term of Eq.~\eqref{inhomo3}. We have:
\begin{equation}\label{nablaxM}
\nabla_x M = 2 r \sin 2(\theta-\theta_0) M \nabla_x \theta_0.
\end{equation} 
\noindent Then, 
\begin{equation*}
\begin{split}
(\nabla_x U^0 + 2d\frac{\nabla_x \rho}{\rho}) \cdot \nabla_xM =2r\sin 2(\theta-\theta_0) M \bigg(\nabla_x U^0 + 2d\frac{\nabla_x \rho}{\rho}\bigg)  \cdot  \nabla_x\theta_0,
\end{split}
\end{equation*}
\noindent and thus, $X_1$ can be written:
\begin{equation*}
 X_1 =2r \rho \bigg(\nabla_x U^0 \cdot \nabla_x \theta_0 + 2 d\frac{\nabla_x \rho \cdot \nabla_x \theta_0}{\rho}\bigg) \langle \sin 2(\theta-\theta_0) \Psi \rangle .
\end{equation*} 
\noindent From integration by parts, the following relations can be written: 
 \begin{equation}\label{sinmean}
 \begin{split}
 \langle \sin 2(\theta-\theta_0) \Psi \rangle  = \frac{1}{4r^2} (1 - \frac{1}{Z^2}) = \frac{1}{4r^2}\alpha_1.
 \end{split}
 \end{equation}
\noindent Therefore, we have:
\begin{equation}\label{valueX1}
X_1 = \frac{\rho \alpha_1}{2 r} (\nabla_x U^0 \cdot \nabla_x \theta_0 + 2 d\frac{\nabla_x \rho \cdot \nabla_x \theta_0}{\rho}).
\end{equation}
\noindent Since $X_2$ is the integral of a $\pi$-periodic function over a period, we can write 
\begin{equation*}
X_2 = \int_{\theta_0 - \pi/2}^{\theta_0 + \pi/2} \partial_\theta \big(\partial_\theta U^1 \rho M \big) \Psi \frac{d\theta}{\pi}.
\end{equation*}
\noindent Now, by construction, (see prop~\ref{prop4}), $\Psi(\theta_0-\frac{\pi}{2}) = \Psi(\theta_0) = \Psi(\theta_0 + \frac{\pi}{2}) = 0$. So, integrating by parts, we have 
\begin{equation*}
X_2 = - \int_{\theta_0 - \pi/2}^{\theta_0 + \frac{\pi}{2}} \rho M \; \partial_\theta U^1 \; \partial_\theta \Psi \frac{d\theta}{\pi}.
\end{equation*}
\noindent Now, by construction again (see~\eqref{dg2}), we have
\begin{equation}\label{partialPsi}
\partial_\theta \Psi = \frac{1}{2r}(1 - \frac{1}{M Z^2}).
\end{equation}
\noindent Using again the $\pi$-periodicity of $U^1$, we obtain:
\begin{equation}\label{valueX2}
 X_2 =-\frac{ \rho}{2r} \langle \partial_{\theta} U^1 (1 - \frac{1}{M Z^2}) \rangle  = -\frac{\rho}{2r}  \langle \partial_{\theta} U^1 \rangle  .
 \end{equation}
\noindent Now, let us turn to $X_3$. The details of this computation are postponed to appendix~\ref{app3}. We find:
\begin{equation}\label{valueX3}
\begin{split}
\displaystyle X_3 = -\frac{dL^2}{12} \bigg[& -  c(r)(\rho \Delta_x \theta_0 + 2 \nabla_x \theta_0  \cdot  \nabla_x \rho )\\
&+\big( 2 \rho \gamma_1 \nabla_x\theta_0 \otimes \nabla_x \theta_0 -\frac{1}{8Z^2}\nabla_x^2 \rho \big) : [\omega_0\otimes \omega_0^\perp + \omega_0^\perp \otimes \omega_0]\\
& +  \gamma_1 (\rho \nabla_x^2\theta_0 +\nabla_x\theta_0 \otimes \nabla_x \rho + \nabla_x \rho \otimes \nabla_x \theta_0) : [\omega_0\otimes \omega_0 - \omega_0^\perp \otimes \omega_0^\perp]\bigg],
\end{split}
\end{equation}
\noindent where, using~\eqref{r}, 
\begin{equation*}
\gamma_1 = \frac{1}{4Z^2} -1 + \frac{3c(r)}{2r}  = \frac{1}{4Z^2} - 1 + \frac{6 d c(r)}{\alpha L^2 \xi \gamma}.
\end{equation*}
\noindent We note that $\frac{\alpha L^4 \gamma}{48 r }\gamma_1 = \frac{1}{2r} \alpha_3$. Finally, let us explicit the last term $X_4$. A direct computation gives:  
\begin{equation*}
\begin{split}
\Delta_x M &= M \bigg[4r\big[r\sin^2 2(\theta-\theta_0) - \cos 2(\theta-\theta_0)\big]|\nabla_x \theta_0|^2 + 2r\sin 2(\theta-\theta_0) \Delta_x\theta_0 
\bigg].
\end{split}
\end{equation*}
\noindent Then, we deduce that 
\begin{equation*}
\begin{split}
X_4 =d\rho\bigg[&2r \Delta_x\theta_0  \langle \sin 2(\theta-\theta_0) \Psi \rangle \\
&+|\nabla_x\theta_0|^2 4r[-\langle \cos 2(\theta-\theta_0) \Psi \rangle +r \langle \sin^2 2(\theta-\theta_0) \Psi \rangle ]\bigg].
\end{split}
\end{equation*}
 \noindent By symmetry, we have:
 \begin{equation*}
\langle \sin^2 2(\theta - \theta_0) \Psi \rangle  = \frac{1}{r} \langle \cos 2 (\theta - \theta_0) \Psi \rangle .
 \end{equation*}
\noindent Therefore, with~\eqref{sinmean}, we get:
\begin{equation}\label{valueX4}
X_4 =\frac{d\rho}{2 r} (1 - \frac{1}{Z^2}) \Delta_x\theta_0  = \frac{d\rho}{2r} \alpha_1 \Delta_x \theta_0.
\end{equation}
\noindent Now, $\partial_t M = 2r \sin 2(\theta-\theta_0) M \partial_t \theta_0$, and 
\begin{equation}\label{intdtM}
\rho \int_{-\frac{\pi}{2}}^{\frac{\pi}{2}} \partial_t M \;  \Psi = 2r\rho \langle \sin 2(\theta - \theta_0)\Psi \rangle \partial_t \theta_0 = \frac{\rho}{2r}(1-\frac{1}{Z^2}) \partial_t \theta_0 = \frac{\alpha_1 \rho}{2r} \partial_t \theta_0
\end{equation}
\noindent Collecting~\eqref{valueX1} to~\eqref{valueX4} and inserting them into~\eqref{inhomo3} leads to~\eqref{Eqtheta}.
\end{proof}

\setcounter{equation}{0}
\section{Case of a homogeneous fiber distribution: stationary solutions}
\label{sec:homo}

In this section, we study the stationary solutions of~\eqref{Eqrho}-\eqref{Eqtheta} in the case of a spatially homogeneous fiber distribution and consequently no external spatial potential $U^0=0$. We make the following assumption: 

\begin{hypothesis}\label{hyp4}
The fiber spatial distribution is supposed to be homogeneous, i.e.\ there exists a constant $\rho_0 >0$ such that  $\rho(x,t) = \rho_0$ for all $(x,t) \in {\mathbb R}^2 \times [0, \infty)$. We also suppose that there are no external spatial forces, i.e.\ $U^0=0$.
\end{hypothesis}

We first note that in the absence of external forces, a uniform and constant density $\rho_0$ is a solution of Eq.~\eqref{Eqrho}. Now, we are interested in the stationary solutions for the fiber orientation equation (\ref{Eqtheta}). Noting that the terms involving the spatial derivatives of $\rho$, we find that such stationary solutions satisfy the following equation:
\begin{equation}\label{TetHomo}
\begin{split}
 \alpha_2 \Delta_x \theta_0 - &\alpha_3 [ \omega_0\otimes\omega_0 - \omega_0^\perp\otimes\omega_0^\perp]:  \nabla_x^2 \theta_0 \\
 &- 2 \alpha_3[\omega_0\otimes\omega_0^\perp + \omega_0^\perp\otimes\omega_0] :  \nabla_x\theta_0 \otimes \nabla_x \theta_0 = \alpha_5 \langle \partial_{\theta} U^1 \rangle .
  \end{split}
 \end{equation}
In this equation, the coefficients~$r$, $\alpha_1$, $\alpha_2$ and $\alpha_3$ are constants thanks to~\eqref{r}. Moreover, using~\eqref{coeffalpha}, they can be written as functions of $d$, $L^2$ and $r$ as follows:
\begin{empheq}[left=\empheqlbrace]{align}
\alpha_1(r)&=1 - \frac{1}{Z(r)^2} \label{alpha1},\\
\alpha_2(d,r,L^2)&=d\bigg(1+\frac{L^2 r c(r)}{6 \alpha_1(r)} \bigg) \label{alpha2},\\
\alpha_3(d,r,L^2)& = \frac{d L^2 r}{6\alpha_1(r)} \mathcal{A}(r) \label{alpha3}.
\end{empheq}
with 
\begin{equation}
 \mathcal{A}(r)= \bigg(\frac{1}{4Z(r)^2} -1+\frac{3}{2}\frac{c(r)}{r}\bigg). 
\label{eq:curlyA}
\end{equation}
We now show that (\ref{TetHomo}) is an elliptic equation. We first introduce some definitions. 

Given a function $\mathbf{f}(x,E)$ smooth in its arguments $x \in \Omega$, $E \in {\mathbb R} \times {\mathbb R}^2 \times {\mathcal S}_2({\mathbb R})$, where ${\mathcal S}_2({\mathbb R})$ is the space of $2 \times 2$ symmetric matrices with real coefficients, we define the  non linear differential operator $F: C^{\infty}({\mathbb R}^2) \rightarrow C^{\infty}({\mathbb R}^2)$ such that for any $x\in {\mathbb R}^2$ and any $u\in C^\infty({\mathbb R}^2)$, we have
\begin{equation*}
 F(u(x)) = \mathbf{f}(x,D^2 u(x)),
 \end{equation*}
 \noindent where $D^2 u = \{D^\alpha u, \alpha \in \mathbb{N}^2,|\alpha|\leq 2\}$ and where, for a multi-index $\alpha = (\alpha_1,\alpha_2)\in \mathbb{N}^2$, $|\alpha|= \alpha_1+\alpha_2$ and $D^\alpha u = \frac{\partial^{|\alpha|} u}{\partial_{x_1}^{\alpha_1} \partial_{x_2}^{\alpha_2}}$. The operator $F$ is said to be elliptic at $u_1 \in C^\infty({\mathbb R}^2)$ (see Ref. \cite{Taylor1996}) if its linearization $DF(u_1)$ is an elliptic, linear differential operator.
We state the following proposition:

\begin{proposition}
Eq. (\ref{TetHomo}) can be put in the form
\begin{equation}
\label{TetHomoDevStatio}
\mathbf{f}(x,D^2 \theta_0(x)) = 0, \quad x \in {\mathbb R}^2, 
\end{equation}
\noindent where $\mathbf{f}(x,D^2 \theta_0)$ is the following operator, quasi linear in $\theta_0$:
\begin{equation}\label{f2}
\mathbf{f}(x, D^2 \theta_0) = \sum_{i,j=1}^2 \partial_{x_i} \big( a_{ij} (\theta_0) \partial_{x_j} \theta_0 \big) - \alpha_5 h(\theta_0).
\end{equation}
\noindent Here, $h(\theta_0) = \langle \partial_\theta U^1 \rangle $ and $A(\theta_0)=(a_{ij}(\theta_0))_{i,j=1,2}$ is a $2 \times 2$ matrix such that:
\begin{equation}\label{aij}
A(\theta_0) = \begin{pmatrix} \alpha_2 - \alpha_3 \cos 2\theta_0 & -\alpha_3 \sin 2 \theta_0 \\ -\alpha_3 \sin 2 \theta_0 & \alpha_2 + \alpha_3 \cos 2 \theta_0\end{pmatrix}.
\end{equation}
\noindent Moreover, if the following condition is satisfied for all $r\in \mathbb{R}^+$:
\begin{equation}\label{cond}
\mathcal{A}(r)+c(r) \geq 0,
\end{equation}
\noindent where $\mathcal{A}(r)$ is given by (\ref{eq:curlyA}), then $F(\theta) = \mathbf{f}(x,D^2 \theta)$ is elliptic at $\theta_1$ for all $\theta_1 \in C^2({\mathbb R}^2)$. 
\end{proposition}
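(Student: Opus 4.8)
My plan has two parts, matching the two assertions of the proposition.

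\textbf{Divergence form.} To obtain (\ref{f2})--(\ref{aij}) I would set $\omega_0=\omega(\theta_0)$, $\omega_0^\perp=(-\sin\theta_0,\cos\theta_0)$ and introduce the symmetric reflection matrix $R(\theta_0)=\omega_0\otimes\omega_0-\omega_0^\perp\otimes\omega_0^\perp$, for which a direct computation shows that $A(\theta_0)=\alpha_2 I-\alpha_3 R(\theta_0)$ coincides with the matrix (\ref{aij}). The key observation is that differentiating $R$ in $\theta_0$ reproduces the off-diagonal structure appearing in the first-order term of (\ref{TetHomo}): one checks $\tfrac12 R'(\theta_0)=\omega_0\otimes\omega_0^\perp+\omega_0^\perp\otimes\omega_0$. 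I then expand the divergence using the chain rule $\partial_{x_i}\big(a_{ij}(\theta_0)\big)=a_{ij}'(\theta_0)\partial_{x_i}\theta_0$, which splits $\sum_{i,j}\partial_{x_i}(a_{ij}\partial_{x_j}\theta_0)$ into a second-order part $A(\theta_0):\nabla_x^2\theta_0=\alpha_2\Delta_x\theta_0-\alpha_3\,R:\nabla_x^2\theta_0$ and a first-order part $A'(\theta_0):(\nabla_x\theta_0\otimes\nabla_x\theta_0)=-2\alpha_3\,(\tfrac12 R'):(\nabla_x\theta_0\otimes\nabla_x\theta_0)$. Substituting the two identities for $R$ and $\tfrac12 R'$ recovers term by term the left-hand side of (\ref{TetHomo}); moving $\alpha_5\langle\partial_\theta U^1\rangle$ to the left with $h(\theta_0)=\langle\partial_\theta U^1\rangle$ yields (\ref{TetHomoDevStatio})--(\ref{f2}). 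This part is a routine verification.

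\textbf{Ellipticity.} Here I first compute the linearization $DF(\theta_1)$. Writing $F(\theta_1+sv)$ and differentiating at $s=0$, the only contribution that is second order in $v$ comes from $\sum_{i,j}a_{ij}(\theta_1)\partial_{x_i}\partial_{x_j}v$, since the terms carrying $a_{ij}'$ or $h'$ are of order at most one in $v$. Hence the principal symbol of $DF(\theta_1)$ at $(x,\xi)$ is $\sum_{i,j}a_{ij}(\theta_1(x))\xi_i\xi_j=\xi^{T}A(\theta_1(x))\xi$, and by definition $F$ is elliptic at $\theta_1$ iff this quadratic form is sign-definite for every $x$, i.e.\ iff $A(\theta_1(x))$ is a definite matrix. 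Since $A=\alpha_2 I-\alpha_3 R$ and $R$ is a reflection ($R^2=I$, $\operatorname{tr}R=0$, eigenvalues $\pm1$), the eigenvalues of $A$ are $\alpha_2-\alpha_3$ and $\alpha_2+\alpha_3$, independently of $\theta_0$; equivalently $\operatorname{tr}A=2\alpha_2$ and $\det A=\alpha_2^2-\alpha_3^2$. Because the eigenvalues do not depend on $\theta_0$, definiteness holds either at every $\theta_1$ or at none, which is what makes the conclusion uniform in $\theta_1$.

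It then remains to prove that $A$ is positive definite, i.e.\ (using $\alpha_2>0$) that $\det A=(\alpha_2-\alpha_3)(\alpha_2+\alpha_3)>0$. Using (\ref{alpha1})--(\ref{alpha3}) I would write $\alpha_2\pm\alpha_3=d+\tfrac{dL^2 r}{6\alpha_1}\big(c(r)\pm\mathcal{A}(r)\big)$. The preliminary sign facts I need are $\alpha_1(r)=1-Z(r)^{-2}>0$ and $c(r)>0$ for $r>0$ (both following from $Z(r)>1$, by Jensen applied to $e^{r\cos 2\theta}$, and from $c(r)=(\log Z)'$ being the positive order parameter $\langle\cos2\theta\rangle$ of the von Mises density), together with $d,L^2>0$; these guarantee $\tfrac{dL^2 r}{6\alpha_1}>0$ and $\alpha_2>0$. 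The factor $\alpha_2+\alpha_3$ is then controlled exactly by hypothesis (\ref{cond}): $c+\mathcal{A}\ge0$ gives $\alpha_2+\alpha_3\ge d>0$. For the remaining factor I would show $\alpha_2-\alpha_3>0$, which reduces to the unconditional inequality $c(r)\ge\mathcal{A}(r)$ for all $r\ge0$; a short expansion confirms $c-\mathcal{A}=\tfrac r2+O(r^2)$ near $r=0$ and $c-\mathcal{A}\to2$ as $r\to\infty$. Combining these, $\det A>0$ and $\operatorname{tr}A>0$, so $A$ is positive definite and $F$ is elliptic at every $\theta_1$.

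I expect the main obstacle to be this last step: proving $c(r)\ge\mathcal{A}(r)$ (equivalently $\alpha_2-\alpha_3>0$) for \emph{all} $r\ge0$ rather than only in the asymptotic regimes, since it hinges on monotonicity and convexity properties of the special functions $Z(r)$ and $c(r)$ (e.g.\ bounds on $c(r)/r$ in the intermediate range) rather than on the explicit hypothesis (\ref{cond}). By contrast, the positivity of $\alpha_2+\alpha_3$ is precisely the content of (\ref{cond}), and the reduction of ellipticity to the definiteness of the constant-eigenvalue matrix $A$ is the conceptually clean part.
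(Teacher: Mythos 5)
Your reduction is the same as the paper's: the identity $\tfrac{d}{d\theta}[\omega\otimes\omega-\omega^\perp\otimes\omega^\perp]=2[\omega\otimes\omega^\perp+\omega^\perp\otimes\omega]$ for the divergence form, the observation that only the $a_{ij}(\theta_1)\partial_{x_i}\partial_{x_j}v$ terms survive in the principal part of $DF(\theta_1)$, and the computation of the $\theta_0$-independent eigenvalues $\alpha_2\pm\alpha_3$ of $A=\alpha_2 I-\alpha_3 R$. The decomposition $\alpha_2\pm\alpha_3=d+\tfrac{dL^2r}{6\alpha_1}\big(c(r)\pm\mathcal{A}(r)\big)$ and the conclusion $\alpha_2+\alpha_3\ge d>0$ from hypothesis \eqref{cond} are also exactly right.

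The genuine gap is the factor $\alpha_2-\alpha_3$. You correctly identify that it reduces to $c(r)\ge\mathcal{A}(r)$ for all $r\ge0$, but you only verify this at $r\to0$ and $r\to\infty$ and flag the intermediate range as an open obstacle; as written the ellipticity claim is therefore not proved. The paper closes precisely this hole, and does so by proving the stronger unconditional statement $\mathcal{A}(r)\le 0$ (equivalently $\alpha_3\le0$), after which $c(r)\ge0\ge\mathcal{A}(r)$ is immediate and $\alpha_2>|\alpha_3|$ becomes equivalent to the single condition $\alpha_2+\alpha_3>0$ handled by \eqref{cond}. The key computation is an integration by parts giving
\begin{equation*}
\frac{c(r)}{r}=\frac{1}{Z(r)}\int_{-\pi/2}^{\pi/2}\sin^2 2\theta\, e^{r\cos2\theta}\,\frac{d\theta}{\pi}=1-\frac{Z^{(2)}(r)}{Z(r)},
\end{equation*}
combined with the inequality $Z(r)\le 2Z^{(2)}(r)$ (proved by showing $\int\sin^2 2\theta\,e^{r\cos2\theta}\,d\theta\le\int\cos^2 2\theta\,e^{r\cos2\theta}\,d\theta$, since their difference is $r$ times an increasing function of $r$ vanishing at $r=0$). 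This yields $c(r)/r\le 1/2$, hence $\mathcal{A}(r)=\tfrac{1}{4Z^2}-1+\tfrac{3}{2}\tfrac{c(r)}{r}\le\tfrac14-1+\tfrac34=0$. If you substitute this lemma for your asymptotic check, your argument becomes complete and coincides with the paper's.
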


\begin{proof} 
 For any $\theta \in [-\frac{\pi}{2} ,\frac{\pi}{2})$, letting $\omega(\theta) = (\cos \theta,\sin \theta)$ and $\omega^\perp(\theta) = (-\sin \theta,\cos \theta)$, we have:
\begin{equation*}
\frac{d}{d\theta} [\omega(\theta) \otimes \omega(\theta) - \omega^\perp(\theta) \otimes \omega^\perp(\theta)] = 2 [\omega(\theta) \otimes \omega^\perp(\theta) + \omega^\perp(\theta) \otimes \omega(\theta)].
\end{equation*}
\noindent Let $F: C^\infty({\mathbb R}^2) \rightarrow C^\infty({\mathbb R}^2)$ be the non linear differential operator defined by:
\begin{equation*}
F(\theta_0) = \mathbf{f}(x,D^2 \theta_0),
\end{equation*}
\noindent for $\mathbf{f}$ defined by~\eqref{f2}. Let $DF(\theta_1)$ denote its linearization at $\theta_1$. Then, $DF(\theta_1)$ is a linear map from $C^2({\mathbb R}^2)$ to $C^0({\mathbb R}^2)$ and reads, for $v \in C^2({\mathbb R}^2)$:
\begin{equation}
DF(\theta_1) v = \frac{\partial F(\theta_1 + s v)}{\partial s}\bigg|_{s=0} = \sum_{i,j=1}^2 \bigg(a_{ij}(\theta_1) \partial_{x_i} \partial_{x_j} v\bigg) + L v,
\end{equation}
\noindent where $L$ is a linear differential operator of order 1 the coefficients of which depend on $D\theta_1$:
\begin{equation*}
\begin{split}
L v =& \sum_{i,j=1}^2 \bigg( a'_{ij}(\theta_1) (\partial_{x_i}\theta_1 \partial_{x_j} v + \partial_{x_i} v \partial_{x_j} \theta_1)\bigg) + \sum_{i=1}^2 \partial_{x_i} U^0 \partial_{x_i} v \\
&+ \sum_{i,j=1}^2 \bigg( a''_{ij}(\theta_1) \partial_{x_i}\theta_1 \partial_{x_j} \theta_1 + a'_{ij}(\theta_1) \partial_{x_i x_j} \theta_1 - \alpha_5 h'(\theta_1)\bigg) v,
\end{split}
\end{equation*}
\noindent where $a'_{ij}(\theta_1)$ and $a''_{ij}(\theta_1)$ are the first and second order derivatives of the coefficients of matrix $A$ which read:
\begin{empheq}[left=\empheqlbrace]{align*}
(a'_{ij}(\theta_1))_{i,j=1,2} &= 2 \alpha_3 \begin{pmatrix} \sin 2\theta_1 & -\cos 2 \theta_1 \\
 - \cos 2 \theta_1 & -\sin 2 \theta_1  \end{pmatrix},\\ (a''_{ij}(\theta_1))_{i,j=1,2} &= 4 \alpha_3 \begin{pmatrix} \cos 2\theta_1 & \sin 2 \theta_1 \\  \sin 2 \theta_1 & -\cos 2 \theta_1  \end{pmatrix}.
\end{empheq}
\noindent Therefore, the linearization of $F$ at $\theta_1$ is elliptic provided that the matrix $A(\theta_1) = \big(a_{ij}(\theta_1)\big)_{i,j=1,2}$ is positive-definite.
 
Note that the determinant of the matrix $A(\theta_1) = (a_{ij}(\theta_1))_{i,j=1,2}$ is given by $\det(A(\theta_1)) = \alpha_2^2 - \alpha_3^2$ and does not depend on $\theta_1$. Moreover, $\det(A(\theta_1)) >0$ provided that $|\frac{\alpha_2}{\alpha_3}|>1$. The eigenvalues of the matrix $A(\theta_1)$ solve \begin{equation*}
\det(A(\theta_1) - \lambda I) = \lambda^2 - 2 \lambda \alpha_2 + \alpha_2^2 - \alpha_3^2=0
\end{equation*}
\noindent and the determinant $\Delta = 4 \alpha_3^2$ is strictly positive as long as $\alpha_3 \neq 0$. In this case, the matrix $A(\theta_1)$ has two distinct real eigenvalues given by: 
\begin{equation*}
\lambda ^\pm = (\alpha_2 \pm \alpha_3).
\end{equation*}
\noindent Therefore, the matrix $A(\theta_1)$ is positive definite if and only if $\alpha_2 > |\alpha_3|$.

\noindent We now analyse the sign of each coefficient $\alpha_1,\alpha_2,\alpha_3$. First of all (see Eq.  \eqref{Mr}), the $p$-th derivative $Z^{(p)}$ of $Z$ with respect to $r$ reads:
\begin{equation*}
Z^{(p)}(r) =  \int\limits_{-\pi/2}^{\pi/2} (\cos 2\theta)^{p} e^{r\cos 2\theta}\frac{d\theta}{\pi},
\end{equation*}
\noindent and we have $Z^{(2k)}(r) \geq 0$ for all $k\in \mathbb{N}^+$ and all $r\in\mathbb{R}^+$ as the functions $\theta \rightarrow (\cos 2\theta)^{2k}e^{r\cos 2\theta}$ are positive for any $r\in \mathbb{R}^+$. We deduce that $Z^{(2k+1)}(r)$ are increasing functions of $r$ for any $k\in \mathbb{N}^+$. Note that from the symmetry of the function $\cos 2\theta$, we have for any $k\in \mathbb{N}^+$:
\begin{equation*}
 \int\limits_{-\pi/2}^{\pi/2} (\cos 2\theta)^{2k+1} \frac{d\theta}{\pi}= Z^{(2k+1)}(0) = 0.
\end{equation*}
\noindent Therefore, we also have that $Z^{(2k+1)}(r) \geq Z^{(2k+1)}(0) \geq 0$ for any $k\in \mathbb{N}$. We thus obtain that for any $p \in \mathbb{N}$ and any $r\in \mathbb{R}^+$:
\begin{equation*}
Z^{(p)}(r) \geq Z^{(p)}(0) \geq 0,
\end{equation*}
\noindent and we note that $Z(r) \rightarrow \infty$ as $r \rightarrow \infty$. Moreover, as $Z(0) =  1$ we deduce $Z(r) \geq 1$ for any $r\in [0,+\infty)$. We also note that:
\begin{equation*}
c(r) = \frac{Z^{(1)}(r)}{Z(r)} \geq \frac{Z^{(1)}(0)}{Z(r)} \geq 0,
\end{equation*}
\noindent and we have:
\begin{equation*}
\alpha_1(r) \geq 0, \; \; \alpha_2(d,r,L^2) \geq 0 \; \; \forall (r,L,d)\in \mathbb{R}^+ \times  \mathbb{R} \times \mathbb{R}^+.
\end{equation*}
\noindent Now, by integration by parts, we can write:
\begin{equation}\label{crr}
\begin{split}
\frac{c(r)}{r} &= \frac{1}{r Z(r)}\int\limits_{-\pi/2}^{\pi/2} \cos 2\theta e^{r\cos 2\theta}\frac{d\theta}{\pi} = \frac{1}{Z(r)} \int\limits_{-\pi/2}^{\pi/2} \sin^2 2\theta e^{r\cos 2\theta}\frac{d\theta}{\pi}\\
&= 1 -\frac{1}{ Z(r)} \int\limits_{-\pi/2}^{\pi/2} \cos^2 2\theta e^{r\cos 2\theta}\frac{d\theta}{\pi} = 1 - \frac{Z^{(2)}(r)}{Z(r)}.
\end{split}
\end{equation}
\noindent We now show that
 \begin{equation*}
\frac{Z^{(2)}(r)}{Z(r)} \geq \frac{1}{2},
\end{equation*}
\noindent or, equivalently, that
 \begin{equation*}
Z(r) \leq 2 Z^{(2)}(r).
\end{equation*}
\noindent Indeed,
\begin{equation}\label{eq:ZZ2}
Z(r) = \int\limits_{-\frac{\pi}{2}}^{\frac{\pi}{2}} e^{r\cos 2\theta} \frac{d\theta}{\pi} = \int\limits_{-\frac{\pi}{2}}^{\frac{\pi}{2}} \cos^2 2\theta e^{r\cos 2\theta} \frac{d\theta}{\pi} + \int\limits_{-\frac{\pi}{2}}^{\frac{\pi}{2}} \sin^2 2\theta e^{r\cos 2\theta} \frac{d\theta}{\pi},
\end{equation}
\noindent and, by integration by parts, we have:
\begin{equation*}
\int\limits_{-\frac{\pi}{2}}^{\frac{\pi}{2}} \sin^2 2\theta e^{r\cos 2\theta} \frac{d\theta}{\pi} = \int\limits_{-\frac{\pi}{2}}^{\frac{\pi}{2}} \cos^2 2\theta e^{r\cos 2\theta} \frac{d\theta}{\pi} - r\int\limits_{-\frac{\pi}{2}}^{\frac{\pi}{2}} \sin^2 2\theta \cos 2 \theta e^{r\cos 2\theta} \frac{d\theta}{\pi}.
\end{equation*}
\noindent To show that $\int\limits_{-\frac{\pi}{2}}^{\frac{\pi}{2}} \sin^2 2\theta \cos 2 \theta e^{r\cos 2\theta} \frac{d\theta}{\pi}$ is positive, we can note that it is an increasing function of $r$ and that for $r=0$ we have $\int\limits_{-\frac{\pi}{2}}^{\frac{\pi}{2}} \sin^2 2\theta \cos 2\theta \frac{d\theta}{\pi} = 0$. Indeed, the derivative of this term with respect to $r$ reads:
\begin{equation*}
\frac{d}{dr} \big(\int\limits_{-\frac{\pi}{2}}^{\frac{\pi}{2}} \sin^2 2\theta \cos 2 \theta e^{r\cos 2\theta} \frac{d\theta}{\pi}\big) =  \int\limits_{-\frac{\pi}{2}}^{\frac{\pi}{2}} \sin^2 2\theta \cos^2 2 \theta e^{r\cos 2\theta} \frac{d\theta}{\pi},
\end{equation*}
\noindent which is positive for any $r\geq 0$. Therefore:
\begin{equation*}
\int\limits_{-\frac{\pi}{2}}^{\frac{\pi}{2}} \sin^2 2\theta e^{r\cos 2\theta} \frac{d\theta}{\pi} \leq  \int\limits_{-\frac{\pi}{2}}^{\frac{\pi}{2}} \cos^2 2\theta e^{r\cos 2\theta} \frac{d\theta}{\pi},
\end{equation*}
\noindent for any $r \geq 0$, and inserting this expression into Eq. \eqref{eq:ZZ2}, we obtain:
\begin{equation*}
Z(r) \leq 2  \int\limits_{-\frac{\pi}{2}}^{\frac{\pi}{2}} \cos^2 2\theta e^{r\cos 2\theta} \frac{d\theta}{\pi} = 2 Z^{(2)}(r).
\end{equation*}
\noindent All together, we have:
\begin{equation*}
\frac{c(r)}{r} \leq \frac{1}{2},
\end{equation*}
\noindent for any $r\in \mathbb{R}^+$. This relation together with the fact that $1 - \frac{1}{4Z^2(r)} \geq \frac{3}{4}$ leads to:
\begin{equation*}
\alpha_3(d,r,L^2) = \frac{dL^2r}{6\alpha_1(r)} ( \frac{3c(r)}{2r} - (1 - \frac{1}{4Z^2})) \leq 0.
\end{equation*}
\noindent Now, we can write:
\begin{equation*}
|\frac{\alpha_2}{\alpha_3}| > 1 \; \Leftrightarrow \alpha_2 > -\alpha_3,
\end{equation*}
\noindent or equivalently, using Eqs.~\eqref{alpha2} and \eqref{alpha3}:
\begin{equation*}
|\frac{\alpha_2}{\alpha_3}| > 1 \; \Leftrightarrow \frac{6 \alpha_1(r)}{drL^2} > -(\mathcal{A}(r)+c(r)).
\end{equation*}
\noindent Therefore, if \eqref{cond} holds, then $\alpha_2 > |\alpha_3|$ and the matrix $A(\theta_1)$ is positive definite for all $r \in [0,1], \;  L \in \mathbb{R}^+, \; d \in \mathbb{R}$, independently of $\theta_1 \in C^2({\mathbb R}^2)$. We conclude that $F$ is elliptic at $\theta_1$ for all $\theta_1 \in C^2({\mathbb R}^2)$, provided \eqref{cond} holds.
\end{proof}

\begin{remark} As shown by Fig.~\ref{figcoeff}
, $\mathcal{A}(r)+c(r)$ is positive for any $r \in \mathbb{R}^+$. The rigorous proof of this fact will be the subject of future work. 
\begin{figure}[h!]
\centering
\includegraphics[scale=0.5]{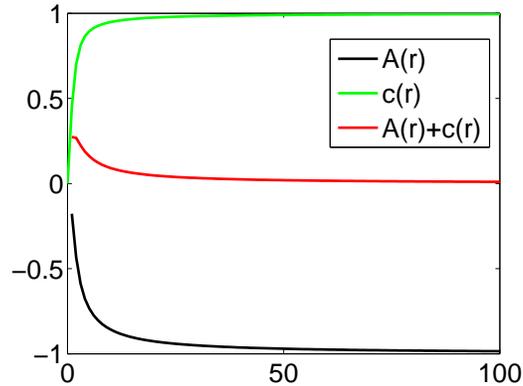}
\caption{Functions $\mathcal{A}(r)$ (black), $c(r)$ (green) and $\mathcal{A}(r)+c(r)$ (red) as functions of $r\in [0,100]$. }
   \label{figcoeff}
\end{figure}
\end{remark}

\setcounter{equation}{0}
\section{Conclusion}
\label{sec:conclu}

In this paper, we have formally derived a macroscopic model for temporarily linked fibers interacting through alignment at the links. We have shown that the corresponding kinetic model involves two distribution functions: the fiber distribution function and the cross-link distribution function. The latter can be seen as a joint two-particle fiber distribution function. This model provides a unique explicit example of a kinetic model closed at the level of the two particle distribution function. We then considered the regime of a fast fiber linking/unlinking process, where the link distribution function can be expressed simply in terms of the fiber distribution function. We studied the diffusive limit of the resulting equation and obtained a system of two coupled nonlinear diffusion equations for the fiber density and mean orientation. In the homogeneous fiber density case, we showed that the resulting quasilinear problem is elliptic. Future works will deeper investigate the mathematical properties of the models, such as rigorously proving the mean-field kinetic limit of the particle system or proving existence and uniqueness of smooth solutions for the macroscopic diffusion system. Numerical simulations will be performed to validate the macroscopic model by comparison with the individual based model. Further perspectives are the removal of the fast fiber linking/unlinking hypothesis, in order to understand how a finite lifetime of the cross-links affects the macroscopic dynamics. 
\newpage
\appendix
%----------------------------APPENDIX I-----------------
%---------------------derivation micro-macro------------
\section{Proof of Theorem~\ref{thm1}}\label{proofIBMcont}
\subsection{Evolution equation for the fibers} 
 For all observable functions $\Phi(x,\theta)$, we define:
\begin{equation*}
\begin{split}
 \langle f^N,\Phi \rangle  = \int\limits \Phi(x,\theta) f^N(t,x,\theta)  dx_1d\theta = \frac{1}{N} \sum_{i=1}^N \Phi(X_i(t),\theta_i(t)).
\end{split}
\end{equation*}
\noindent Similarly, for all two-particle observable functions $\Psi (x_1,\theta_1,\ell_1,x_2,\theta_2,\ell_2)$, we define: 
\begin{equation*}
\begin{split}
\langle \hspace{-0.8mm} \langle g^K,\Psi \rangle \hspace{-0.8mm} \rangle = \int &\Psi(x_1,\theta_1,\ell_1,x_2,\theta_2,\ell_2) g^K(x_1,\theta_1,\ell_1,x_2,\theta_2,\ell_2) dx_1dx_2 \frac{d\theta_1}{\pi} \frac{d\theta_2}{\pi}d\ell_1d\ell_2\\
= \frac{1}{2K} \sum_{k=1}^K \bigg(&\Psi(X_{i(k)},\theta_{i(k)},\ell^k_{i(k)}, X_{j(k)}, \theta_{j(k)},\ell^k_{j(k)}) \\
&+ \Psi(X_{j(k)},\theta_{j(k)},\ell^k_{j(k)}, X_{i(k)}, \theta_{i(k)},\ell^k_{i(k)}) \bigg),
\end{split}
\end{equation*}
\noindent where integrals over $x$ are carried over ${\mathbb R}^2$, in $\theta$ over $(-\frac{\pi}{2},\frac{\pi}{2})$ and in $\ell$ over $(-\frac{L}{2},\frac{L}{2})$. We recall the notations $C^k_{i(k),j(k)}=(X_{i(k)},\theta_{i(k)},\ell^k_{i(k)},X_{j(k)},\theta_{j(k)},\ell^k_{j(k)})$ (resp. $C^k_{j(k),i(k)} = (X_{j(k)},\theta_{j(k)},\ell^k_{j(k)},X_{i(k)},\theta_{i(k)},\ell^k_{i(k)})$). Then: 
\begin{equation*}
 \frac{d}{dt}\langle f^N,\Phi \rangle  =  \frac{1}{N} \sum_{i=1}^N \bigg(\nabla_x \Phi(X_i(t),\theta_i(t)) \cdot \frac{d X_i(t)}{dt} +\partial_\theta \Phi(X_i(t),\theta_i(t)) \frac{d\theta_i(t)}{dt} \bigg).
\end{equation*}
\noindent Using~\eqref{motionIBMX} and~\eqref{motionIBMTETA}, we obtain: 
\begin{equation*}
\begin{split}
 \frac{d}{dt}\langle f^N,\Phi \rangle & \\
 =  -\frac{1}{N} \sum_{i=1}^N \bigg[ (\mu &\nabla_x \Phi  \cdot \nabla_x U + \lambda \partial_{\theta} \Phi \partial_{\theta} U)(X_i,\theta_i)\\
  + d (\mu \nabla_x \Phi &\cdot \nabla_{x} \log(\tilde{f}^N) + \lambda \partial_{\theta} \Phi  \partial_{\theta} \log(\tilde{f}^N) )(X_i,\theta_i) \\
 + \mu \nabla_x \Phi&(X_i,\theta_i) \cdot \frac{1}{2} \sum_{k=1}^K  (\nabla_{x_1}V \delta_{i(k)}(i) + \nabla_{x_2}V \delta_{j(k)}(i))(C^k_{i(k),j(k)}) \\
 + \lambda & \partial_{\theta} \Phi(X_i,\theta_i) \frac{1}{2} \sum_{k=1}^K (\partial_{\theta_1} V \delta_{i(k)}(i) + \partial_{\theta_2}V \delta_{j(k)}(i))(C^k_{i(k),j(k)}) \\
&+  \lambda \partial_{\theta} \Phi(X_i,\theta_i) \frac{1}{2} \sum_{k=1}^K( \partial_{\theta_1} b \; \delta_{i(k)}(i) + \partial_{\theta_2} b \; \delta_{j(k)}(i))(\theta_{i(k)},\theta_{j(k)})\bigg].
\end{split}
\end{equation*}
\noindent We get, using the definition of a distributional derivative:
\begin{equation*}
\begin{split}
 \frac{d}{dt}\langle f^N,\Phi \rangle= \langle \mu \nabla_x \cdot & \big(f^N \nabla_x (U+d\log{\tilde{f}^N})\big) + \lambda \partial_\theta \big(f^N\partial_\theta(U + d\log{\tilde{f}^N})\big),\Phi \rangle \\
 -  \frac{\mu}{2N} \sum_{i=1}^N \bigg(\nabla_x &\Phi(X_i,\theta_i) \cdot \sum_{k=1}^K (\nabla_{x_1}V \delta_{i(k)}(i) + \nabla_{x_2}V \delta_{j(k)}(i))(C^k_{i(k),j(k)})\bigg) \\
 - \frac{\lambda}{2N} \sum_{i=1}^N& \bigg(\partial_{\theta} \Phi(X_i,\theta_i)\sum_{k=1}^K (\partial_{\theta_1} V \delta_{i(k)}(i) + \partial_{\theta_2}V \delta_{j(k)}(i))(C^k_{i(k),j(k)})\bigg) \\
-  \frac{\lambda}{2N} &\sum_{i=1}^N \bigg(\partial_{\theta} \Phi(X_i,\theta_i)\sum_{k=1}^K( \partial_{\theta_1} b \; \delta_{i(k)}(i) + \partial_{\theta_2} b \; \delta_{j(k)}(i))(\theta_{i(k)},\theta_{j(k)})\bigg).
\end{split}
\end{equation*}
\noindent Now, exchanging the sums in $i$ and $k$ in the previous equation, one obtains:
\begin{equation*}
\begin{split}
 \frac{d}{dt} \langle f^N,\Phi \rangle =  \langle\mu \nabla_x \cdot \big(f^N \nabla_x &(U+d\log{\tilde{f}^N})\big) + \lambda \partial_\theta \big(f^N\partial_\theta(U + d\log{\tilde{f}^N})\big),\Phi \rangle  \\
 - \frac{\mu}{2N} \sum_{k=1}^K \bigg(  \nabla_{x_1}V(C^k_{i(k),j(k)}) \cdot \nabla_x &\Phi(X_{i(k)},\theta_{i(k)}) + \nabla_{x_2}V(C^k_{i(k),j(k)}) \cdot \nabla_x \Phi(X_{j(k)},\theta_{j(k)}))\bigg)\\
 -\frac{\lambda}{2N}\sum_{k=1}^K \bigg( \partial_{\theta_1} V(C^k_{i(k),j(k)})& \partial_\theta \Phi(X_{i(k)},\theta_{i(k)})  + \partial_{\theta_2}V(C^k_{i(k),j(k)}) \partial_{\theta}\Phi(X_{j(k)},\theta_{j(k)})) \bigg) \\
-  \frac{\lambda}{2N}\sum_{k=1}^K\bigg( \partial_{\theta_1} b(\theta_{i(k)},\theta_{j(k)}&) \partial_{\theta} \Phi(X_{i(k)},\theta_{i(k)}) + \partial_{\theta_2} b(\theta_{i(k)},\theta_{j(k)}) \partial_{\theta} \Phi(X_{j(k)},\theta_{j(k)})\bigg).
\end{split}
\end{equation*}
\noindent From the symmetry of $V$ (see Eq.~\eqref{V}), the following expressions hold:
$$ 
\nabla_{x_2}V(C^k_{i(k),j(k)}) = \nabla_{x_1} V(C^k_{j(k),i(k)}),\quad \partial_{\theta_2}V(C^k_{i(k),j(k)}) = \partial_{\theta_1} V(C^k_{j(k),i(k)}), 
$$
 \noindent and from the symmetry of $b$, we have:
 \begin{equation*}
 \partial_{\theta_2} b(\theta_{i(k)},\theta_{j(k)}) = \partial_{\theta_1} b(\theta_{j(k)},\theta_{i(k)}),
 \end{equation*}
 \noindent leading to:
\begin{equation*}
\begin{split}
 \frac{d}{dt}\langle f^N,\Phi \rangle =  \langle \mu \nabla_x \cdot &\big(f^N \nabla_x (U+d\log{\tilde{f}^N})\big) + \lambda \partial_\theta \big(f^N\partial_\theta(U + d\log{\tilde{f}^N})\big),\Phi \rangle  \\
 - \frac{\mu}{2N} \sum_{k=1}^K \bigg(  \nabla_{x_1}V&(C^k_{i(k),j(k)}) \cdot  \nabla_x \Phi(X_{i(k)},\theta_{i(k)})  + \nabla_{x_1}V(C^k_{j(k),i(k)}) \cdot \nabla_x \Phi(X_{j(k)},\theta_{j(k)}))\bigg)\\
 -\frac{\lambda}{2N}\sum_{k=1}^K \bigg(& \partial_{\theta_1} V(C^k_{i(k),j(k)}) \partial_\theta \Phi(X_{i(k)},\theta_{i(k)})  + \partial_{\theta_1}V(C^k_{j(k),i(k)}) \partial_{\theta}\Phi(X_{j(k)},\theta_{j(k)})) \bigg) \\
-  \frac{\lambda}{2N}\sum_{k=1}^K&\bigg( \partial_{\theta_1} b(\theta_{i(k)},\theta_{j(k)}) \partial_{\theta} \Phi(X_{i(k)},\theta_{i(k)}) + \partial_{\theta_1} b(\theta_{j(k)},\theta_{i(k)}) \partial_{\theta} \Phi(X_{j(k)},\theta_{j(k)})\bigg),
\end{split}
\end{equation*}
\noindent or again:
\begin{equation*}
\begin{split}
 \frac{d}{dt}\langle f^N,\Phi \rangle =  \langle \mu \nabla_x \cdot \big(f^N \nabla_x (U+d\log{\tilde{f}^N})\big) +& \lambda \partial_\theta \big(f^N\partial_\theta(U + d\log{\tilde{f}^N})\big),\Phi \rangle  \\
 - \frac{K}{N}  \langle  \hspace{-0.8mm} \langle  \frac{1}{2K}\sum_{k=1}^K \big(\delta_{(C^k_{i(k),j(k)})} + \delta_{(C^k_{j(k),i(k)})} \big)& (x_1,\theta_1,\ell_1,x_2,\theta_2,\ell_2), \\
 \big( \mu \nabla_{x_1} &V(x_1,\theta_1,\ell_1,x_2,\theta_2,\ell_2) \cdot \nabla_x \Phi(x_1,\theta_1)\\
  +&\lambda \partial_{\theta_1} V(x_1,\theta_1,\ell_1,x_2,\theta_2,\ell_2) \partial_{\theta_1} \Phi(x_1,\theta_1)
 \\
 & +  \lambda \partial_{\theta_1} b(\theta_1,\theta_2) \partial_{\theta_1} \Phi(x_1,\theta_1) \big) \rangle  \hspace{-0.8mm} \rangle .
\end{split}
\end{equation*}
\noindent Therefore, we obtain:
\begin{equation*}
\begin{split}
 \frac{d}{dt}\langle f^N,\Phi \rangle &= \langle \mu \nabla_x \cdot \big(f^N \nabla_x (U+d\log{\tilde{f}^N})\big) + \lambda \partial_\theta \big(f^N\partial_\theta(U + d\log{\tilde{f}^N})\big),\Phi \rangle  \\
 - \frac{K}{N}& \langle  \hspace{-0.8mm} \langle  g^K,\mu \nabla_{x_1} V(x_1,\theta_1,\ell_1,x_2,\theta_2,\ell_2) \cdot \nabla_{x} \Phi(x_1,\theta_1) \rangle  \hspace{-0.8mm} \rangle  \\
 &- \frac{K}{N}  \langle  \hspace{-0.8mm} \langle  g^K,\lambda \bigg(\partial_{\theta_1} V(x_1,\theta_1,\ell_1,x_2,\theta_2,\ell_2) + \partial_{\theta_1} b(\theta_1,\theta_2) \bigg) \partial_\theta\Phi(x_1,\theta_1) \rangle  \hspace{-0.8mm} \rangle .
 \end{split}
\end{equation*} 
\noindent Finally, we get:
\begin{equation}\label{kinetf}
\begin{split}
 \frac{d}{dt}\langle f^N,\Phi \rangle =& \langle \mu \nabla_x \cdot \big(f^N \nabla_x (U+d\log{\tilde{f}^N})\big) + \lambda \partial_\theta \big(f^N\partial_\theta(U + d\log{\tilde{f}^N})\big),\Phi \rangle  \\
 &+ \mu \frac{K}{N} \langle  \hspace{-0.8mm} \langle  \nabla_{x_1} \cdot \big(g^K\nabla_{x_1} V \big), \Phi(x_1,\theta_1)  \rangle  \hspace{-0.8mm} \rangle  \\
 &+ \lambda \frac{K}{N}  \langle  \hspace{-0.8mm} \langle  \partial_{\theta_1} \big( g^K \partial_{\theta_1} V + \partial_{\theta_1} b \big) , \Phi(x_1,\theta_1) \rangle  \hspace{-0.8mm} \rangle .\\
 =& \langle \mu \nabla_x \cdot \big(f^N \nabla_x (U+d\log{\tilde{f}^N})\big) + \lambda \partial_\theta \big(f^N\partial_\theta(U + d\log{\tilde{f}^N})\big)\\
 & +  \frac{K}{N}[[\mu \nabla_{x_1} \cdot \big(g^K\nabla_{x_1} V \big)+ \lambda  \partial_{\theta_1} \big( g^K (\partial_{\theta_1} V + \partial_{\theta_1} b)\big)]](x_1,\theta_1),\Phi(x_1,\theta_1) \rangle ,
 \end{split}
\end{equation} 
\noindent where, for a distribution $T$ acting on functions of $(x_1,\theta_1,\ell_1,x_2,\theta_2,\ell_2)$, we denote by $[[T]](x_1,\theta_1)$ the distribution which to any function $\Phi(x_1,\theta_1)$ associates 
\begin{equation*}
\langle [[T]](x_1,\theta_1),\Phi(x_1,\theta_1) \rangle  =  \langle  \hspace{-0.8mm} \langle  T,\Phi \textbf{1} \rangle  \hspace{-0.8mm} \rangle ,
\end{equation*}
\noindent and where $\textbf{1}$ is the constant function of the variables $(x_1,\theta_1,\ell_1,x_2,\theta_2,\ell_2)$ equal to 1. 
In the formal limit $N \rightarrow \infty$, $\frac{K}{N} \rightarrow \xi$ and given the assumptions on the regularizing sequences $\xi^N$, $\eta^N$, we get that $f^N \rightarrow f$, $\tilde{f}^N \rightarrow f$. Then, $\nabla_x \cdot  (f\nabla_x \log f) = \Delta_x f$ and $\partial_\theta (f \partial_\theta f) = \partial_\theta^2 f$ and we obtain:
\begin{equation}\label{systf}
 \frac{df}{dt} - \mu \nabla_x \cdot ((\nabla_x U) f) - \lambda \partial_{\theta} ((\partial_{\theta} U) f) - \mu \xi \nabla_x \cdot F_1 - \lambda \xi \partial_{\theta} F_2- d\mu\Delta_x f - d \lambda \partial^2_{\theta} f=0 ,
 \end{equation}
 \noindent where,
\begin{empheq}[left=\empheqlbrace]{align*}
F_1(x,\theta) = \int\limits (&g \nabla_x V)(x_1,\theta_1,\ell_1,x_2,\theta_2,\ell_2)  dx_2 \frac{d\theta_2}{\pi}d\ell_1d\ell_2 ,\\
F_2(x,\theta) = \int\limits ( (&g\partial_{\theta_1} V)(x_1,\theta_1,\ell_1,x_2,\theta_2,\ell_2) \\
&+  g(x_1,\theta_1,\ell_1,x_2,\theta_2,\ell_2)\partial_{\theta_1} b(\theta_1,\theta_2))dx_2 \frac{d\theta_2}{\pi}d\ell_1d\ell_2.
\end{empheq}

\subsection{Evolution equation for the fiber links}
 Following the same principle as for $f^N$ and given that the links are maintained over time, i.e.\ $\frac{d\ell^k_{i(k)}}{dt} =\frac{d\ell^k_{j(k)}}{dt}= 0$,   $\forall \; k \in [1,K]$, one can write:
\begin{equation}\label{dg1}
\begin{split}
\frac{d}{dt} \langle  \hspace{-0.8mm} \langle  g^K,\Psi  \rangle  \hspace{-0.8mm} \rangle  =\frac{1}{2K} \sum_{k=1}^k& \bigg[ \nabla_{x_1} \Psi(C^k_{i(k),j(k)})  \cdot \frac{dX_{i(k)}}{dt} + \nabla_{x_1} \Psi(C^k_{j(k),i(k)}) \cdot \frac{dX_{j(k)}}{dt}\\
 & + \nabla_{x_2} \Psi(C^k_{i(k),j(k)}) \cdot \frac{dX_{j(k)}}{dt} + \nabla_{x_2} \Psi(C^k_{j(k),i(k)}) \cdot \frac{dX_{i(k)}}{dt} \\
 &+ \partial_{\theta_1} \Psi(C^k_{i(k),j(k)}) \frac{d\theta_{i(k)}}{dt} + \partial_{\theta_1} \Psi(C^k_{j(k),i(k)}) \frac{d\theta_{j(k)}}{dt} \\
 &+ \partial_{\theta_2} \Psi(C^k_{i(k),j(k)}) \frac{d\theta_{j(k)}}{dt} + \partial_{\theta_2} \Psi(C^k_{j(k),i(k)}) \frac{d\theta_{i(k)}}{dt}  \bigg]\\
  = E_1 + \hdots& + E_4, 
 \end{split}
 \end{equation}
 \noindent where $E_k$ corresponds to the $k$-th line of~\eqref{dg1}. For the sake of simplicity, the computation of $E_1$ only is developed here. The computation of the other ones are similar and omitted. From Eqs.~\eqref{IBM1},~\eqref{IBM2}, one obtains: 
 \begin{equation*}
 \begin{split}
 E_1  = &\frac{1}{2K} \sum_{K=1}^k \bigg[\nabla_{x_1} \Psi(C^k_{i(k),j(k)}) \cdot \frac{dX_{i(k)}}{dt} + \nabla_{x_1} \Psi(C^k_{j(k),i(k)}) \cdot \frac{dX_{j(k)}}{dt}\bigg]\\
  = &- \frac{\mu}{2K} \sum_{k=1}^K \bigg[\bigg(
 \nabla_{x_1} \Psi(C^k_{i(k),j(k)}) \cdot \nabla_{x} \big( U + d \log \tilde{f}^N\big)(X_{i(k)},\theta_{i(k)}) \\
 &+ \nabla_{x_1} \Psi(C^k_{j(k),i(k)}) \cdot \nabla_{x} \big( U + d \log \tilde{f}^N\big) (X_{j(k)},\theta_{j(k)})\\
 &+ \frac{1}{2} \nabla_{x_1} \Psi(C^k_{i(k),j(k)}) \cdot \sum_{k'=1}^K  \bigg(\nabla_{x_1} V \delta_{(i(k'),i(k))} + \nabla_{x_2} V \delta_{(j(k'),i(k))}\bigg)(C^k_{i(k'),j(k')})\\
  &+ \frac{1}{2} \nabla_{x_1} \Psi(C^k_{j(k),i(k)}) \cdot  \sum_{k'=1}^K  \bigg(\nabla_{x_1} V \delta_{(i(k'),j(k))} + \nabla_{x_2} V \delta_{(j(k'),j(k))}\bigg)(C^k_{i(k'),j(k')})\bigg],
 \end{split}
 \end{equation*}
 \noindent where we write $V = V(x_1,\theta_1,\ell_1,x_2,\theta_2,\ell_2)$. Now, exchanging the sums in $k$ and $k'$ and using the symmetry of $V$, one obtains:
  \begin{equation}\label{dg}
 \begin{split}
  E_1= -\mu  \langle  \hspace{-0.8mm} \langle  g^K,\nabla_{x_1} \Psi(x_1,\theta_1,\ell_1,x_2,\theta_2,\ell_2) \cdot &\big(\nabla_x U + d\log \tilde{f}\big)(x_1,\theta_1) \rangle  \hspace{-0.8mm} \rangle \\
 - \frac{\mu}{4K} \sum_{k'=1}^K \bigg(\nabla_{x_1} V(C^k_{i(k'),j(k')}) \cdot \sum_{k=1}^K & \big(\nabla_{x_1} \Psi(C^k_{i(k),j(k)}) \delta_{(i(k),i(k'))} \\
 &+ \nabla_{x_1} \Psi(C^k_{j(k),i(k)}) \delta_{(j(k),i(k'))} \big)\bigg) \\
  - \frac{\mu}{4K} \sum_{k'=1}^K \bigg(\nabla_{x_1} V(C^k_{j(k'),i(k')}) \cdot  \sum_{k=1}^K & \big(\nabla_{x_1} \Psi(C^k_{i(k),j(k)}) \delta_{(i(k),j(k'))} \\
  &+ \nabla_{x_1} \Psi(C^k_{j(k),i(k)}) \delta_{(j(k),j(k'))} \big)\bigg).
  \end{split}
 \end{equation}
\noindent Because there is no restriction on the number of links per fiber, the sums over $k$ cannot be simplified in this case. In order to express the third and fourth terms, the number $C_i^{k'}$ (resp. $C_j^{k'}$) of fibers linked to fiber $i(k')$ (resp. $j(k')$) is introduced:
\begin{empheq}[left=\empheqlbrace]{align*}
C_i^{k'} &= \mbox{Card}(\{ k \; | \; i(k)=i(k') \; or \; j(k) = i(k') \},\\ 
C_j^{k'} &= \mbox{Card}(\{ k \; | \; i(k)=j(k') \; or \; j(k) = j(k') \}, 
\end{empheq}
\noindent where Card denote the cardinal of a set. Then, as $K \rightarrow \infty$, the following expression holds for any chosen fiber $k'$: 
 \begin{equation*}
 \begin{split}
 \frac{1}{2C_i^{k'}} \sum_{k=1}^K \big( \Psi(C^k_{i(k),j(k)})&\delta_{i(k),i(k')} + \Psi(C^k_{j(k),i(k)}) \delta_{j(k),i(k')})\big)\\
&  \underset{K \rightarrow \infty}{\rightarrow} \int (\Psi P)(X_{i(k')},\theta_{i(k')},\ell_1,x_2,\theta_2,\ell_2)dx_2 \frac{d\theta_2}{\pi} d\ell_1d\ell_2,
\end{split}
 \end{equation*}
 \noindent where 
 \begin{equation*}
P(X_{i(k')},\theta_{i(k')},\ell,x_2,\theta_2,\ell_2)= \frac{g(X_{i(k')},\theta_{i(k')},\ell,x_2,\theta_2,\ell_2)}{\int g(X_{i(k')},\theta_{i(k')},\ell_1,x_2,\theta_2,\ell_2) dx_2 \frac{d\theta_2}{\pi} d\ell_1d\ell_2}  ,
 \end{equation*}
 \noindent is the conditional probability of finding a link conditioned on the fact that one of the fibers of this link has the same location and orientation as $i(k')$. Then, as $N \rightarrow \infty, K \rightarrow \infty $ such that $\frac{K}{N} \rightarrow \xi>0$ , $C_i^{k'}$ is the mean number of links per fiber. The mean number of links in the volume $dX_{i(k')}d\theta_{i(k')}$ is $K\int g(X_{i(k')},\theta_{i(k')},\ell,x_2,\theta_2,\ell_2) dx_2 \frac{d\theta_2}{\pi} d\ell_1d\ell_2$ and the mean number of fibers in $dX_{i(k')}d\theta_{i(k')}$ is $Nf(X_{i(k')},\theta_{i(k')})$. Thus:
 \begin{equation*}
 C_i^{k'} \underset{\underset{\frac{K}{N} \rightarrow \xi>0}{\underset{K \rightarrow \infty}{ N \rightarrow \infty }}}{\rightarrow} \xi \frac{\int g(X_{i(k')},\theta_{i(k')},\ell_1,x_2,\theta_2,\ell_2) dx_2 \frac{d\theta_2}{\pi} d\ell_1d\ell_2}{f(X_{i(k')},\theta_{i(k')})}.
 \end{equation*}
 \noindent So, we get:
 \begin{equation*}
 \begin{split}
\sum_{k=1}^K \big( \Psi(C^k_{i(k),j(k)})&\delta_{i(k),i(k')} +\Psi(C^k_{j(k),i(k)})\delta_{j(k),i(k')} \big) \\ &\underset{\underset{\frac{K}{N} \rightarrow \xi>0}{\underset{K \rightarrow \infty}{ N \rightarrow \infty }}}{\rightarrow}
 \frac{2\xi}{f(X_{i(k')},\theta_{i(k')})} \int (\Psi g)(X_{i(k')},\theta_{i(k')},\ell_1,x_2,\theta_2,\ell_2) dx_2 \frac{d\theta_2}{\pi} d\ell_1d\ell_2.
 \end{split}
 \end{equation*}
 \noindent Inserting these expressions in Eq.~\eqref{dg}, one obtains:
  \begin{equation*}
 \begin{split}
E_1 \underset{\underset{\frac{K}{N} \rightarrow \xi>0}{\underset{K \rightarrow \infty}{ N \rightarrow \infty }}}{\rightarrow}  -\mu  \langle  \hspace{-0.8mm} \langle  g,\nabla_{x_1} \Psi(x_1,\theta_1,\ell_1,x_2,&\theta_2,\ell_2) \cdot \big(\nabla_x U + d\log \tilde{f}\big)(x_1,\theta_1) \rangle  \hspace{-0.8mm} \rangle \\
 - \mu \frac{\xi}{2K} \sum_{k'=1}^K \bigg(\nabla_{x_1} V(&C^k_{i(k'),j(k')}) \cdot  \psi_1(X_{i(k')},\theta_{i(k')}) \\
 &+ \nabla_{x_1} V(C^k_{j(k'),i(k')}) \cdot \psi_1(X_{j(k')},\theta_{j(k')}) \bigg), \\
  \end{split}
 \end{equation*}
 \noindent where,
 \begin{equation}\label{psi1}
\psi_1(x_1,\theta_1) = \frac{1}{f(x_1,\theta_1)}\int \big( g \nabla_{x_1} \Psi \big)(x_1,\theta_1,\ell_1,x_2,\theta_2,\ell_2)  dx_2\frac{d\theta_2}{\pi} d\ell_1d\ell_2.
 \end{equation}
 \noindent Finally, we find:
  \begin{equation*}
 \begin{split}
E_1 \underset{\underset{\frac{K}{N} \rightarrow \xi>0}{\underset{K \rightarrow \infty}{ N \rightarrow \infty }}}{\rightarrow}  &-\mu  \langle  \hspace{-0.8mm} \langle  g,\nabla_{x_1} \Psi(x_1,\theta_1,\ell_1,x_2,\theta_2,\ell_2) \cdot  \big(\nabla_x U + d\log \tilde{f}\big)(x_1,\theta_1) \rangle  \hspace{-0.8mm} \rangle \\
 &- \xi \mu  \langle  \hspace{-0.8mm} \langle  g, \nabla_{x_1} V(x_1,\theta_1,\ell_1,x_2,\theta_2,\ell_2) \cdot \psi_1(x_1,\theta_1)  \rangle  \hspace{-0.8mm} \rangle .
  \end{split}
 \end{equation*}
 \noindent After the same treatment for the four other terms of Eq.~\eqref{dg1} and in the limit $K,N \rightarrow \infty, \frac{K}{N} \rightarrow \xi>0$, one obtains the final equation for $g$ (writting $X$ for $(x_1,\theta_1,\ell_1,x_2,\theta_2,\ell_2)$):
   \begin{equation}\label{simpg}
 \begin{split}
&\frac{d}{dt} \langle  \hspace{-0.8mm} \langle  g(X),\Psi(X)  \rangle  \hspace{-0.8mm} \rangle \\
&= - \mu  \langle  \hspace{-0.8mm} \langle  g(X),\nabla_{x_1} \Psi(X) \cdot  \nabla_{x} U(x_1,\theta_1) \rangle  \hspace{-0.8mm} \rangle  - \mu  \langle  \hspace{-0.8mm} \langle  g,\nabla_{x_2} \Psi(X) \cdot \nabla_{x} U(x_2,\theta_2) \rangle  \hspace{-0.8mm} \rangle \\
&-\lambda  \langle  \hspace{-0.8mm} \langle  g,\partial_{\theta_1} \Psi(X) \partial_{\theta} U(x_1,\theta_1) \rangle  \hspace{-0.8mm} \rangle 
-\lambda  \langle  \hspace{-0.8mm} \langle  g,\partial_{\theta_2} \Psi(X) \partial_{\theta} U(x_2,\theta_2) \rangle  \hspace{-0.8mm} \rangle  \\
&- d \mu  \langle  \hspace{-0.8mm} \langle  g, \nabla_{x_1} \Psi(X) \cdot \nabla_{x} \log f(x_1,\theta_1) \rangle  \hspace{-0.8mm} \rangle 
 - d \mu  \langle  \hspace{-0.8mm} \langle  g, \nabla_{x_2} \Psi(X) \cdot \nabla_{x} \log f(x_2,\theta_2) \rangle  \hspace{-0.8mm} \rangle \\
& - d \lambda  \langle  \hspace{-0.8mm} \langle  g, \partial_{\theta_1} \Psi(X) \partial_{\theta} \log f(x_1,\theta_1) \rangle  \hspace{-0.8mm} \rangle 
 - d\lambda  \langle  \hspace{-0.8mm} \langle  g, \partial_{\theta_2} \Psi(X) \partial_{\theta} \log f(x_2,\theta_2) \rangle  \hspace{-0.8mm} \rangle \\
 & -\mu \xi  \langle  \hspace{-0.8mm} \langle  g,\nabla_{x_1} V(X)  \cdot \psi_1(x_1,\theta_1) \rangle  \hspace{-0.8mm} \rangle   -\mu \xi  \langle  \hspace{-0.8mm} \langle  g,\nabla_{x_1} V(X) \cdot \psi_2(x_1,\theta_1) \rangle  \hspace{-0.8mm} \rangle  \\
& -\lambda \xi  \langle  \hspace{-0.8mm} \langle  g,\big(\partial_{\theta_1} V(X) + \partial_{\theta_1} b(\theta_1,\theta_2)\big) \chi_1(x_1,\theta_1) \rangle  \hspace{-0.8mm} \rangle \\
 &-\lambda \xi  \langle  \hspace{-0.8mm} \langle  g,\big(\partial_{\theta_1} V(X) + \partial_{\theta_1} b(\theta_1,\theta_2)\big) \chi_2(x_1,\theta_1) \rangle  \hspace{-0.8mm} \rangle ,
  \end{split}
 \end{equation}
\noindent where,
\begin{empheq}[left=\empheqlbrace]{align}
\psi_2(x_1,\theta_1) &= \frac{1}{f(x_1,\theta_1)}\int \big(g \nabla_{x_2} \Psi \big)(x_2,\theta_2,\ell_2, x_1,\theta_1,\ell_1)  dx_2\frac{d\theta_2}{\pi} d\ell_1d\ell_2,\label{psi2}\\
\chi_1(x_1,\theta_1) &= \frac{1}{f(x_1,\theta_1)}\int \big( g \partial_{\theta_1} \Psi \big) (x_1,\theta_1,\ell_1,x_2,\theta_2,\ell_2)  dx_2\frac{d\theta_2}{\pi} d\ell_1d\ell_2,\label{chi1}\\
\chi_2(x_1,\theta_1) &=\frac{1}{f(x_1,\theta_1)} \int \big( g\partial_{\theta_2}  \Psi \big) (x_2,\theta_2,\ell_2, x_1,\theta_1,\ell_1)  dx_2\frac{d\theta_2}{\pi} d\ell_1d\ell_2.\label{chi2}
 \end{empheq}
\noindent We introduce the notation $Y_1 = (x_1,\theta_1,\ell_1)$ and $Y_2 = (x_2,\theta_2,\ell_2)$, and prove the following lemma:

\begin{lemma}\label{lemapp}
For any function $h(Y_1,Y_2)$, we have:
\begin{empheq}[left=\empheqlbrace]{align}
 \langle  \hspace{-0.8mm} \langle  g, h(Y_1,Y_2) \psi_1(x_1,\theta_1)  \rangle  \hspace{-0.8mm} \rangle  &= - \langle  \hspace{-0.8mm} \langle  \nabla_{x_1} \big(g(X) \frac{F_h(x_1,\theta_1)}{f(x_1,\theta_1)}\big) , \Psi(X) \rangle  \hspace{-0.8mm} \rangle ,\nonumber\\
 \langle  \hspace{-0.8mm} \langle  g, h(Y_1,Y_2) \psi_2(x_1,\theta_1)  \rangle  \hspace{-0.8mm} \rangle  &= - \langle  \hspace{-0.8mm} \langle  \nabla_{x_2} \big(g(X) \frac{F_h(x_2,\theta_2)}{f(x_2,\theta_2)}\big) , \Psi(X) \rangle  \hspace{-0.8mm} \rangle ,\nonumber\\
 \langle  \hspace{-0.8mm} \langle  g, h(Y_1,Y_2) \chi_1(x_1,\theta_1)  \rangle  \hspace{-0.8mm} \rangle  &= - \langle  \hspace{-0.8mm} \langle  \partial_{\theta_1} \big(g(X) \frac{F_h(x_1,\theta_1)}{f(x_1,\theta_1)}\big) , \Psi(X) \rangle  \hspace{-0.8mm} \rangle ,\label{Eqspsichi}\\
 \langle  \hspace{-0.8mm} \langle  g, h(Y_1,Y_2) \chi_2(x_1,\theta_1)  \rangle  \hspace{-0.8mm} \rangle  &= - \langle  \hspace{-0.8mm} \langle  \partial_{\theta_2} \big(g(X) \frac{F_h(x_2,\theta_2)}{f(x_2,\theta_2)}\big) , \Psi(X) \rangle  \hspace{-0.8mm} \rangle ,\nonumber
\end{empheq}
\noindent where $\psi_1$, $\psi_2$, $\chi_1$ and $\chi_2$ are defined by Eq.~\eqref{psi1} and Eqs. \eqref{psi2}-\eqref{chi2}, and where :
\begin{equation}\label{Fh}
F_h(x_1,\theta_1) = \int (gh)(x_1,\theta_1,\ell_1,x_2,\theta_2,\ell_2) dx_2 \frac{d\theta_2}{\pi} d\ell_2 d\ell_1.
\end{equation}
\end{lemma}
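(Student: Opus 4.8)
The plan is to establish all four identities by the same three-move argument applied to the double-bracket pairing: use Fubini to isolate the factor $F_h/f$, recombine the remaining integral into a single pairing of $g\,F_h/f$ against a derivative of $\Psi$, and then integrate by parts. I would carry out the first identity in full and indicate the minor modifications for the other three. Expanding its left-hand side by the definition of the double bracket gives
\[\langle\!\langle g, h\,\psi_1\rangle\!\rangle = \int g(X)\,h(Y_1,Y_2)\,\psi_1(x_1,\theta_1)\, dx_1\, dx_2\, \frac{d\theta_1}{\pi}\,\frac{d\theta_2}{\pi}\, d\ell_1\, d\ell_2,\]
where $X=(x_1,\theta_1,\ell_1,x_2,\theta_2,\ell_2)$.

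The key observation is that $\psi_1$, by its definition~\eqref{psi1}, depends only on $(x_1,\theta_1)$. I would therefore integrate first over the four variables $(\ell_1,x_2,\theta_2,\ell_2)$ at fixed $(x_1,\theta_1)$; by the definition~\eqref{Fh} of $F_h$, the inner integral of $gh$ against these variables is precisely $F_h(x_1,\theta_1)$, so the pairing collapses to $\int F_h(x_1,\theta_1)\,\psi_1(x_1,\theta_1)\,dx_1\,\tfrac{d\theta_1}{\pi}$. Re-inserting the definition of $\psi_1$ reintroduces the factor $g\,\nabla_{x_1}\Psi$ integrated over the same four variables; since $F_h/f$ depends only on $(x_1,\theta_1)$ it may be carried back under the full six-variable integral, yielding $\langle\!\langle g\,F_h/f,\ \nabla_{x_1}\Psi\rangle\!\rangle$. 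A single integration by parts in $x_1$ then transfers the gradient onto $g\,F_h/f$, the boundary term vanishing by the assumed spatial decay of $g$, and produces exactly $-\langle\!\langle \nabla_{x_1}(g\,F_h/f),\ \Psi\rangle\!\rangle$.

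The $\chi_1$ identity is identical, with $\partial_{\theta_1}$ in place of $\nabla_{x_1}$ and the integration by parts performed in $\theta_1$, the boundary term now vanishing by $\pi$-periodicity. The $\psi_2$ and $\chi_2$ identities carry one extra bookkeeping step: in their definitions~\eqref{psi2} and~\eqref{chi2} the arguments of $g$ and $\Psi$ appear in the swapped order $(Y_2,Y_1)$, so the symbol $\nabla_{x_2}$ (resp.\ $\partial_{\theta_2}$) acting on $\Psi$ is in reality a derivative with respect to the variable $x_1$ (resp.\ $\theta_1$) occupying the second fiber slot. After the same collapse to $F_h$ and an integration by parts in that variable, I would relabel the dummy variables $1\leftrightarrow 2$; because the measure $dx_1\,dx_2\,\tfrac{d\theta_1}{\pi}\,\tfrac{d\theta_2}{\pi}\,d\ell_1\,d\ell_2$ is symmetric under this exchange, the relabeling is free and restores $\Psi$ to standard argument order, producing $F_h(x_2,\theta_2)/f(x_2,\theta_2)$ together with the derivative $\nabla_{x_2}$ (resp.\ $\partial_{\theta_2}$) as claimed.

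The only genuine difficulty is notational rather than analytic: one must keep careful track of the fact that, owing to the swapped arguments in~\eqref{psi2} and~\eqref{chi2}, the operators written $\nabla_{x_2}$ and $\partial_{\theta_2}$ actually differentiate a first-slot variable, and one must verify that the concluding relabeling is legitimate. The analytic content—interchanging the order of integration and discarding the boundary terms—is routine under the decay and periodicity assumptions already invoked throughout the derivation.
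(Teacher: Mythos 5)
Your proposal is correct and follows essentially the same route as the paper's proof in Appendix A: expand the double bracket, use Fubini to collapse the $gh$ integral over $(\ell_1,x_2,\theta_2,\ell_2)$ into $F_h(x_1,\theta_1)$, carry $F_h/f$ under the pairing of $g$ against the derivative of $\Psi$, and integrate by parts; the paper handles the swapped-argument cases $\psi_2,\chi_2$ exactly as you do, via a relabeling of the (symmetric) integration variables. The only cosmetic difference is that the paper introduces fresh dummy variables for the inner integral defining $\psi_1$ before applying Fubini, which you describe informally but equivalently.
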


\begin{proof}
Note that for any function $h(Y_1,Y_2)$, we have:
\begin{equation*}
\begin{split}
 \langle  \hspace{-0.8mm} \langle  g, h(Y_1,Y_2) \psi_1(x_1,\theta_1)  \rangle  \hspace{-0.8mm} \rangle  \hspace{5cm}&\\
= \int \big( gh \big)(Y_1,Y_2) \bigg(\frac{1}{f(x_1,\theta_1)} \int (g \nabla_{x_1} \Psi)(x_1,\theta_1,\ell_4,x_3,\theta_3,&\ell_3) dx_3 \frac{d\theta_3}{\pi} d\ell_4 d\ell_3 \bigg) \\
&dx_1 \frac{d\theta_1}{\pi} d\ell_1 dx_2 \frac{d\theta_2}{\pi} d\ell_2\\
 = \int \bigg( \frac{1}{f(x_1,\theta_1)}\int (gh)(Y_1,Y_2) dx_2 \frac{d\theta_2}{\pi} d\ell_2 d\ell_1 \bigg) \big(g \nabla_{x_1} \Psi \big)(&x_1,\theta_1,\ell_4,x_3,\theta_3,\ell_3) \\
 &dx_1 \frac{d\theta_1}{\pi} d\ell_4 dx_3 \frac{d\theta_3}{\pi} d\ell_3\\
= - \int \nabla_{x_1} \bigg( g(x_1,\theta_1,\ell_4,x_3,\theta_3,\ell_3) \frac{F_h(x_1,\theta_1)}{f(x_1,\theta_1)}\bigg) \Psi(x_1,\theta_1,\ell_4,&x_3,\theta_3,\ell_3)\\
& dx_1 \frac{d\theta_1}{\pi} d\ell_4 dx_3 \frac{d\theta_3}{\pi} d\ell_3\\
 = - \langle  \hspace{-0.8mm} \langle  \nabla_{x_1} \big(g(X) \frac{F_h(x_1,\theta_1)}{f(x_1,\theta_1)}\big) , \Psi(X) \rangle  \hspace{-0.8mm} \rangle \hspace{4cm}&,
\end{split}
\end{equation*}
\noindent with $F_h$ defined by~\eqref{Fh}. Similarly, we have:
\begin{equation*}
\begin{split}
 \langle  \hspace{-0.8mm} \langle  g, h(Y_1,Y_2) \psi_2(x_1,\theta_1)  \rangle  \hspace{-0.8mm} \rangle \hspace{6cm} & \\
= \int \big( gh \big)(Y_1,Y_2) \bigg(\frac{1}{f(x_1,\theta_1)} \int (g \nabla_{x_2} \Psi)(x_3,\theta_3,\ell_3,x_1,\theta_1,\ell_4)& dx_3 \frac{d\theta_3}{\pi} d\ell_4 d\ell_3 \bigg) \\
&dx_1 \frac{d\theta_1}{\pi} d\ell_1 dx_2 \frac{d\theta_2}{\pi} d\ell_2\\
 = \int \bigg( \frac{1}{f(x_1,\theta_1)}\int (gh)(Y_1,Y_2) dx_2 \frac{d\theta_2}{\pi} d\ell_2 d\ell_1 \bigg) \big(g \nabla_{x_2} \Psi \big)(x_3,&\theta_3,\ell_3,x_1,\theta_1,\ell_4)\\
 & dx_1 \frac{d\theta_1}{\pi} d\ell_4 dx_3 \frac{d\theta_3}{\pi} d\ell_3\\
 = \int \bigg( \big(g \nabla_{x_2} \Psi \big)(Y'_1,Y'_2) \frac{1}{f(x'_2,\theta'_2)}\int (gh)(x'_2,\theta'_2,\ell'_4,x'_3,\theta'_3,\ell'_3)& dx'_3 \frac{d\theta'_3}{\pi} d\ell'_3 d\ell'_4 \bigg)\\
 &  dx'_1 \frac{d\theta'_1}{\pi} d\ell'_1 dx'_2 \frac{d\theta'_2}{\pi} d\ell'_2\\
=- \int \nabla_{x'_2} \bigg( g(Y'_1,Y'_2) \frac{F_h(x'_2,\theta'_2)}{f(x'_2,\theta'_2)}\bigg) \Psi(Y'_1,Y'_2) dx'_1 \frac{d\theta'_1}{\pi} d\ell'_1 dx'_2 &\frac{d\theta'_2}{\pi} d\ell'_2\\
 = - \langle  \hspace{-0.8mm} \langle  \nabla_{x_2}  \big(g(X) \frac{F_h(x_2,\theta_2)}{f(x_2,\theta_2)}\big) , \Psi(X) \rangle  \hspace{-0.8mm} \rangle \hspace{3.5cm}&,
\end{split}
\end{equation*}
\noindent After the same computations for $\chi_1$ and $\chi_2$, we obtain Eqs.~\eqref{Eqspsichi}.
\end{proof}

Now, lemma \ref{lemapp} allows us to write the formal limit $K,N \rightarrow \infty, \frac{K}{N} \rightarrow \xi$ of Eq.~\eqref{simpg} which reads:
\begin{equation}\label{systg}
\begin{split}
\frac{dg}{dt} - \mu \nabla_{x_1} \cdot \big(g\nabla_{x} U(x_1,\theta_1) + & \xi \frac{g}{f(x_1,\theta_1)}F_1(x_1,\theta_1)\big) \\
- \lambda \partial_{\theta_1} (g\partial_{\theta} U(x_1,\theta_1)+ \xi &\frac{g}{f(x_1,\theta_1)}F_{2}(x_1,\theta_1)) \\
-\mu \nabla_{x_2} \cdot (g\nabla_{x} U(x_2,\theta_2)& + \xi \frac{g}{f(x_2,\theta_2)}F_1(x_2,\theta_2))\\
- \lambda \partial_{\theta_2}  (g\partial_{\theta} U(x_2,\theta_2)+ &\xi  \frac{g}{f(x_2,\theta_2)}F_{2}(x_2,\theta_2) )\\
-d\mu \nabla_{x_1}\cdot (\frac{g}{f(x_1,\theta_1)}&\nabla_{x} f(x_1,\theta_1) ) - d \mu \nabla_{x_2}\cdot (\frac{g}{f(x_2,\theta_2)} \nabla_{x} f(x_2,\theta_2) ) \\
- d\lambda \partial_{\theta_1}(\frac{g}{f(x_1,\theta_1)}& \partial_{\theta} f(x_1,\theta_1) ) - d \lambda \partial_{\theta_2}(\frac{g}{f(x_2,\theta_2)}\partial_{\theta} f(x_2,\theta_2) )=0,
\end{split}
\end{equation}
\noindent where $F_1$ and $F_2$ read:
\begin{empheq}[left=\empheqlbrace]{align*}
F_1(x_1,\theta_1) &= \int\limits \nabla_{x_1} V(x_1,\theta_1,\ell_1,x_2,\theta_2,\ell_2) g(x_1,\theta_1,\ell_1,x_2,\theta_2,\ell_2) dx_2 \frac{d\theta_2}{\pi}d\ell_1d\ell_2,\\
F_2(x_1,\theta_1)& = \int\limits  \bigg(g\big(\partial_{\theta_1} V +  \partial_{\theta_1} b\big) \bigg) (x_1,\theta_1,\ell_1,x_2,\theta_2,\ell_2) dx_2 \frac{d\theta_2}{\pi}d\ell_1d\ell_2.
  \end{empheq}

Finally, the link creation/deletion Poisson processes, of frequencies $\nu_f$ and $\nu_d$ respectively, classically lead to a source term $S(g)$ for Eq.~\eqref{systg}. We recall that a link between two fibers is formed only if the fibers intersect each other, whereas the link deletion process obviously acts on existing links only. This leads to the following source term:
\begin{equation*}
\begin{split}
S(g)(x_1,\theta_1,\ell_1,x_2,\theta_2,\ell_2) = &\nu_f f(x_1,\theta_1)f(x_2,\theta_2) \delta(\ell_1,\bar{\ell}(x_1,\theta_1,x_2,\theta_2)\delta(\ell_2,\bar{\ell}(x_2,\theta_2,x_1,\theta_1) \\
&- \nu_d g(x_1,\theta_1,\ell_1,x_2,\theta_2,\ell_2),
\end{split}
\end{equation*} 
\noindent where the first term corresponds to the link creation process while the second one, to the link deletion process. Here, the quantity $f(x_1,\theta_1)f(x_2,\theta_2) \delta(\ell_1,\bar{\ell}(x_1,\theta_1,x_2,\theta_2) $ $ \delta(\ell_2,\bar{\ell}(x_2,\theta_2,x_1,\theta_1) dx_1 \frac{d\theta_1}{\pi}dx_2 \frac{d\theta_2}{\pi}d\ell_1d\ell_2$  gives the probability of finding a fiber located within a volume $dx_1\frac{d\theta_1}{\pi}$ about $(x_1,\theta_1)$ and a fiber located within a volume $dx_2\frac{d\theta_2}{\pi}$ about $(x_2,\theta_2)$, such that they intersect with associated lengths within a volume $d\ell_1 d\ell_2$ about $(\ell_1,\ell_2)$. The link creation process generates a new link distribution function proportional to this probability at a rate $\nu_f$. The quantity $-\nu_d g(x_1,\theta_1,\ell_1,x_2,\theta_2,\ell_2)$ corresponds to the decay of the link distribution function with rate $\nu_d$ due to the link deletion process.

%-----------------APPENDIX II---------------------------
%---------------proof macro-----------------------------

\section{Computation of the non linear term $\int\limits 
\partial_\theta (G[\rho M] \rho M)\Psi d\theta$}\label{app3}
This section is devoted to the computation of the term $X_3$ given by~\eqref{X3}. For the sake of clarity, the following notations are introduced:
 \begin{equation}\label{notations}
 M = M_{\theta_0}, \;   s_0=\sin 2(\theta - \theta_0), \;  c_0=\cos 2(\theta - \theta_0).
  \end{equation}
  \noindent By symmetry,  $\langle h(2(\theta - \theta_0)) \rangle  = 0$ for all odd functions $h$ on $[-\frac{\pi}{2},\frac{\pi}{2}]$, where $\langle \cdot   \rangle $ is the average defined in Theorem \ref{thm4}. We also note from Eq.~\eqref{coeffC1C2}, Hypothesis~\ref{hyp4} and Proposition~\ref{prop4} that we have: \begin{equation}\label{C2}
  C_2 = \frac{\alpha L^4 \gamma}{48 \eta_f} = \frac{4rd L^2}{\xi 48 \rho c(r)} = \frac{rdL^2}{12 \xi\rho c(r)}.
  \end{equation}
  \noindent Using Green's formula, Eqs.~\eqref{forceG},~\eqref{partialPsi} and the same arguments as for $X_2$, we get: 
\begin{equation*}
\begin{split}
 X_3 &= - \int\limits_{-\frac{\pi}{2}}^{\frac{\pi}{2}} (G[\rho M] \rho M) \partial_{\theta} \Psi \frac{d\theta}{\pi} \\
& = -C_2\int\limits_{-\frac{\pi}{2}}^{\frac{\pi}{2}} \bigg(\int\limits_{-\frac{\pi}{2}}^{\frac{\pi}{2}} \nabla_x^2 (\rho M(\theta')): B(\theta,\theta') \frac{d\theta'}{\pi} \bigg)\rho M(\theta)\partial_{\theta} \Psi \frac{d\theta}{\pi} \\
&= -\rho C_2 \int\limits_{-\frac{\pi}{2}}^{\frac{\pi}{2}} \nabla_x^2 (\rho M(\theta')): \bigg(\int\limits_{-\frac{\pi}{2}}^{\frac{\pi}{2}} B(\theta,\theta') M(\theta) \partial_{\theta} \Psi \frac{d\theta}{\pi} \bigg) \frac{d\theta'}{\pi}\\
&= -\frac{\rho C_2}{2r} \int\limits_{-\frac{\pi}{2}}^{\frac{\pi}{2}} \bigg(\nabla_x^2 (\rho M(\theta')): \int\limits_{-\frac{\pi}{2}}^{\frac{\pi}{2}} B(\theta,\theta') (M(\theta) - \frac{1}{Z^2}) \frac{d\theta}{\pi} \bigg) \frac{d\theta'}{\pi}.
\end{split}
\end{equation*}
\noindent Let us first compute $\nabla_x^2 (\rho M)$. We have:
\begin{equation*}
\nabla_x^2 (\rho M) = M \nabla_x^2 \rho+ \nabla_x M \otimes \nabla_x \rho + \nabla_x \rho \otimes \nabla_x M + \rho \nabla_x^2 M,
\end{equation*}
\noindent where $\nabla_x M$ is given by~\eqref{nablaxM}. A direct computation gives:
\begin{equation*}
\begin{split}
\nabla_x^2 M = 2r M \bigg[ s_0 \nabla_x^2 \theta_0+ 2\big(r s_0^2 - c_0\big) \nabla_x\theta_0 \otimes \nabla_x\theta_0 \bigg],
\end{split}
\end{equation*}
\noindent and thus:
\begin{equation*}
\begin{split}
\nabla_x^2 (\rho M) = &M \bigg[  \nabla_x^2  \rho + 2\rho r s_0 \nabla_x^2 \theta_0\\
&+  2rs_0(\nabla_x\theta_0 \otimes \nabla_x \rho+\nabla_x \rho \otimes \nabla_x \theta_0) + 4\rho r(rs_0^2 - c_0)\nabla_x \theta_0 \otimes \nabla_x\theta_0 \bigg].
\end{split}
\end{equation*}
 \noindent We now turn towards the computation of  \begin{equation*}
 \int\limits_{-\frac{\pi}{2}}^{\frac{\pi}{2}} \big( B(\theta,\theta') M(\theta) - \frac{1}{Z^2} B(\theta,\theta') \big)\frac{d\theta}{\pi},
 \end{equation*}
 \noindent where $B(\theta,\theta')$ is given by~\eqref{Bteta}.
For this purpose, we decompose: 
 \begin{equation*}
 \omega = (\omega.\omega_0) \omega_0 + (\omega.\omega_0^\perp) \omega_0^\perp = \cos(\theta - \theta_0) \omega_0 + \sin(\theta - \theta_0) \omega_0^\perp,
 \end{equation*}
 \noindent where $\omega_0 = \omega(\theta_0)$ and $\omega_0^\perp$ such that $(\omega_0,\omega_0^\perp)$ is a direct ortho-normal basis of $\mathbb{R}^2$. Using basic trigonometric formulae, one notes that:
\begin{equation*}
\begin{split}
\omega \otimes \omega & = \frac{1}{2}\bigg[ (1 + c_0) (\omega_0 \otimes\omega_0) + (1-c_0)\omega_0^\perp \otimes\omega_0^\perp + s_0 [\omega_0\otimes\omega_0^\perp + \omega_0^\perp \otimes\omega_0]\bigg]\\
& = \frac{1}{2}\bigg[ I + c_0 [\omega_0 \otimes\omega_0 - \omega_0^\perp \otimes\omega_0^\perp] + s_0 [\omega_0 \otimes\omega_0^\perp + \omega_0^\perp \otimes\omega_0]\bigg],\\
\end{split}
\end{equation*}
\noindent where $I$ is the identity matrix. Denoting $c_0 = c_0(\theta)$, $s_0 = s_0(\theta)$, $c_0' = c_0(\theta')$ and $s_0' = s_0(\theta')$, we get:
\begin{equation*}
\begin{split}
B(\theta,\theta') &= \sin 2(\theta-\theta') [\omega \otimes \omega + \omega' \otimes \omega'] \\
&= \frac{1}{2}[ s_0 c_0' - s_0' c_0] \bigg[ 2I + (c_0+c_0') [\omega_0 \otimes\omega_0 - \omega_0^\perp \otimes\omega_0^\perp] \\
&+ (s_0+ s_0') [\omega_0 \otimes\omega_0^\perp + \omega_0^\perp \otimes\omega_0]\bigg] \\
&=[s_0 c_0' - s_0' c_0]I\\
&+ \frac{1}{2}[ c_0s_0 c_0' + s_0 c_0^2(\theta') -s_0' c_0^2(\theta) - s_0' c_0'c_0] [\omega_0 \otimes\omega_0 - \omega_0^\perp \otimes\omega_0^\perp]\\
&+\frac{1}{2}[ s_0^2 c_0' + s_0 s_0'c_0' -s_0' c_0s_0 - s_0'^2 c_0][\omega_0 \otimes\omega_0^\perp + \omega_0^\perp \otimes\omega_0]. \\
\end{split}
\end{equation*}
\noindent Note that $B$ is anti-symmetric, i.e.\ $B(\theta',\theta) = - B(\theta,\theta')$. From the properties of $M$, we get: 
\begin{empheq}[left=\empheqlbrace]{align*}
\int\limits_{-\frac{\pi}{2}}^{\frac{\pi}{2}} B(\theta,\theta') \frac{d\theta}{\pi} = -[&\omega_0 \otimes\omega_0 - \omega_0^\perp \otimes\omega_0^\perp] \frac{s_0'}{4} + [\omega_0 \otimes\omega_0^\perp + \omega_0^\perp \otimes\omega_0] \frac{c_0'}{4},\\
\int\limits_{-\frac{\pi}{2}}^{\frac{\pi}{2}} M(\theta) B(\theta,\theta') \frac{d\theta}{\pi}&=- s_0' \langle c_0 \rangle  I \\
&- \frac{1}{2}[\omega_0 \otimes\omega_0 - \omega_0^\perp \otimes\omega_0^\perp] (c_0' s_0' \langle c_0 \rangle  + s_0' \langle c_0^2 \rangle  )\\
& + \frac{1}{2}[\omega_0 \otimes\omega_0^\perp + \omega_0^\perp \otimes\omega_0](c_0' \langle s_0^2 \rangle  - s_0'^2 \langle c_0 \rangle ). 
\end{empheq}
\noindent Then, we have:
\begin{equation*}
\begin{split}
\int\limits_{-\frac{\pi}{2}}^{\frac{\pi}{2}} B(\theta,\theta') \big(M(\theta)  - \frac{1}{Z^2} \big) \frac{d\theta}{\pi} &= -s_0' \langle c_0 \rangle  I \\
&+ [\omega_0 \otimes\omega_0 - \omega_0^\perp \otimes\omega_0^\perp]T_1+[\omega_0 \otimes\omega_0^\perp + \omega_0^\perp \otimes\omega_0]T_2,
\end{split}
\end{equation*}
\noindent with
$$
T_1= \frac{s_0'}{4Z^2} - \frac{c_0' s_0' \langle c_0 \rangle  + s_0' \langle c_0^2 \rangle }{2}, \quad
T_2 =  \frac{c_0' \langle s_0^2 \rangle  - s_0'^2 \langle c_0 \rangle }{2} - \frac{c_0'}{4Z^2}.
$$
\noindent Note that this expression is decomposed into an even function $T_2$ of $\theta'$ and an odd function of $\theta'$ composed of $s_0'\langle c_0 \rangle $ and $T_1$. Therefore, $\langle h, T_1 \rangle  = 0$ for all even functions $h$ and $\langle h, T_2 \rangle =0$ for all odd functions $h$. Moreover, from integration by parts, the following relations hold: 
\begin{empheq}[left=\empheqlbrace]{align}
\langle s_0^2 \rangle  &= \frac{\langle c_0 \rangle }{r},\nonumber\\
\langle c_0^2 \rangle  &= 1 - \frac{\langle c_0 \rangle }{r},\nonumber\\
\langle c_0^3 \rangle  &= \langle c_0 \rangle  - \frac{1}{r} + 2\frac{\langle c_0 \rangle }{r^2},\label{form}\\
\langle c_0^4 \rangle  &= 1 - 2 \frac{\langle c_0 \rangle }{r} + \frac{3}{r^2} - 6 \frac{\langle c_0 \rangle }{r^3},\nonumber\\
\langle c_0s_0^2 \rangle  &= \frac{1}{r}(1 - 2\frac{\langle c_0 \rangle }{r}),\nonumber\\
\langle s_0^4 \rangle  &= \frac{3}{r^2}(1 - 2\frac{\langle c_0 \rangle }{r}).\nonumber
\end{empheq}
\noindent Then,
\begin{equation}\label{F1ent}
\begin{split}
\int\limits_{-\frac{\pi}{2}}^{\frac{\pi}{2}} &\nabla_x^2 (\rho M(\theta')) : \bigg(\int\limits_{-\frac{\pi}{2}}^{\frac{\pi}{2}} (B(\theta,\theta')(M(\theta) - \frac{1}{Z^2}) )\frac{d\theta}{\pi}\bigg)\frac{d\theta'}{\pi} \\
&=  \nabla_x^2 \rho : [\omega_0 \otimes\omega_0^\perp + \omega_0^\perp \otimes\omega_0] \langle T_2 \rangle \\
&+ 4\rho r\nabla_x\theta_0 \otimes \nabla_x \theta_0 : [\omega_0 \otimes\omega_0^\perp + \omega_0^\perp \otimes\omega_0] \bigg(r \langle s_0^2 T_2 \rangle - \langle c_0 T_2 \rangle \bigg)\\
&+2r (\nabla_x \rho \otimes \nabla_x \theta_0 + \nabla_x\theta_0 \otimes \nabla_x \rho):\bigg[-\langle c_0 \rangle \langle s_0^2 \rangle I  + [\omega_0 \otimes\omega_0 - \omega_0^\perp \otimes\omega_0^\perp]\langle s_0 T_1 \rangle  \bigg]\\
&+2\rho r \nabla_x(\nabla_x \theta_0):\bigg[- \langle s_0^2 \rangle \langle c_0 \rangle  I + [\omega_0 \otimes\omega_0 - \omega_0^\perp \otimes\omega_0^\perp] \langle s_0 T_1 \rangle    \bigg],
\end{split}
\end{equation}
\noindent where (using Eqs.~\eqref{form} and integration by parts):
\begin{empheq}[left=\empheqlbrace]{align*}
\langle T_2 \rangle  &= -\frac{\langle c_0 \rangle }{4Z^2}, \\
\langle c_0 T_2 \rangle & =  \frac{\langle c_0 \rangle ^2}{2r^2} - \frac{1}{4Z^2} (1-\frac{\langle c_0 \rangle }{r}),\\
\langle s_0^2 T_2 \rangle  &= (-\frac{1}{r} + 2\frac{\langle c_0 \rangle }{r^2})\frac{1}{4Z^2} + 2\frac{\langle c_0 \rangle ^2}{r^3} - \frac{\langle c_0 \rangle }{r^2},\\
\langle s_0 T_1 \rangle  &= -\frac{\langle c_0 \rangle }{r} + \frac{3\langle c_0 \rangle ^2}{2r^2} + \frac{\langle c_0 \rangle }{4rZ^2},\\
r \langle s_0^2 T_2 \rangle  - \langle c_0 T_2 \rangle  &= \frac{\langle c_0 \rangle }{r} \big[\frac{1}{Z^2} - 1 + \frac{3 \langle c_0 \rangle }{2r}\big].
\end{empheq}
\noindent Then, after some computations and using Eq.~\eqref{C2}, Eq.~\eqref{F1ent} simplifies into:
\begin{equation}\label{F1ents}
\begin{split}
X_3&=- \frac{dL^2}{24 \xi c(r)} \int\limits_{-\frac{\pi}{2}}^{\frac{\pi}{2}} \nabla_x^2 (\rho M(\theta')) : \bigg(\int\limits_{-\frac{\pi}{2}}^{\frac{\pi}{2}} (B(\theta,\theta')(M(\theta) - \frac{1}{Z^2}) )\frac{d\theta}{\pi}\bigg)\frac{d\theta'}{\pi} \\
&=  -\frac{dL^2}{24 \xi c(r)} \bigg( -\nabla_x^2 \rho : [\omega_0 \otimes\omega_0^\perp + \omega_0^\perp \otimes\omega_0] \frac{\langle c_0 \rangle }{4Z^2}\\
&+ 4\rho \langle c_0 \rangle \nabla_x\theta_0 \otimes \nabla_x \theta_0 : [\omega_0 \otimes\omega_0^\perp + \omega_0^\perp \otimes\omega_0] (\frac{1}{4Z^2} - 1 + \frac{3\langle c_0 \rangle }{2r} )\\
&+2\langle c_0 \rangle  ( \rho \nabla_x \nabla_x \theta_0 + \nabla_x \rho \otimes \nabla_x \theta_0 + \nabla_x\theta_0 \otimes \nabla_x \rho):\bigg[-\langle c_0 \rangle I  \\
&+ [\omega_0 \otimes\omega_0 - \omega_0^\perp \otimes\omega_0^\perp] (\frac{1}{4Z^2} -1 + \frac{3\langle c_0 \rangle }{2r}  )\bigg)\bigg].
\end{split}
\end{equation}
\noindent We note that $\langle c_0 \rangle  = c(r)$. Eq.~\eqref{F1ents} leads to~\eqref{valueX3}.

\section*{Acknowledgements}
 This work was supported by the ``R\'egion Midi Pyr\'en\'ees'', under grant APRTCN 2013. PD acknowledges support from the British ``Engineering and Physical Research Council'' under grant ref: EP/M006883/1, from the Royal Society and the Wolfson foundation through a Royal Society Wolfson Research Merit Award and from NSF by NSF Grant RNMS11-07444 (KI-Net). PD is on leave from CNRS, Institut de Math\'ematiques de Toulouse, France. DP gratefully acknowledges the hospitality of Imperial College London, where part of this research was conducted.

\bibliographystyle{ieeetr}

\end{document}